\tikzstyle{every picture}=[
\newcommand{\cqfd}{\ }
\newtheorem{definition}{Definition}
\newtheorem{lemma}{Lemma}
\newtheorem{example}{Example}
\newtheorem{proposition}{Proposition}
\newtheorem{theorem}{Theorem}
\newtheorem{corollary}{Corollary}
\begin{document} 

  \title{Two-Sided Derivatives for Regular Expressions and for Hairpin Expressions}
  
  \author{
    Jean-Marc Champarnaud 
    \and Jean-Philippe Dubernard 
    \and Hadrien Jeanne 
    \and Ludovic Mignot
  } 
  
  \maketitle%
  
   \begin{abstract}
The aim of this paper is to design the polynomial construction of a finite recognizer for hairpin completions of regular languages.
This is achieved by considering completions as new expression operators
and by applying derivation techniques to the associated extended expressions called hairpin expressions.
More precisely, we extend partial derivation of regular expressions to two-sided partial derivation of hairpin expressions
and we show how to deduce a recognizer for a hairpin expression from its two-sided derived term automaton,
providing an alternative proof of the fact that hairpin completions of regular languages are linear context-free.  

  \end{abstract}

\section{Introduction}\label{se:int}

The aim of this paper is to design the polynomial construction of a finite recognizer for hairpin completions of regular languages.
Given an integer $k>0$ and an involution $\mathrm{H}$ over an alphabet $\Gamma$,
the hairpin $k$-completion of two languages $L_1$ and $L_2$ over $\Gamma$ is the language
$\mathrm{H}_k(L_1,L_2)=\{\alpha\beta\gamma\mathrm{H}(\beta)\mathrm{H}(\alpha)\mid \alpha,\beta,\gamma\in\Gamma^*\wedge (\alpha\beta\gamma\mathrm{H}(\beta)\in L_1 \vee \beta\gamma\mathrm{H}(\beta)\mathrm{H}(\alpha)\in L_2) \wedge |\beta|=k\}$ (see Figure~\ref{HP}).
Hairpin completion has been deeply studied~\cite{BLMM06,CM01,CMM06,DKM09,DKM12,ILMM11,KKS12,KSK11,Kop11,MMM09,MMM10,MM07,MMY09,MMY10,MMM06}.
The hairpin completion of formal languages has been introduced in~\cite{CMM06} by reason of its application to biochemistry.
It aroused numerous studies that investigate theoretical and algorithmic properties of hairpin completions or related operations (see for example~\cite{KSK11,MMM09,MMY09}).
One of the most recent result concerns the problem of deciding regularity of hairpin completions of regular languages;
it can be found in~\cite{DKM12} as well as a complete bibliography about hairpin completion. 

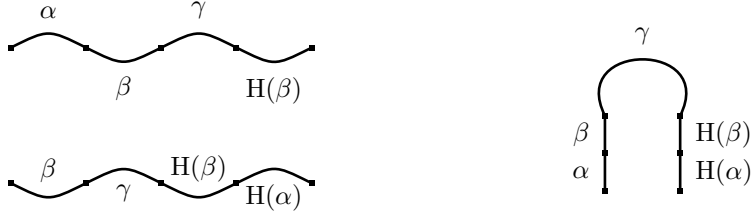
\begin{figure}[H]
\label{HP}
\begin{minipage}{0.45\linewidth}
  \centerline{
    \begin{tikzpicture}[font=\normalsize]
      \node [fill=black,inner sep=1pt]  at (0,0) {};
      \node [fill=black,inner sep=1pt]  at (1,0) {};
      \node [fill=black,inner sep=1pt]  at (2,0) {};
      \node [fill=black,inner sep=1pt]  at (3,0) {};
      \node [fill=black,inner sep=1pt]  at (4,0) {};
      \draw[line width=1pt] (0,0) .. controls (0.5,0.25) ..  node[above=2pt] {$\alpha$} (1,0) .. controls (1.5,-0.25) .. node[below=2pt] {$\beta$} (2,0)  .. controls (2.5,0.25) .. node[above=2pt] {$\gamma$} (3,0) .. controls (3.5,-0.25) .. node[below=2pt] {$\mathrm{H}(\beta)$}(4,0);
    \end{tikzpicture}
  }
  
  \vspace{\baselineskip}
  
  \centerline{
    \begin{tikzpicture}[font=\normalsize]
      \node [fill=black,inner sep=1pt]  at (0,0) {};
      \node [fill=black,inner sep=1pt]  at (1,0) {};
      \node [fill=black,inner sep=1pt]  at (2,0) {};
      \node [fill=black,inner sep=1pt]  at (3,0) {};
      \node [fill=black,inner sep=1pt]  at (4,0) {};
      \draw[line width=1pt] (0,0) .. controls (0.5,-0.25) ..  node[above=2pt] {$\beta$} (1,0) .. controls (1.5,0.25) .. node[below=2pt] {$\gamma$} (2,0)  .. controls (2.5,-0.25) .. node[above=2pt] {$\mathrm{H}(\beta)$} (3,0) .. controls (3.5,0.25) .. node[below=2pt] {$\mathrm{H}(\alpha)$}(4,0);
    \end{tikzpicture}
  }
\end{minipage}
\hfill
\begin{minipage}{0.45\linewidth}
  \centerline{
    \begin{tikzpicture}[font=\normalsize]
      \node [fill=black,inner sep=1pt]  at (-1,-1) {};
      \node [fill=black,inner sep=1pt]  at (-1,-0.5) {};
      \node [fill=black,inner sep=1pt]  at (-1,0) {};
      \node [fill=black,inner sep=1pt]  at (0,0) {};
      \node [fill=black,inner sep=1pt]  at (0,-0.5) {};
      \node [fill=black,inner sep=1pt]  at (0,-1) {};
      \draw[line width=1pt] (-1,-1)-- node[left =2pt] {$\alpha$} (-1,-0.5)--  node[left =2pt] {$\beta$}(-1,0) .. controls (-1.5,1) and (0.5,1) .. node[above =2pt] {$\gamma$} (0,0)-- node[right =2pt] {$\mathrm{H}(\beta)$} (0,-0.5)-- node[right =2pt] {$\mathrm{H}(\alpha)$} (0,-1);
    \end{tikzpicture}
  }
\end{minipage}
\caption{The Hairpin Completion.}
\end{figure}

Hairpin completions of regular languages are proved to be linear context-free from~\cite{CMM06}.
An alternative proof is presented in this paper, with a somehow more constructive approach, since it provides a recognizer for the hairpin completion. 
This is achieved by considering completions as new expression operators
and by applying derivation techniques to the associated extended expressions, that we call hairpin expressions.
Notice that a similar derivation-based approach has been used to study approximate regular expressions~\cite{CJM12}, through the definition of new distance operators.

Two-sided derivation is shown to be particularly suitable for the study of hairpin expressions.
More precisely, we extend partial derivation of regular expressions~\cite{Ant96} to two-sided partial derivation of regular expressions first
and then of hairpin expressions.
We prove that the set of two-sided derived terms of a hairpin expression $E$ over an alphabet $\Gamma$ is finite.
Hence the two-sided derived term automaton $A$ is a finite one.
Furthermore the automaton $A$ is over the alphabet $(\Gamma\cup\{\varepsilon\})^2$ and, as we prove it, the language over $\Gamma$ of such an automaton is linear context-free and not necessarily regular.
Finally we show that the language of the hairpin expression $E$ and the language over $\Gamma$ of the automaton $A$ are equal.

This paper is an extended version of the conference paper~\cite{CDJM13tmp}.
It is organized as follows. Next section gathers useful definitions and properties concerning automata and regular expressions. The notion of two-sided residual of a language is introduced in Section~\ref{sec:2sided}, as well as the related notion of $\Gamma$-couple automaton.
In Section~\ref{sec:hairpinComp}, hairpin completions of regular languages and their two-sided residuals are investigated.
The two-sided partial derivation of hairpin expressions is considered in Section~\ref{2sideddder}, leading to the construction of a finite recognizer.
A specific case is examined in Section~\ref{sec:H0}.

\section{Preliminaries}

  An \emph{alphabet} is a finite set of distinct symbols. Given an alphabet $\Sigma$, we denote by $\Sigma^*$ the set of all the words over $\Sigma$. The empty word is denoted by $\varepsilon$. A \emph{language} over $\Sigma$ is a subset of $\Sigma^*$.  
  The three operations $\cup$, $\cdot$ and $^*$ are defined for any two languages $L_1$ and $L_2$ over $\Sigma$ by: $L_1\cup L_2=\{w\in\Sigma^*\mid w\in L_1\ \vee\ w\in L_2\}$, $L_1\cdot L_2=\{w_1w_2\in\Sigma^*\mid w_1\in L_1\ \wedge\ w_2\in L_2\}$, $L_1^*=\{\varepsilon\}\cup \{w_1\cdots w_k\in\Sigma^*\mid \forall j\in\{1,\ldots,k\},w_j\in L_1\}$.  
  The family of \emph{regular languages} over $\Sigma$ is the smallest family $\mathcal{F}$ closed under the three operations $\cup$, $\cdot$ and $^*$ satisfying $\emptyset\in\mathcal{F}$ and $\forall a\in\Sigma,\ \{a\}\in\mathcal{F}$.   
  Regular languages can be represented by \emph{regular expressions}. A \emph{regular expression} over $\Sigma$ is inductively defined by: $E=a$, $E=\varepsilon$, $E=\emptyset$, $E=F+G$, $E=F\cdot G$, $E=F^*$, where $a$ is any symbol in $\Sigma$ and $F$ and $G$ are any two regular expressions over $\Sigma$. The \emph{width of} $E$ is the number of occurrences of symbols in $E$, and its \emph{star number} the number of occurrences of the operator $^*$.  
  The language \emph{denoted by} $E$ is the language $L(E)$ inductively defined by: $L(A)=\{a\}$, $L(\varepsilon)=\{\varepsilon\}$, $L(\emptyset)=\emptyset$, $L(F+G)=L(F)\cup L(G)$, $L(F\cdot G)=L(F)\cdot L(G)$, $L(F^*)=(L(F))^*$, where $a$ is any symbol in $\Sigma$ and $F$ and $G$ are any two regular expressions over $\Sigma$. The language denoted by a regular expression is regular.
  
  Let $w$ be a word in $\Sigma^*$ and $L$ be a language. The \emph{left residual} (resp. \emph{right residual}) of $L$ w.r.t. $w$ is the language $w^{-1}(L)=\{w'\in\Sigma^*\mid ww'\in L\}$ (resp. $(L)w^{-1}=\{w'\in\Sigma^*\mid w'w\in L\}$). It has been shown
   that the set of the left residuals (resp. right residuals) of a language is a finite set if and only if the language is regular.

An \emph{automaton} (or a \emph{NFA}) over an alphabet $\Sigma$ is a $5$-tuple $A=(\Sigma,Q,I,F,\delta)$ where $\Sigma$ is an alphabet, $Q$ a finite set of \emph{states}, $I\subset Q$ the set of \emph{initial states}, $F\subset Q$ the set of \emph{final states} and $\delta$ the \emph{transition function} from $Q\times \Sigma$ to $2^Q$. The domain of the function $\delta$ can be extended to $2^Q\times \Sigma^*$ as follows: for any word $w$ in $\Sigma^*$, for any symbol $a$ in $\Sigma$, for any set of states $P\subset Q$, for any state $p\in Q$,  $\delta(P,\varepsilon)=P$, $\delta(p,aw)=\delta(\delta(p,a),w)$ and $\delta(P,w)=\bigcup_{p\in P}\delta(p,w)$.

The \emph{language recognized} by the automaton $A$ is the set $L(A)=\{w\in\Sigma^*\mid \delta(I,w)\cap F\neq\emptyset\}$. Given a state $q$ in $Q$, the \emph{right language} of $q$ is the set $\overrightarrow{L}(q)=\{w\in\Sigma^*\mid \delta(q,w)\cap F\neq\emptyset\}$. It can be shown that \textbf{(1)} $L(A)=\bigcup_{i\in I} \overrightarrow{L}(i)$, \textbf{(2)} $\overrightarrow{L}(q)=\{\varepsilon\mid q\in F\} \cup (\bigcup_{a\in\Sigma,p\in\delta(q,a)}\{a\}\cdot \overrightarrow{L}(p))$ and \textbf{(3)} $a^{-1}(\overrightarrow{L}(q))=\bigcup_{p\in\delta(q,a)}  \overrightarrow{L}(p)$.

Kleene Theorem~\cite{Kle56} asserts that a language is regular if and only if there exists an NFA that recognizes it. As a consequence, for any language $L$, there exists a regular expression $E$ such that $L(E)=L$ if and only if there exists an NFA $A$ such that $L(A)=L$. Conversion methods from an NFA to a regular expression
and \emph{vice versa}
 have been deeply studied. In this paper, we focus on the notion of partial derivative defined by Antimirov~\cite{Ant96}\footnote{Partial derivation is investigated in the more general framework of weighted expressions in~\cite{LS01}.}.

Given a regular expression $E$ over an alphabet $\Sigma$ and a word $w$ in $\Sigma^*$, the \emph{left partial derivative} of $E$ w.r.t. $w$ is the set $\frac{\partial}{\partial_w}(E)$ of regular expressions satisfying: $\bigcup_{E'\in\frac{\partial}{\partial_w}(E)} L(E')=w^{-1}(L(E))$.

This set is inductively computed as follows: for any two regular expressions $F$ and $G$, for any word $w$ in $\Sigma^*$ and for any two distinct symbols $a$ and $b$ in $\Sigma$,

\centerline{
  $\frac{\partial}{\partial_a}(a)=\{\varepsilon\}$, $\frac{\partial}{\partial_a}(b)=\frac{\partial}{\partial_a}(\varepsilon)=\frac{\partial}{\partial_a}(\emptyset)=\emptyset$,}
  
  \centerline{
    $\frac{\partial}{\partial_a}(F+G)=\frac{\partial}{\partial_a}(F)\cup \frac{\partial}{\partial_a}(G)$, $\frac{\partial}{\partial_a}(F^*)=\frac{\partial}{\partial_a}(F)\cdot F^*$,}
    
    \centerline{
      $\frac{\partial}{\partial_a}(F\cdot G)=
        \left\{
          \begin{array}{l@{\ }l}
            \frac{\partial}{\partial_a}(F)\cdot G \cup \frac{\partial}{\partial_a}(G) & \text{ if }\varepsilon\in L(F),\\
            \frac{\partial}{\partial_a}(F)\cdot G & \text{ otherwise,}\\
          \end{array}
        \right.$
    } 
    
    \centerline{
      $\frac{\partial}{\partial_{aw}}(F)=\frac{\partial}{\partial_w}(\frac{\partial}{\partial_a}(F))$, $\frac{\partial}{\partial_{\varepsilon}}(F)=\{F\}$,
    }
    
    where for any set $\mathcal{E}$ of regular expressions, for any word $w$ in $\Sigma^*$, for any regular expression $F$,  $\frac{\partial}{\partial_w}(\mathcal{E})=\bigcup_{E\in\mathcal{E}}\frac{\partial}{\partial_w}(E)$ and $\mathcal{E}\cdot F=\bigcup_{E\in\mathcal{E}} \{E\cdot F\}$. Any expression appearing in a left partial derivative is called a \emph{left derived term}.    
    Similarly, the \emph{right partial derivative} of a regular expression $E$ over an alphabet $\Sigma$ w.r.t. a word $w$ in $\Sigma^*$ is the set $(E)\frac{\partial}{\partial_w}$ inductively defined as follows for any two regular expressions $F$ and $G$, for any word $w$ in $\Sigma^*$ and for any two distinct symbols $a$ and $b$ in $\Sigma$,

\centerline{
  $(a)\frac{\partial}{\partial_a}=\{\varepsilon\}$, $(b)\frac{\partial}{\partial_a}=(\varepsilon)\frac{\partial}{\partial_a}=(\emptyset)\frac{\partial}{\partial_a}=\emptyset$,}
  
  \centerline{
    $(F+G)\frac{\partial}{\partial_a}=(F)\frac{\partial}{\partial_a}\cup (G)\frac{\partial}{\partial_a}$, $(F^*)\frac{\partial}{\partial_a}=F^* \cdot  (F)\frac{\partial}{\partial_a}$,}
    
    \centerline{
      $(F\cdot G)\frac{\partial}{\partial_a}=
        \left\{
          \begin{array}{l@{\ }l}
            F\cdot (G)\frac{\partial}{\partial_a} \cup (F)\frac{\partial}{\partial_a} & \text{ if }\varepsilon\in L(G),\\
            F\cdot (G)\frac{\partial}{\partial_a} & \text{ otherwise,}\\
          \end{array}
        \right.$
    } 
    
    \centerline{
      $(F)\frac{\partial}{\partial_{aw}}=((F)\frac{\partial}{\partial_a})\frac{\partial}{\partial_w}$, $(F)\frac{\partial}{\partial_{\varepsilon}}=\{F\}$,
    }
    
    where for any set $\mathcal{E}$ of regular expressions, for any word $w$ in $\Sigma^*$, for any regular expression $F$,  $(\mathcal{E})\frac{\partial}{\partial_w}=\bigcup_{E\in\mathcal{E}}(E)\frac{\partial}{\partial_w}$ and $F\cdot \mathcal{E}=\bigcup_{E\in\mathcal{E}}\{F\cdot E\}$. Any expression appearing in a right partial derivative is called a \emph{right derived term}.    
We denote by $\overleftarrow{\mathcal{D}_E}$ (resp. $\overrightarrow{\mathcal{D}_E}$) the  set of left (resp. right) derived terms of the expression $E$.    
    From the set of left derived terms of a regular expression $E$ of width $n$, Antimirov defined in~\cite{Ant96} the \emph{derived term automaton} $A$ of $E$ and showed that $A$ is a $k$-state NFA that recognizes $L(E)$, with $k\leq n+1$.
      
    A language over an alphabet $\Gamma$ is said to be \emph{linear context-free} if it can be generated by a linear grammar, that is a grammar equipped with productions in one of the following forms:
    \begin{enumerate}
      \item $A\rightarrow x B y$, where $A$ and $B$ are any two non-terminal symbols, and $x$ and $y$ are any two symbols in $\Gamma\cup\{\varepsilon\}$ such that $(x,y)\neq(\varepsilon,\varepsilon)$,
      \item $A \rightarrow \varepsilon$, where $A$ is any non-terminal symbol.
    \end{enumerate}
     
    Notice that the family of regular languages is strictly included into the family of linear context-free languages. In the following, we will consider combinations of left and right partial derivatives in order to deal with non-regular languages.
    
\section{Two-sided Residuals of a Language and Couple NFA}\label{sec:2sided}

  In this section, we extend residuals to two-sided residuals. This operation is the composition of left and right residuals, but it is more powerful than classical residuals since it allows to compute a finite subset of the set of residuals even for non-regular languages, which leads to the construction of a derivative-based finite recognizer.

\begin{definition}\label{def two side quot}
  Let $L$ be a language over an alphabet $\Gamma$ and let $u$ and $v$ be two words in $\Gamma^*$. The \emph{two-sided residual} of $L$ w.r.t. $(u,v)$ is the language $(u,v)^{-1}(L)=\{w\in\Gamma^*\mid uwv\in L\}$.
\end{definition}

As above-mentioned, the two-sided residual operation is the composition of the two operations of left and right residuals.

\begin{lemma}\label{lem 2sidequot eq quot 2side}
  Let $L$ be a language over an alphabet $\Gamma$ and $u$ and $v$ be two words in $\Gamma^*$. Then: $(u,v)^{-1}(L)=(u^{-1}(L))v^{-1}=u^{-1}((L)v^{-1})$.
\end{lemma}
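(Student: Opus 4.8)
The plan is to prove the two stated equalities directly from the definitions of left, right, and two-sided residuals, treating the claim as a pure membership argument. The statement asserts that $(u,v)^{-1}(L) = (u^{-1}(L))v^{-1} = u^{-1}((L)v^{-1})$, so I would establish the chain of equalities by a single unfolding of each set into the condition a word $w$ must satisfy to belong to it, and then observing that all three conditions reduce to the same predicate $uwv \in L$.

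First I would fix an arbitrary word $w \in \Gamma^*$ and expand the leftmost set: by Definition~\ref{def two side quot}, $w \in (u,v)^{-1}(L)$ holds if and only if $uwv \in L$. Next I would expand the middle expression $(u^{-1}(L))v^{-1}$ by applying the right-residual definition to the language $u^{-1}(L)$: a word $w$ lies in $(u^{-1}(L))v^{-1}$ exactly when $wv \in u^{-1}(L)$, and then applying the left-residual definition, this holds exactly when $u(wv) \in L$. By associativity of concatenation, $u(wv) = uwv$, so this condition coincides with the first. Symmetrically, for the rightmost expression $u^{-1}((L)v^{-1})$, I would first apply the left-residual definition to get $w \in u^{-1}((L)v^{-1})$ iff $uw \in (L)v^{-1}$, then the right-residual definition to get $(uw)v \in L$, and again $(uw)v = uwv$.

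Since each of the three sets is characterized by the identical condition $uwv \in L$ on $w$, and $w$ was arbitrary, the three sets have the same elements and are therefore equal. I would present this as two short equivalences joined by the associativity remark, keeping the whole argument to a few lines.

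I do not expect any genuine obstacle here: the only substantive ingredient is the associativity of word concatenation, which silently identifies $u(wv)$, $(uw)v$, and $uwv$. The one point warranting a word of care is that the middle and right expressions involve applying one residual operation to the output of another, so I would be explicit that $u^{-1}(L)$ and $(L)v^{-1}$ are themselves languages over $\Gamma$, making the nested applications well-defined; beyond that, the proof is a routine chain of biconditionals.
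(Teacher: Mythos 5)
Your proposal is correct and follows essentially the same route as the paper: the paper's proof also fixes an arbitrary word $w$ and unfolds the membership conditions of $(u^{-1}(L))v^{-1}$ and $u^{-1}((L)v^{-1})$ into the common predicate $uwv\in L$, matching Definition~\ref{def two side quot}. Your explicit remarks on associativity of concatenation and on $u^{-1}(L)$ and $(L)v^{-1}$ being languages are left implicit in the paper but change nothing of substance.
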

\begin{proof}
  Let $w$ be a word in $\Gamma^*$.
  
  \centerline{
    \begin{tabular}{l@{\ }l@{\ }l@{\ }l} 
      $w\in (u^{-1}(L))v^{-1}$ &  $\Leftrightarrow$ $wv \in u^{-1}(L)$ & $\Leftrightarrow$ $uwv \in L$ & $\Leftrightarrow$ $(u,v)^{-1}(L)$\\
       & $\Leftrightarrow$ $uwv \in L$ & $\Leftrightarrow$ $uw \in (L)v^{-1}$ & $\Leftrightarrow$ $w \in u^{-1}((L)v^{-1})$.\\
    \end{tabular}
  }
  \cqfd
\end{proof}

\begin{corollary}
  Let $L$ be a language over an alphabet $\Gamma$ and $u$ and $v$ be two words in $\Gamma^*$. Then: $\varepsilon\in (u,v)^{-1}(L)$ $\Leftrightarrow$ $uv\in L$.
\end{corollary}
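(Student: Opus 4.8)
The plan is to unfold Definition~\ref{def two side quot} directly and specialize it to the empty word. By that definition, the two-sided residual is the language $(u,v)^{-1}(L)=\{w\in\Gamma^*\mid uwv\in L\}$, so whether a given word $w$ belongs to this set is decided purely by the membership test $uwv\in L$. Since $\varepsilon\in\Gamma^*$, it is a legitimate candidate for such a test, and the whole statement amounts to carrying out that test for the single word $w=\varepsilon$.

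First I would instantiate the membership condition at $w=\varepsilon$, obtaining that $\varepsilon\in(u,v)^{-1}(L)$ holds if and only if $u\varepsilon v\in L$. Next I would use the fact that $\varepsilon$ is the neutral element for concatenation over $\Gamma^*$, so that $u\varepsilon v=uv$. Chaining these two equivalences yields $\varepsilon\in(u,v)^{-1}(L)\Leftrightarrow uv\in L$, which is exactly the claimed equivalence.

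There is essentially no obstacle here: the corollary is an immediate consequence of the definition, and the only point requiring any (entirely routine) attention is the simplification $u\varepsilon v=uv$. For completeness I note that the same conclusion can also be obtained from Lemma~\ref{lem 2sidequot eq quot 2side}, by writing $(u,v)^{-1}(L)=u^{-1}((L)v^{-1})$ and observing that $\varepsilon$ lies in a left residual $u^{-1}(L')$ exactly when $u\in L'$, applied to $L'=(L)v^{-1}$; but the direct argument above is the shortest route.
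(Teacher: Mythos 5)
Your proof is correct and matches the paper's treatment: the paper states this result without proof, as an immediate consequence of Definition~\ref{def two side quot} (instantiating $w=\varepsilon$ and simplifying $u\varepsilon v=uv$), which is exactly your argument. Your closing remark that it also follows from Lemma~\ref{lem 2sidequot eq quot 2side} accounts for the paper's labelling of the statement as a corollary of that lemma, so nothing is missing.
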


It is a folk knowledge that NFAs are related to left residual computation according to the following assertion \textbf{(A)}: in an NFA $(\Sigma,Q,I,F,\delta)$, a word $aw$ belongs to $\overrightarrow{L}(q)$ with $q\in Q$ if and only if $w$ belongs to $a^{-1}(\overrightarrow{L}(q))=\bigcup_{q'\in\delta(q,a)} \overrightarrow{L}(q')$.
Since a two-sided residual w.r.t. a couple $(x,y)$ of symbols in an alphabet $\Gamma$ is by definition the combination of a left residual w.r.t. $x$ and of a right residual w.r.t. $y$, the assertion \textbf{(A)} can be extended to two-sided residuals by introducing \emph{couple NFAs} equipped with transitions labelled by couples of symbols in $\Gamma$. The notion of right language of a state is extended to the one of $\Gamma$-right language as follows:
if a given word $w$ in $\Gamma^*$ belongs to the $\Gamma$-right language of a state $q'$ and if there exists a transition from a state $q$ to $q'$ labelled by a couple $(x,y)$, then the word $xwy$ belongs to the $\Gamma$-right language of $q$.

More precisely, given an alphabet $\Gamma$, we set $\Sigma_{\Gamma}=\{(x,y)\mid x,y\in \Gamma\cup\{\varepsilon\} \wedge\ (x,y)\neq (\varepsilon,\varepsilon)\}$. We consider the mapping $\mathrm{Im}$ from $(\Sigma_\Gamma)^*$ to $\Gamma^*$ inductively defined for any word $w$ in $(\Sigma_\Gamma)^*$ and for any symbol $(x,y) \in \Sigma_\Gamma$ by: $\mathrm{Im}(\varepsilon)=\varepsilon$ and $\mathrm{Im}((x,y)\cdot w)=x\cdot \mathrm{Im}(w) \cdot y$.
Notice that this mapping was introduced by Sempere~\cite{Semp00} in order to compute the language denoted by a \emph{linear expression}. Linear expressions denote linear context-free languages, and are equivalent to the \emph{regular-like expressions} of Brzozowski~\cite{Brz68}.
  
\begin{definition}\label{def lang couple nfa}
  Let $A=(\Sigma,Q,I,F,\delta)$ be an NFA. The NFA $A$ is a \emph{couple NFA} if there exists an alphabet $\Gamma$ such that $\Sigma\subset \Sigma_\Gamma$. In this case, $A$ is called a $\Gamma$-\emph{couple NFA}.
  The $\Gamma$\emph{-language} of a $\Gamma$-couple NFA $A$ is the subset $L_\Gamma(A)$ of $\Gamma^*$ defined by: $L_\Gamma(A)=\{\mathrm{Im}(w)\mid w\in L(A)\}$.
\end{definition}

The definition of right languages and their classical properties extend to couple NFAs as follows. Let $A=(\Sigma,Q,I,F,\delta)$ be  a $\Gamma$-couple NFA and $q$ be a state in $Q$. The $\Gamma$\emph{-right language} of $q$ is the subset $\overrightarrow{L}_\Gamma(q)$ of $\Gamma^*$ defined by: $\overrightarrow{L}_\Gamma(q)=\{\mathrm{Im}(w)\mid w\in \overrightarrow{L}(q)\}$.

\begin{lemma}\label{lem prop triv couple nfa}
  Let $A=(\Sigma,Q,I,F,\delta)$ be  a $\Gamma$-couple NFA and $q$ be a state in $Q$. Then: $L_\Gamma(A)=\bigcup_{i\in I} \overrightarrow{L}_\Gamma(i)$.  
\end{lemma}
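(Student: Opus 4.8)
The plan is to reduce the statement to the already-established property \textbf{(1)} for the underlying NFA over $\Sigma$ and then transport the resulting decomposition through the mapping $\mathrm{Im}$. First I would observe that a $\Gamma$-couple NFA is, by Definition~\ref{def lang couple nfa}, nothing but an ordinary NFA over the alphabet $\Sigma\subset\Sigma_\Gamma$. Hence property \textbf{(1)} recalled in the Preliminaries applies verbatim to $A$, giving $L(A)=\bigcup_{i\in I}\overrightarrow{L}(i)$. This is a purely syntactic identity between languages over $\Sigma$ and involves neither $\Gamma$ nor $\mathrm{Im}$.

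Next I would apply the direct image under $\mathrm{Im}$ to both sides. By definition, $L_\Gamma(A)=\{\mathrm{Im}(w)\mid w\in L(A)\}$, which is precisely the image of the set $L(A)$ under $\mathrm{Im}$; similarly $\overrightarrow{L}_\Gamma(i)$ is the image of $\overrightarrow{L}(i)$ under $\mathrm{Im}$. Substituting the decomposition of $L(A)$ obtained above, it remains to check that $\mathrm{Im}(\bigcup_{i\in I}\overrightarrow{L}(i))=\bigcup_{i\in I}\mathrm{Im}(\overrightarrow{L}(i))$. This is the single ingredient needed, and it is the elementary fact that the direct image of a set under any function commutes with arbitrary unions. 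Identifying each term $\mathrm{Im}(\overrightarrow{L}(i))$ with $\overrightarrow{L}_\Gamma(i)$ then yields exactly $L_\Gamma(A)=\bigcup_{i\in I}\overrightarrow{L}_\Gamma(i)$.

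I do not expect any real obstacle here: the combinatorial content lies entirely in property \textbf{(1)}, which $A$ inherits from its NFA structure over $\Sigma$, and the only additional step is the distributivity of images over unions. The one point worth stating carefully is that $\mathrm{Im}$ acts uniformly and word-by-word, so it can be pushed inside the union indexed by the initial states $I$ without interacting with the transition function $\delta$ or with the couple labels; this is what lets the proof of the classical property transfer unchanged to the $\Gamma$-setting.
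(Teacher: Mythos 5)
Your proof is correct and follows exactly the route the paper takes: the paper's proof states that the lemma is ``trivially deduced'' from Definition~\ref{def lang couple nfa}, the definition of $\Gamma$-right languages, and the identity $L(A)=\bigcup_{i\in I}\overrightarrow{L}(i)$, which are precisely the three ingredients you use. You merely make explicit the one elementary step the paper leaves implicit, namely that the direct image under $\mathrm{Im}$ commutes with the union over initial states.
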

\begin{proof}
  Trivially deduced from Definition~\ref{def lang couple nfa}, from definition of $\Gamma$-right languages and from the fact that $L(A)=\bigcup_{i\in I} \overrightarrow{L}(i)$.
  \cqfd
\end{proof}

\begin{lemma}\label{lem calcul lang droit gamma}
  Let $A=(\Sigma,Q,I,F,\delta)$ be  a $\Gamma$-couple NFA and $q$ be a state in $Q$. Then: $\overrightarrow{L}_\Gamma(q)=\{\varepsilon\mid q\in F\}\cup \bigcup_{(x,y)\in\Sigma, q'\in \delta(q,(x,y))} \{x\}\cdot \overrightarrow{L}_\Gamma(q')\cdot \{y\}$.
\end{lemma}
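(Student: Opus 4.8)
The plan is to derive the recursion directly from the analogous property of ordinary right languages, namely property \textbf{(2)} recalled in the preliminaries, by pushing it through the image mapping $\mathrm{Im}$. Since $A$ is an NFA over the alphabet $\Sigma\subset\Sigma_\Gamma$, property \textbf{(2)} applies verbatim with symbols of the form $(x,y)$, giving
\[
  \overrightarrow{L}(q)=\{\varepsilon\mid q\in F\}\cup \bigcup_{(x,y)\in\Sigma,\,q'\in\delta(q,(x,y))}\{(x,y)\}\cdot \overrightarrow{L}(q').
\]
Because $\overrightarrow{L}_\Gamma(q)$ is by definition the image $\{\mathrm{Im}(w)\mid w\in\overrightarrow{L}(q)\}$, the next step is to apply $\mathrm{Im}$ set-wise to both sides of this identity.

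The elementary fact I would use here is that taking the image of a set under $\mathrm{Im}$ commutes with unions, so that the image of the right-hand side is the union of the images of its pieces. I would then evaluate the two kinds of pieces separately. For the first piece, $\mathrm{Im}(\{\varepsilon\mid q\in F\})=\{\varepsilon\mid q\in F\}$, since $\mathrm{Im}(\varepsilon)=\varepsilon$ and the empty set maps to the empty set. For each remaining piece, I would invoke the inductive clause $\mathrm{Im}((x,y)\cdot w)=x\cdot\mathrm{Im}(w)\cdot y$ to compute
\[
  \mathrm{Im}\bigl(\{(x,y)\}\cdot\overrightarrow{L}(q')\bigr)=\{x\cdot\mathrm{Im}(w)\cdot y\mid w\in\overrightarrow{L}(q')\}=\{x\}\cdot\overrightarrow{L}_\Gamma(q')\cdot\{y\},
\]
where the last equality again uses the definition of $\overrightarrow{L}_\Gamma(q')$ as the image of $\overrightarrow{L}(q')$ under $\mathrm{Im}$. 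Substituting these two evaluations back into the image of property \textbf{(2)} yields exactly the claimed formula.

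The argument is essentially bookkeeping, and I do not expect a genuine obstacle. The only point requiring a little care is the factorization of $\mathrm{Im}(\{(x,y)\}\cdot\overrightarrow{L}(q'))$ into a \emph{left} concatenation by $\{x\}$ and a \emph{right} concatenation by $\{y\}$: this is precisely what makes the couple structure of the transitions correspond to the two-sided growth of the $\Gamma$-right language, and it follows immediately from the single defining clause of $\mathrm{Im}$.
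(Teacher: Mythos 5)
Your proof is correct and follows essentially the same route as the paper, which deduces the identity from the definition of $\overrightarrow{L}_\Gamma(q)$ as the image of $\overrightarrow{L}(q)$ under $\mathrm{Im}$ together with the standard recursion \textbf{(2)} for right languages; your write-up merely makes explicit the bookkeeping (commutation of $\mathrm{Im}$ with unions and the clause $\mathrm{Im}((x,y)\cdot w)=x\cdot\mathrm{Im}(w)\cdot y$) that the paper leaves implicit in its one-line proof.
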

\begin{proof}
  Trivially deduced from Definition~\ref{def lang couple nfa}, from definition of $\Gamma$-right languages and from the fact that $\overrightarrow{L}(q)=\{\varepsilon\mid q\in F\}\cup \bigcup_{a\in\Sigma, q'\in \delta(q,a)} \{a\}\cdot \overrightarrow{L}(q')$.
  \cqfd
\end{proof}

\begin{corollary}\label{cor prop triv couple nfa}
  Let $A=(\Sigma,Q,I,F,\delta)$ be  a $\Gamma$-couple NFA, $(x,y)$ be a couple in $\Sigma_\Gamma$ and $q$ be a state in $Q$. Then: $(x,y)^{-1}(\overrightarrow{L}_\Gamma(q))=\bigcup_{q'\in \delta(q,(x,y))} \overrightarrow{L}_\Gamma(q')$.
\end{corollary}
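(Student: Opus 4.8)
The plan is to read the result off the recursive description of $\Gamma$-right languages supplied by Lemma~\ref{lem calcul lang droit gamma}, combined with the defining property of the two-sided residual (Definition~\ref{def two side quot}). By definition a word $w\in\Gamma^*$ lies in $(x,y)^{-1}(\overrightarrow{L}_\Gamma(q))$ if and only if $xwy\in\overrightarrow{L}_\Gamma(q)$, so the whole statement reduces to characterizing which words of the shape $xwy$ belong to $\overrightarrow{L}_\Gamma(q)$. Since $(x,y)\in\Sigma_\Gamma$ forces $(x,y)\neq(\varepsilon,\varepsilon)$, at least one of $x,y$ is a letter and hence $xwy\neq\varepsilon$; this lets me discard the $\{\varepsilon\mid q\in F\}$ summand of Lemma~\ref{lem calcul lang droit gamma} immediately, leaving only the union of the sets $\{x'\}\cdot\overrightarrow{L}_\Gamma(q')\cdot\{y'\}$ indexed by $(x',y')\in\Sigma$ and $q'\in\delta(q,(x',y'))$.

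I would first dispatch the easy inclusion $\bigcup_{q'\in\delta(q,(x,y))}\overrightarrow{L}_\Gamma(q')\subseteq(x,y)^{-1}(\overrightarrow{L}_\Gamma(q))$. Fix $q'\in\delta(q,(x,y))$ and $w\in\overrightarrow{L}_\Gamma(q')$. Then $\{x\}\cdot\overrightarrow{L}_\Gamma(q')\cdot\{y\}$ is literally one of the summands produced by Lemma~\ref{lem calcul lang droit gamma}, so $xwy\in\overrightarrow{L}_\Gamma(q)$, and therefore $w\in(x,y)^{-1}(\overrightarrow{L}_\Gamma(q))$ by Definition~\ref{def two side quot}. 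No case analysis is needed in this direction.

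For the reverse inclusion I would take $w$ with $xwy\in\overrightarrow{L}_\Gamma(q)$ and apply Lemma~\ref{lem calcul lang droit gamma} to obtain a couple $(x',y')\in\Sigma$, a state $q'\in\delta(q,(x',y'))$ and a word $z\in\overrightarrow{L}_\Gamma(q')$ with $xwy=x'zy'$, the goal being to force $(x',y')=(x,y)$ and $z=w$ so as to place $w$ in the right-hand union. I expect this step to be the main obstacle, and it is the genuinely delicate point: because $x,x',y,y'$ all range over $\Gamma\cup\{\varepsilon\}$, the factorization $xwy=x'zy'$ need not be the aligned one, so a transition carrying a couple other than $(x,y)$ — in particular one with an $\varepsilon$-component — could a priori realize the very same word $xwy$. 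To control this I would not argue at the level of $\Gamma^*$ but lift the membership back to the underlying couple word: by definition of the $\Gamma$-right language there is $u\in\overrightarrow{L}(q)$ with $\mathrm{Im}(u)=xwy$, and writing $u=(x_1,y_1)\,u_1$ with $q_1\in\delta(q,(x_1,y_1))$ and $u_1\in\overrightarrow{L}(q_1)$ gives $x_1\,\mathrm{Im}(u_1)\,y_1=xwy$. The crux is then exactly to show that this leading couple must be $(x,y)$ and that $\mathrm{Im}(u_1)=w$; any proof of the corollary has to confront this alignment question head-on, and it is precisely here that the behaviour of $\mathrm{Im}$ on the couple words reaching $F$ — and the careful treatment of the $\varepsilon$-components — must be brought to bear.
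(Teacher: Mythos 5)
Your first inclusion is fine: $\bigcup_{q'\in\delta(q,(x,y))}\overrightarrow{L}_\Gamma(q')\subseteq(x,y)^{-1}(\overrightarrow{L}_\Gamma(q))$ follows exactly as you say from Lemma~\ref{lem calcul lang droit gamma} and Definition~\ref{def two side quot}. But your proof stops precisely where the work begins: for the reverse inclusion you reduce matters to showing that the leading couple $(x_1,y_1)$ of a couple word $u\in\overrightarrow{L}(q)$ with $\mathrm{Im}(u)=xwy$ must equal $(x,y)$, declare this the crux, and never establish it. That is a genuine gap --- and your suspicion about misaligned factorizations is not a technicality to be overcome but a fatal obstruction: the step is false, and with it the corollary as literally stated for arbitrary $\Gamma$-couple NFAs. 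Take $\Gamma=\{a,b\}$ and the couple NFA whose only transition is $\delta(q,(\varepsilon,a))=\{f\}$ with $f$ final and $q$ not final. Then $\overrightarrow{L}_\Gamma(q)=\{\mathrm{Im}((\varepsilon,a))\}=\{a\}$, so $(a,\varepsilon)^{-1}(\overrightarrow{L}_\Gamma(q))=\{\varepsilon\}$, while $\delta(q,(a,\varepsilon))=\emptyset$ makes the right-hand side empty. Nor is the failure confined to couples with an $\varepsilon$-component: with $q\xrightarrow{(a,\varepsilon)}q_1\xrightarrow{(\varepsilon,b)}q_2$ and $q_2$ final, one has $\overrightarrow{L}_\Gamma(q)=\{\mathrm{Im}((a,\varepsilon)(\varepsilon,b))\}=\{ab\}$, hence $(a,b)^{-1}(\overrightarrow{L}_\Gamma(q))=\{\varepsilon\}$, yet $\delta(q,(a,b))=\emptyset$.

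It is worth saying how this compares with the paper: the paper gives no proof at all --- the statement is presented as an immediate corollary of Lemma~\ref{lem calcul lang droit gamma}, which tacitly assumes that the only factorization $xwy=x'zy'$ produced by the lemma is the aligned one. That unjustified assumption is exactly the step you isolated, so your blind attempt, though incomplete as a proof, correctly diagnoses a defect in the source: only the inclusion $\supseteq$ is ``trivially deduced''. The equality does hold in the automata where the corollary is actually invoked, namely the two-sided derived term automata of Proposition~\ref{prop twosided der term aut bon lang}, but there it comes from a saturation property of the transition function --- $\bigcup_{E''\in\frac{\partial}{\partial_{(x,y)}}(E')}L(E'')=(x,y)^{-1}(L(E'))$ by Proposition~\ref{prop deriv two side bon lang} --- combined with the recurrence hypothesis of that proof, not from Lemma~\ref{lem calcul lang droit gamma} alone. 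To complete your argument one must either add such a closure hypothesis on the couple NFA (every two-sided residual of a state's $\Gamma$-right language is realized by its $(x,y)$-successors) or restrict the label set so that misaligned factorizations cannot arise; for a general $\Gamma$-couple NFA no proof exists, because the statement is false.
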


The following example illustrates the fact that there exist non-regular languages that can be recognized by couple NFAs.

\begin{example}
  Let $\Gamma=\{a,b\}$ and $A$ be the automaton of the Figure~\ref{fig ex couple nfa pas rat}. The $\Gamma$-language of $A$ is $L_\Gamma(A)=\{a^nb^n\mid n\in\mathbb{N}\}$.
\end{example}

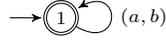
\begin{figure}[H]
  \centerline{ 
    \begin{tikzpicture}[node distance=2cm,bend angle=30]   
    \node[initial,accepting,state] (q) {$1$};
    \path[->]
      (q)   edge [swap,in=30,out=-30,loop] node {$(a,b)$} ();	            
      \end{tikzpicture}
  }  
  \caption{The Couple Automaton $A$.}
  \label{fig ex couple nfa pas rat}
\end{figure}

As a consequence there exist non-regular languages that are recognized by a couple NFA. In fact, the family of languages recognized by couple NFAs is exactly the family of linear context-free languages.

\begin{proposition}\label{prop gamma lang linear}
  The $\Gamma$-language recognized by a $\Gamma$-couple NFA is linear context-free.
\end{proposition}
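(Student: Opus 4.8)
The plan is to construct, from a given $\Gamma$-couple NFA $A=(\Sigma,Q,I,F,\delta)$, a linear grammar $G$ that generates exactly $L_\Gamma(A)$, following the classical pattern that turns right-linear recognizers into regular grammars, but now tracking the two-sided structure recorded by the transition labels. First I would introduce one non-terminal symbol $X_q$ for each state $q\in Q$, together with a fresh start symbol $S$. The intended invariant is that the words derivable from $X_q$ are precisely the elements of $\overrightarrow{L}_\Gamma(q)$, the $\Gamma$-right language of $q$. Given this invariant, Lemma~\ref{lem prop triv couple nfa} tells us that $L_\Gamma(A)=\bigcup_{i\in I}\overrightarrow{L}_\Gamma(i)$, so it suffices to add the productions $S\rightarrow X_i$ for each $i\in I$ to obtain a grammar for the whole $\Gamma$-language.

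Next I would read off the productions directly from the recursive description of $\Gamma$-right languages supplied by Lemma~\ref{lem calcul lang droit gamma}. That lemma states
\[
  \overrightarrow{L}_\Gamma(q)=\{\varepsilon\mid q\in F\}\cup \bigcup_{(x,y)\in\Sigma,\ q'\in \delta(q,(x,y))} \{x\}\cdot \overrightarrow{L}_\Gamma(q')\cdot \{y\},
\]
and each clause translates into a grammar rule: for every transition $q'\in\delta(q,(x,y))$ I add the production $X_q\rightarrow x\,X_{q'}\,y$, and whenever $q\in F$ I add $X_q\rightarrow\varepsilon$. I must then check that every such rule has one of the two admissible linear forms. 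The rule $X_q\rightarrow\varepsilon$ is of type~2. For $X_q\rightarrow x\,X_{q'}\,y$, the crucial point is that $(x,y)\in\Sigma\subset\Sigma_\Gamma$, so by definition of $\Sigma_\Gamma$ we have $x,y\in\Gamma\cup\{\varepsilon\}$ with $(x,y)\neq(\varepsilon,\varepsilon)$, which is exactly the side condition required for a type~1 production. This is where the restriction to couple NFAs, and the correspondence with the map $\mathrm{Im}$ sending $(x,y)$ to $x\cdot(\,\cdot\,)\cdot y$, does its essential work: an ordinary transition label would not respect the linear grammar format, but a couple label does.

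It then remains to verify that the grammar generates the claimed language, that is, that $L(G)=L_\Gamma(A)$. I would prove the invariant $L_G(X_q)=\overrightarrow{L}_\Gamma(q)$ for every $q$, where $L_G(X_q)$ denotes the set of terminal words derivable from $X_q$. Both inclusions follow by induction on the length of a derivation, respectively on the length of the word, using Lemma~\ref{lem calcul lang droit gamma} as the inductive characterization of $\overrightarrow{L}_\Gamma(q)$: a derivation $X_q\Rightarrow x\,X_{q'}\,y\Rightarrow^* x\,w'\,y$ corresponds exactly to a membership $xw'y\in\{x\}\cdot\overrightarrow{L}_\Gamma(q')\cdot\{y\}$ arising from a transition $q'\in\delta(q,(x,y))$, and the base case $X_q\Rightarrow\varepsilon$ corresponds to $q\in F$. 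Once the invariant is established, combining it with the start productions $S\rightarrow X_i$ and Lemma~\ref{lem prop triv couple nfa} gives $L(G)=L_\Gamma(A)$, which completes the argument.

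I expect the only delicate step to be the bookkeeping in the induction, specifically matching the nesting of the grammar derivation (which appends a symbol $x$ on the left and $y$ on the right at each step) with the corresponding nesting produced by $\mathrm{Im}$ and by the recursion of Lemma~\ref{lem calcul lang droit gamma}; the verification that the productions are syntactically linear is immediate from the definition of $\Sigma_\Gamma$ and is really the conceptual heart of why the statement holds at all.
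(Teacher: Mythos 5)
Your proof is correct and is essentially the paper's own argument: one non-terminal per state plus a start symbol, productions read off from final states and from transitions $q'\in\delta(q,(x,y))$, the invariant $L_G(X_q)=\overrightarrow{L}_\Gamma(q)$ established by induction on word length via Lemma~\ref{lem calcul lang droit gamma}, and the conclusion via Lemma~\ref{lem prop triv couple nfa}. The one blemish — the unit productions $S\rightarrow X_i$ are not literally of either admissible linear form since $(x,y)=(\varepsilon,\varepsilon)$ is excluded, so strictly one should eliminate them, e.g.\ by copying each $X_i$'s productions to $S$ — is present in the paper's proof as well and is routine to repair.
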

\begin{proof}
  Let $A=(\Sigma,Q,I,F,\delta)$. Let us define the grammar $G=(X,V,P,S)$ by:
  \begin{itemize}
    \item $X=\Gamma$, the set of terminal symbols,
    \item $V=\{A_q\mid q\in Q\}\cup \{S\}$, the set of non-terminal symbols,
    \item $P=\{S\rightarrow A_q\mid q\in I\} \cup \{A_q\rightarrow \varepsilon\mid q\in F\}\cup \{A_q\rightarrow \alpha A_{q'} \beta \mid q'\in\delta(q,(\alpha,\beta))\}$, the set of productions,
    \item $S$, the axiom.
  \end{itemize}
  
  \begin{enumerate}
    \item Let $w$ be word in $\Gamma^*$. Let us first show that $w$ belongs to the language generated by the grammar $G_q=(X,V,P,A_q)$ if and only if it is in $\overrightarrow{L}_\Gamma(q)$, by recurrence over the length of $w$.
    \begin{enumerate}
      \item Let us suppose that $w=\varepsilon$. By construction of $G_q$, $A_q\rightarrow \varepsilon$ if and only if $q\in F$, \emph{i.e.} $\varepsilon\in \overrightarrow{L}_\Gamma(q)$.
      \item Let us suppose that $w=\alpha w' \beta$ with $(\alpha,\beta)\neq (\varepsilon,\varepsilon)$. By definition of $L(G_q)$, $w\in L(G_q)$ if there exists a symbol $A_{q'}$ in $V$ such that $A_q  \rightarrow \alpha A_{q'} \beta$ and $w'\in L(G_{q'})$. By recurrence hypothesis, it holds that $w'\in L(G_{q'}) \Leftrightarrow w'\in \overrightarrow{L}_\Gamma(q')$. Since by construction $A_q  \rightarrow \alpha A_{q'} \beta \Leftrightarrow q'\in \delta(q,(\alpha,\beta))$ and since according to Lemma~\ref{lem calcul lang droit gamma}, $\overrightarrow{L}_\Gamma(q)=\{\varepsilon\mid q\in F\}\cup \bigcup_{(x,y)\in\Sigma, q'\in \delta(q,(x,y))} \{x\}\cdot \overrightarrow{L}_\Gamma(q')\cdot \{y\}$, it holds that $w\in L(G_q) \Leftrightarrow w\in \overrightarrow{L}_\Gamma(q)$.
    \end{enumerate}
    \item Since $L(G)=\bigcup_{q\mid S\rightarrow A_q} L(G_q)$, it holds from \textbf{(1)} that $L(G)=\bigcup_{q\in I} \overrightarrow{L}_\Gamma(q)$, that equals according to Lemma~\ref{lem prop triv couple nfa} to $L(A)$.
  \end{enumerate}
  
  Finally, since the $\Gamma$-language of $A$ is generated by a linear grammar, it is linear context free.
  \cqfd
\end{proof}

\begin{proposition}\label{prop lang linear reco par couplenfa}
  The language generated by a linear grammar is recognized by a couple NFA.
\end{proposition}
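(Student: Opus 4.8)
The plan is to invert the construction used in the proof of Proposition~\ref{prop gamma lang linear}: from the productions of a linear grammar I read off the transitions of a couple NFA, turning non-terminals into states, the axiom into the unique initial state, and the $\varepsilon$-productions into the final states. This mirrors, in the opposite direction, the passage from a $\Gamma$-couple NFA to a linear grammar carried out above.

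More precisely, given a linear grammar $G=(X,V,P,S)$ with $X=\Gamma$, I would define the automaton $A=(\Sigma,Q,I,F,\delta)$ by setting $Q=V$, $I=\{S\}$, $F=\{A\in V\mid (A\rightarrow\varepsilon)\in P\}$, and, for every couple $(x,y)$, $\delta(A,(x,y))=\{B\in V\mid (A\rightarrow xBy)\in P\}$. Every couple $(x,y)$ occurring in a production of the first form satisfies $(x,y)\neq(\varepsilon,\varepsilon)$ and $x,y\in\Gamma\cup\{\varepsilon\}$, hence lies in $\Sigma_\Gamma$; taking $\Sigma$ to be the set of such couples yields $\Sigma\subset\Sigma_\Gamma$, so that $A$ is a genuine $\Gamma$-couple NFA.

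The core of the proof is the equivalence $w\in L(G_A)\Leftrightarrow w\in\overrightarrow{L}_\Gamma(A)$ for every non-terminal $A$, where $G_A=(X,V,P,A)$ denotes the grammar obtained from $G$ by taking $A$ as axiom; I would establish it by recurrence on $|w|$, exactly as in Proposition~\ref{prop gamma lang linear}. For the inductive step, a non-empty word $w$ lies in $L(G_A)$ if and only if there exist a production $A\rightarrow xBy$ and a factorization $w=xw'y$ with $w'\in L(G_B)$; since $(x,y)\neq(\varepsilon,\varepsilon)$ forces $|w'|<|w|$, the recurrence hypothesis applies, and matching $A\rightarrow xBy\in P$ with $B\in\delta(A,(x,y))$ lets me conclude via the description of $\overrightarrow{L}_\Gamma(A)$ supplied by Lemma~\ref{lem calcul lang droit gamma}. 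Finally, since $I=\{S\}$, Lemma~\ref{lem prop triv couple nfa} gives $L_\Gamma(A)=\overrightarrow{L}_\Gamma(S)=L(G_S)=L(G)$.

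The step deserving the most care is the base case $w=\varepsilon$. Here I would argue that a derivation $A\Rightarrow^*\varepsilon$ cannot use any production of the first form: such a production introduces at least one terminal of $\Gamma$, because $(x,y)\neq(\varepsilon,\varepsilon)$, and a terminal can never be erased later. Hence $\varepsilon\in L(G_A)$ is equivalent to $(A\rightarrow\varepsilon)\in P$, \emph{i.e.} to $A\in F$, which by Lemma~\ref{lem calcul lang droit gamma} is exactly $\varepsilon\in\overrightarrow{L}_\Gamma(A)$. This observation that terminals are never erased is what keeps the empty-word case in lockstep with the final states of $A$, and it is the only place where the restricted shape of the admissible productions is genuinely needed.
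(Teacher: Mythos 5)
Your proof is correct and takes essentially the same route as the paper's: the identical automaton ($Q=V$, $I=\{S\}$, $F$ given by the $\varepsilon$-productions, $\delta$ read off the productions $A\rightarrow xBy$) together with the same recurrence on $|w|$ establishing $w\in L(G_A)\Leftrightarrow w\in\overrightarrow{L}_\Gamma(A)$, concluded via Lemma~\ref{lem calcul lang droit gamma} and Lemma~\ref{lem prop triv couple nfa}. Your two explicit justifications --- that $(x,y)\neq(\varepsilon,\varepsilon)$ forces $|w'|<|w|$ so the recurrence applies, and that terminals are never erased so $\varepsilon\in L(G_A)$ holds only via $(A\rightarrow\varepsilon)\in P$ --- are details the paper leaves implicit, and they only strengthen the argument.
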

\begin{proof}
  Let $G=(X,V,P,S)$ be a linear grammar. Let us define the automaton $A=(\Sigma,Q,I,F,\delta)$ by:
  \begin{itemize}
    \item $\Sigma=\Sigma_X$,
    \item $Q=V$,
    \item $I=\{S\}$,
    \item $F=\{B\in V\mid (B\rightarrow \varepsilon)\in P\}$,
    \item $B'\in \delta(B,(x,y)) \Leftrightarrow (B\rightarrow xBy) \in P$.
  \end{itemize}
  
  For any symbol $B$ in $V$, let us set $G_B=(X,V,P,B)$. Let $w$ be a word in $X^*$. Let us show by recurrence over the length of $w$ that $w\in  L(G_B) \Leftrightarrow w\in \overrightarrow{L}_X(B)$.
  \begin{enumerate}
    \item Let $w=\varepsilon$. Then $\varepsilon\in L(G_B)$ if and only if $(G_B\rightarrow \varepsilon)\in P$. By construction, it is equivalent to $B\in F$ and to $\varepsilon\in \overrightarrow{L}_X(B)$.
    \item Let us suppose that $w$ is different from $\varepsilon$. Then by recurrence hypothesis and according to Lemma~\ref{lem calcul lang droit gamma}:
    
    \centerline{
      \begin{tabular}{l@{\ }l}
        & $w\in L(G_B)$ \\
        & $\Leftrightarrow$ $\exists (x,y)\in \Sigma_X,w'\in X^*,B'\in V \mid w=xw'y \wedge (B\rightarrow xB'y)\in P \wedge w'\in L(G_{B'})$\\
        & $\Leftrightarrow$ $\exists (x,y)\in \Sigma_X,w'\in X^*,B'\in V \mid w=xw'y \wedge B'\in\delta(B,(x,y)) \wedge w'\in \overrightarrow{L}_X(B')$\\
        & $\Leftrightarrow$ $w\in \overrightarrow{L}_X(B)$\\
      \end{tabular} 
    }
  \end{enumerate}  
  
  Finally, since $L(G)=L(G_S)=\overrightarrow{L}_S(B)$, it holds from Lemma~\ref{lem prop triv couple nfa} that $L(G)=L(A)$.
  \cqfd
\end{proof}

\begin{theorem}\label{thm context free equ recocouple}
  A language is linear context-free if and only if it is recognized by a couple NFA.
\end{theorem}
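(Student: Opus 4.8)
The plan is to obtain the statement as an immediate consequence of the two preceding propositions, combined with the definition (given in the Preliminaries) of linear context-free languages as exactly those generated by a linear grammar. So the whole proof reduces to assembling two results already in hand, and I would structure it as the two implications of the biconditional.

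For the direction asserting that every language recognized by a couple NFA is linear context-free, I would argue as follows. Suppose $L$ is recognized by a couple NFA; by Definition~\ref{def lang couple nfa} this means there is an alphabet $\Gamma$ and a $\Gamma$-couple NFA $A$ with $L=L_\Gamma(A)$. Proposition~\ref{prop gamma lang linear} states precisely that $L_\Gamma(A)$ is linear context-free, so this direction is nothing more than an invocation of that proposition.

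For the converse, suppose $L$ is linear context-free. By definition it is then generated by some linear grammar $G$, i.e. $L=L(G)$. Proposition~\ref{prop lang linear reco par couplenfa} constructs, from $G$, a couple NFA whose language equals $L(G)$, which is exactly the desired conclusion. Thus both implications are already proved as standalone propositions.

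Since each half is established separately, there is no genuine mathematical obstacle; the only point that deserves care is a terminological one. I would make explicit that the phrase ``recognized by a couple NFA'' in the theorem refers, on one side, to the $\Gamma$-language $L_\Gamma(A)$ of Proposition~\ref{prop gamma lang linear}, and, on the other side, to the language produced in Proposition~\ref{prop lang linear reco par couplenfa}, and I would confirm that these two recognition relations coincide, so that the two propositions are indeed speaking about the same notion. Once this matching is spelled out, the biconditional reads off directly.
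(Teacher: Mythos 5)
Your proposal is correct and matches the paper's own proof, which likewise derives the theorem directly from Proposition~\ref{prop gamma lang linear} (one direction) and Proposition~\ref{prop lang linear reco par couplenfa} (the converse). Your extra remark on confirming that both propositions use the same notion of recognition via the $\Gamma$-language $L_\Gamma(A)$ is a reasonable clarification but does not change the argument.
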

\begin{proof}
  Directly from Proposition~\ref{prop gamma lang linear} and from Proposition~\ref{prop lang linear reco par couplenfa}.
  \cqfd
\end{proof}

We present here two algorithms in order to solve the membership problem\footnote{Given a language $L$ and a word $w$, does $w$ belong to $L$?} \emph{via} a couple NFA.
The Algorithm~\ref{algo MBTEST} checks whether the word $w\in\Gamma^*$ is recognized by the $\Gamma$-couple NFA $A$. It returns TRUE if there exists an initial state such that its $\Gamma$-right language contains $w$. The Algorithm~\ref{algo RightLang} checks whether the word $w\in\Gamma^*$  is in the $\Gamma$-right language of the state $q$.

\begin{algorithm}[H]
  \caption{\textrm{IsInRightLanguage}($A$,$w$,$q$)}
  \label{algo RightLang}
  \begin{algorithmic}[1]
    \REQUIRE $A=(\Sigma,Q,I,F,\delta)$ a $\Gamma$-couple NFA, $w$ a word in $\Gamma^*$, $q$ a state in $Q$
    \ENSURE Returns $(w\in \overrightarrow{L}_\Gamma(q))$ 
    \IF{$w=\varepsilon$}   
      \STATE $P$ $\leftarrow$ $(q\in F)$
    \ELSE
      \STATE $P$ $\leftarrow$ FALSE
      \FORALL{$(q,(\alpha,\beta),q')\in \delta\mid w=\alpha w' \beta$}
        \STATE $P$ $\leftarrow$ $P\ \vee\ $ \textrm{IsInRightLanguage}($A$, $w'$, $q'$)
      \ENDFOR
    \ENDIF
    \RETURN $P$
  \end{algorithmic}
\end{algorithm}

\begin{algorithm}[H]
  \caption{\textrm{MembershipTest}($A$,$w$)}
  \label{algo MBTEST}
  \begin{algorithmic}[1]
    \REQUIRE $A=(\Sigma,Q,I,F,\delta)$ a $\Gamma$-couple NFA, $w$ a word in $\Gamma^*$
    \ENSURE Returns $(w\in L_\Gamma(A))$ 
    \STATE $R$ $\leftarrow$ FALSE
    \FORALL{$i\in I$}
      \STATE $R$ $\leftarrow$ $R\ \vee\ $ \textrm{IsInRightLanguage}($A$, $w$, $i$)
    \ENDFOR
    \RETURN $R$
  \end{algorithmic}
\end{algorithm}

\begin{proposition}\label{prop algos ok}
  Let $A=(\Sigma,Q,I,F,\delta)$ be a $\Gamma$-couple NFA, $q$ be a state in $Q$ and $w$ be a word in $\Gamma^*$. The two following propositions are satisfied:
  \begin{enumerate}
    \item Algorithm~\ref{algo RightLang}: \textrm{IsInRightLanguage}($A$, $w$, $q$) returns $(w\in \overrightarrow{L}_\Gamma(q))$,
    \item Algorithm~\ref{algo MBTEST}: \textrm{MembershipTest}($A$,$w$) returns $(w\in L_\Gamma(A))$. 
  \end{enumerate}
\end{proposition}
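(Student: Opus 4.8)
The plan is to prove the two parts in order, establishing part (1) by recurrence on the length of $w$ exactly along the branching of Algorithm~\ref{algo RightLang}, and then deriving part (2) as an immediate consequence of part (1) together with Lemma~\ref{lem prop triv couple nfa}. In both parts the guiding idea is that the recursive/iterative structure of the algorithm mirrors term for term the inductive characterization of $\overrightarrow{L}_\Gamma$ provided by Lemma~\ref{lem calcul lang droit gamma}, so the correctness proof is essentially a transcription argument.

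For part (1), consider the base case $w=\varepsilon$ first: the algorithm returns the boolean $(q\in F)$, and by Lemma~\ref{lem calcul lang droit gamma} we have $\varepsilon\in\overrightarrow{L}_\Gamma(q)$ if and only if $q\in F$, since every word contributed by a term $\{x\}\cdot\overrightarrow{L}_\Gamma(q')\cdot\{y\}$ has length at least $1$ because $(x,y)\neq(\varepsilon,\varepsilon)$. In the recursive case $w\neq\varepsilon$, the algorithm returns the disjunction, over all transitions $(q,(\alpha,\beta),q')\in\delta$ admitting a factorization $w=\alpha w'\beta$, of the recursive calls \textrm{IsInRightLanguage}($A$,$w'$,$q'$). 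By the recurrence hypothesis each such call returns $(w'\in\overrightarrow{L}_\Gamma(q'))$, so the disjunction evaluates to TRUE exactly when there is a transition $(q,(\alpha,\beta),q')$ and a factorization $w=\alpha w'\beta$ with $w'\in\overrightarrow{L}_\Gamma(q')$. A second application of Lemma~\ref{lem calcul lang droit gamma} shows this condition is precisely $w\in\overrightarrow{L}_\Gamma(q)$, which closes the recurrence.

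For part (2), I would simply unfold Algorithm~\ref{algo MBTEST}: it returns the disjunction over $i\in I$ of \textrm{IsInRightLanguage}($A$,$w$,$i$). By part (1) each term equals $(w\in\overrightarrow{L}_\Gamma(i))$, so the returned value is TRUE if and only if $w\in\bigcup_{i\in I}\overrightarrow{L}_\Gamma(i)$, which by Lemma~\ref{lem prop triv couple nfa} is exactly $L_\Gamma(A)$. The main obstacle, and really the only delicate point, is justifying that the recurrence in part (1) is well founded: one must observe that the labels in $\Sigma_\Gamma$ are never $(\varepsilon,\varepsilon)$, so in any factorization $w=\alpha w'\beta$ used by the algorithm at least one terminal symbol is stripped off, giving $|w'|<|w|$ and guaranteeing both termination of the recursion and applicability of the recurrence hypothesis. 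Everything else is routine bookkeeping matching the two branches of the algorithm to the two terms of Lemma~\ref{lem calcul lang droit gamma}.
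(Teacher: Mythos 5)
Your proof is correct and follows essentially the same route as the paper's: a recurrence on $|w|$ for part (1) that matches the two branches of Algorithm~\ref{algo RightLang} against Lemma~\ref{lem calcul lang droit gamma}, with the observation that $(\varepsilon,\varepsilon)\notin\Sigma_\Gamma$ guarantees $|w'|<|w|$, and part (2) obtained by unfolding the disjunction over $I$ and applying Lemma~\ref{lem prop triv couple nfa}. If anything, you are slightly more explicit than the paper on the base case and on the well-foundedness of the recursion, which is a welcome refinement rather than a deviation.
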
\begin{proof} 
  Let $w$ be a word in $\Gamma^*$.
  \begin{enumerate}
    \item Let us show by recurrence over the length of $w$ that the algorithm \textrm{IsInRightLanguage}($A$, $w$, $q$) returns $(w\in \overrightarrow{L}_\Gamma(q))$.  
  
  If $w=\varepsilon$, $P=TRUE$ $\Leftrightarrow$ $q\in F$ $\Leftrightarrow$ $\varepsilon  \in  \overrightarrow{L}_\Gamma(q)$.
  
  Let us suppose now that $|w|\geq 1$. Then $P=\bigvee_{(q,(\alpha,\beta),q')\in \delta\mid w=\alpha w' \beta}$ \textrm{IsInRightLanguage}($A$, $w'$, $q'$). If there is no transition $(q,(\alpha,\beta),q')\in \delta$, then trivially $w\notin \overrightarrow{L}_\Gamma(q)$. For any $(q,(\alpha,\beta),q')\in \delta$, let us notice that $(\alpha,\beta)\in\Sigma_\Gamma$. As a consequence, the length of any word $w'$ satisfying $w=\alpha w' \beta\ \wedge\ (q,(\alpha,\beta),q')\in \delta$ is strictly smaller than $|w|$. Let $w'$ be a word satisfying $w=\alpha w' \beta\ \wedge\ (q,(\alpha,\beta),q')\in \delta$. According to recurrence hypothesis, \textrm{IsInRightLanguage}($A$, $w'$, $q'$) returns $(w'\in \overrightarrow{L}_\Gamma(q'))$. Hence $P=\bigvee_{(q,(\alpha,\beta),q')\in \delta\mid w=\alpha w' \beta}$ ($w'\in \overrightarrow{L}_\Gamma(q')$). Finally, according to Lemma~\ref{lem calcul lang droit gamma}, $P=(w\in \overrightarrow{L}_\Gamma(q))$.
  
    \item Since $R=\bigvee_{i\in I}$ \textrm{IsInRightLanguage}($A$, $w$, $i$), it holds as a direct consequence that $R=\bigvee_{i\in I}(w\in \overrightarrow{L}_\Gamma(i))$. Hence, according to Lemma~\ref{lem prop triv couple nfa}, it holds that $R=(w\in L_\Gamma(A))$.
  \end{enumerate}
  
  \cqfd
\end{proof}

The following sections are devoted to hairpin completions and their two-sided residuals.  It turns out that hairpin completions are linear context-free languages. Hence, we show how to compute a couple NFA that recognizes a given hairpin completion.

\section{Hairpin Completion of a Language and its Residuals}\label{sec:hairpinComp}

Let $\Gamma$ be an alphabet. An \emph{involution} $\mathrm{f}$ over $\Gamma$ is a mapping from $\Gamma$ to $\Gamma$ satisfying for any symbol $a$ in $\Gamma$, $\mathrm{f}(\mathrm{f}(a))=a$. An \emph{anti-morphism} $\mu$ over $\Gamma^*$ is a mapping from $\Gamma^*$ to $\Gamma^*$ satisfying for any two words $u$ and $v$ in $\Gamma^*$ $\mu(u\cdot v)=\mu(v)\cdot \mu(u)$. Any mapping $\mathrm{g}$ from $\Gamma$ to $\Gamma$ can be extended as an anti-morphism over $\Gamma^*$ as follows: $\forall a\in\Gamma,$ $\forall w\in\Gamma^*$, $\mathrm{g}(\varepsilon)=\varepsilon$, $\mathrm{g}(a\cdot w)=\mathrm{g}(w)\cdot \mathrm{g}(a)$.

\begin{definition}
  Let $\Gamma$ be an alphabet and $\mathrm{H}$ be an anti-morphism over $\Gamma^*$. Let $L_1$ and $L_2$ be two languages over $\Gamma$. Let $k>0$ be an integer. The $(H,k)$-\emph{completion} of $L_1$ and $L_2$ is the language $\mathrm{H}_k(L_1,L_2)$ defined by:
  
  \centerline{
    $\mathrm{H}_k(L_1,L_2)$
  }
  
  \centerline{$=$}
  
  \centerline{$\{\alpha\beta\gamma\mathrm{H}(\beta)\mathrm{H}(\alpha)\mid \alpha,\beta,\gamma\in\Gamma^*\wedge (\alpha\beta\gamma\mathrm{H}(\beta)\in L_1 \vee \beta\gamma\mathrm{H}(\beta)\mathrm{H}(\alpha)\in L_2) \wedge |\beta|=k\}$.
  }
\end{definition}

The $(\mathrm{H},k)$-completion operator can be defined as the union of two unary operators $\overleftarrow{\mathrm{H}_k}$ and $\overrightarrow{\mathrm{H}_k}$.

\begin{definition}\label{def lang h left right}
  Let $\Gamma$ be an alphabet and $\mathrm{H}$ be an anti-morphism over $\Gamma^*$. Let $L$ be a language over $\Gamma$. Let $k>0$ be an integer. The right (resp. left) $(H,k)$-\emph{completion} of $L$ is the language $\overrightarrow{\mathrm{H}_k}(L)$ (resp. $\overleftarrow{\mathrm{H}_k}(L)$) defined by:
  
  \centerline{
    $\overrightarrow{\mathrm{H}_k}(L)=\{\alpha\beta\gamma\mathrm{H}(\beta)\mathrm{H}(\alpha)\mid\alpha,\beta,\gamma\in\Gamma^*\ \wedge\ \alpha\beta\gamma\mathrm{H}(\beta)\in L \wedge |\beta|=k\}$,
  }
  
  \centerline{
    $\overleftarrow{\mathrm{H}_k}(L)=\{\alpha\beta\gamma\mathrm{H}(\beta)\mathrm{H}(\alpha)\mid \alpha,\beta,\gamma\in\Gamma^*\ \wedge\ \beta\gamma\mathrm{H}(\beta)\mathrm{H}(\alpha)\in L \wedge |\beta|=k\}$.
  }
\end{definition}

\begin{lemma}\label{lem hkl1l2 op unaire}
  Let $\Gamma$ be an alphabet and $\mathrm{H}$ be an anti-morphism over $\Gamma^*$. Let $L_1$ and $L_2$ be two languages over $\Gamma$. Let $k>0$ be an integer. Then:
  
  \centerline{
    $\mathrm{H}_k(L_1,L_2)=\overrightarrow{\mathrm{H}_k}(L_1)\cup \overleftarrow{\mathrm{H}_k}(L_2)$.
  }
\end{lemma}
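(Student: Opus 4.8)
The plan is to prove the set equality by fixing an arbitrary word and showing that membership in the left-hand side is logically equivalent to membership in the right-hand side. The whole argument reduces to distributing the disjunction occurring in the membership predicate of $\mathrm{H}_k(L_1,L_2)$ over the set union $\overrightarrow{\mathrm{H}_k}(L_1)\cup\overleftarrow{\mathrm{H}_k}(L_2)$.

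First I would fix an arbitrary word $w\in\Gamma^*$ and unfold $w\in\mathrm{H}_k(L_1,L_2)$ according to the definition: this asserts the existence of a triple $(\alpha,\beta,\gamma)\in(\Gamma^*)^3$ such that $w=\alpha\beta\gamma\mathrm{H}(\beta)\mathrm{H}(\alpha)$, $|\beta|=k$, and $(\alpha\beta\gamma\mathrm{H}(\beta)\in L_1\ \vee\ \beta\gamma\mathrm{H}(\beta)\mathrm{H}(\alpha)\in L_2)$. The central observation is that the factorisation constraint $w=\alpha\beta\gamma\mathrm{H}(\beta)\mathrm{H}(\alpha)$ together with $|\beta|=k$ is exactly the constraint shared by the definitions of $\overrightarrow{\mathrm{H}_k}(L_1)$ and $\overleftarrow{\mathrm{H}_k}(L_2)$ in Definition~\ref{def lang h left right}; only the language-membership predicate attached to that triple differs between the two.

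Next I would invoke the logical equivalence $\exists t\,(R(t)\wedge(P(t)\vee Q(t)))\Leftrightarrow(\exists t\,(R(t)\wedge P(t)))\vee(\exists t\,(R(t)\wedge Q(t)))$, instantiated with $t=(\alpha,\beta,\gamma)$, with $R$ the shared factorisation-and-length constraint, with $P$ the predicate $\alpha\beta\gamma\mathrm{H}(\beta)\in L_1$, and with $Q$ the predicate $\beta\gamma\mathrm{H}(\beta)\mathrm{H}(\alpha)\in L_2$. By Definition~\ref{def lang h left right} the first disjunct is precisely $w\in\overrightarrow{\mathrm{H}_k}(L_1)$ and the second is precisely $w\in\overleftarrow{\mathrm{H}_k}(L_2)$, so $w\in\mathrm{H}_k(L_1,L_2)$ if and only if $w\in\overrightarrow{\mathrm{H}_k}(L_1)\cup\overleftarrow{\mathrm{H}_k}(L_2)$, which is the claim.

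There is no real obstacle here: the statement is a direct consequence of distributing an existential quantifier over a disjunction. The only point worth stating explicitly is that the union on the right-hand side permits the two disjuncts to use \emph{different} witnessing triples $(\alpha,\beta,\gamma)$, which is exactly what the distribution licenses. In the forward direction a single witness suffices for whichever disjunct holds, while in the backward direction each member of the union supplies its own witness, and both produce a word of the common form $\alpha\beta\gamma\mathrm{H}(\beta)\mathrm{H}(\alpha)$; hence the equality holds.
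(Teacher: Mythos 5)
Your proof is correct and follows essentially the same route as the paper's: both fix an arbitrary word, unfold the definition of $\mathrm{H}_k(L_1,L_2)$, and distribute the disjunction $(\alpha\beta\gamma\mathrm{H}(\beta)\in L_1\ \vee\ \beta\gamma\mathrm{H}(\beta)\mathrm{H}(\alpha)\in L_2)$ over the shared factorisation constraint to split membership into the two completions of Definition~\ref{def lang h left right}. Your explicit appeal to the distribution of the existential quantifier over disjunction (with possibly different witnessing triples on each side) is merely a more careful statement of the step the paper performs implicitly in its equivalence chain.
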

\begin{proof}
  Let $w$ be a word in $\Gamma^*$.
  
  \centerline{
    \begin{tabular}{l@{\ }l}
      $w\in \mathrm{H}_k(L_1,L_2)$ & $\Leftrightarrow$ 
        $
        \left\{
          \begin{array}{l}
            w=\alpha\beta\gamma \mathrm{H}(\beta)\mathrm{H}(\alpha)\\
            \wedge (\alpha\beta\gamma \mathrm{H}(\beta)\in L_1 \vee \beta\gamma \mathrm{H}(\beta)\mathrm{H}(\alpha)\in L_2)\\
            \wedge |\beta|=k\\
          \end{array}
        \right.$\\  
      & $\Leftrightarrow$ 
        $
        \left\{
          \begin{array}{l}
            (w=\alpha\beta\gamma \mathrm{H}(\beta)\mathrm{H}(\alpha) \wedge \alpha\beta\gamma \mathrm{H}(\beta)\in L_1\wedge\ |\beta|=k)\\
            \vee (w=\alpha\beta\gamma \mathrm{H}(\beta)\mathrm{H}(\alpha) \wedge \beta\gamma \mathrm{H}(\beta)\mathrm{H}(\alpha)\in L_2\wedge\ |\beta|=k)\\            
          \end{array}
        \right.$\\
      & $\Leftrightarrow$ $w\in \overrightarrow{\mathrm{H}_k}(L_1) \vee w\in \overleftarrow{\mathrm{H}_k}(L_2)$ $\Leftrightarrow$ $w\in \overrightarrow{\mathrm{H}_k}(L_1)\cup \overleftarrow{\mathrm{H}_k}(L_2)$.\\
    \end{tabular}
  }
    
  \cqfd
\end{proof}

When $\mathrm{H}$ is an involution over $\Gamma$, the $(\mathrm{H},k)$-completion of $L_1$ and $L_2$ is called a hairpin completion~\cite{CMM06}.
Even in the case where $\mathrm{H}$ is not an involution, we will say that languages such as $\overrightarrow{\mathrm{H}_k}(L)$, $\overleftarrow{\mathrm{H}_k}(L)$ or $\mathrm{H}_k(L,L')$ are hairpin completed languages and we will speak of hairpin completions.
We first establish formulae in this general setting in order to compute the two-sided residuals of the completed language of an arbitrary language. The following operator is useful.

\begin{definition}\label{def lang hprim}
  Let $\Gamma$ be an alphabet and $\mathrm{H}$ be an anti-morphism over $\Gamma^*$. Let $L$ be a language over an alphabet $\Gamma$. Let $k>0$ be an integer. The language $\mathrm{H}'_k(L)$ is defined by: $\mathrm{H}'_k(L)=\{\beta\gamma\mathrm{H}(\beta)\in L\mid\beta,\gamma\in\Gamma^*\ \wedge\ |\beta|=k\}.$
\end{definition}

  We split the computation of two-sided residuals of a completed language w.r.t. $(x,y)$ couples: the first case is when both $x$ and $y$ are symbols.

\begin{lemma}\label{lem mot ds lang h couple deb fin}
  Let $\Gamma$ be an alphabet and $\mathrm{H}$ be an anti-morphism over $\Gamma^*$. Let $L$ be a language over an alphabet $\Gamma$. Let $k>0$ be an integer.  Let $L'$ be a language in $\{\overleftarrow{\mathrm{H}_k}(L),\overrightarrow{\mathrm{H}_k}(L),\mathrm{H}'_k(L)\}$. Let $w$ a word in $\Gamma^*$. Then:
  
  \centerline{
    $w\in L'$ $\Rightarrow$ $|w|\geq k\ \wedge\ \exists a\in\Gamma,\exists w'\in\Gamma^*, w=aw'\mathrm{H}(a)$.
  }
\end{lemma}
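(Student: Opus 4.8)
The plan is to unfold the definition of $L'$ in each of the three cases and then read off the first and last letters directly, using only the anti-morphism identity $\mathrm{H}(uv)=\mathrm{H}(v)\mathrm{H}(u)$ together with $\mathrm{H}(\varepsilon)=\varepsilon$. First I would observe that in every case $w$ admits a factorization in which $\beta$ (with $|\beta|=k$) occurs as a factor: for $L'\in\{\overrightarrow{\mathrm{H}_k}(L),\overleftarrow{\mathrm{H}_k}(L)\}$ we have $w=\alpha\beta\gamma\mathrm{H}(\beta)\mathrm{H}(\alpha)$ for some $\alpha,\beta,\gamma\in\Gamma^*$ with $|\beta|=k$, and for $L'=\mathrm{H}'_k(L)$ we have $w=\beta\gamma\mathrm{H}(\beta)$ with $|\beta|=k$. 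Since $\beta$ is a factor of $w$, the length bound $|w|\geq|\beta|=k$ is immediate in all three cases, which settles the first conjunct of the conclusion (and notably does not require $\mathrm{H}$ to preserve length).

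For the second conjunct I would split according to whether the outer factor is empty. Consider first $w=\alpha\beta\gamma\mathrm{H}(\beta)\mathrm{H}(\alpha)$ with $\alpha\neq\varepsilon$: writing $\alpha=a\alpha'$ with $a\in\Gamma$ and $\alpha'\in\Gamma^*$, the anti-morphism property gives $\mathrm{H}(\alpha)=\mathrm{H}(\alpha')\mathrm{H}(a)$, hence $w=a\cdot(\alpha'\beta\gamma\mathrm{H}(\beta)\mathrm{H}(\alpha'))\cdot\mathrm{H}(a)$, which is exactly of the required shape $w=aw'\mathrm{H}(a)$ with $w'=\alpha'\beta\gamma\mathrm{H}(\beta)\mathrm{H}(\alpha')$. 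If instead $\alpha=\varepsilon$ (and this is also precisely the situation for $L'=\mathrm{H}'_k(L)$), then $\mathrm{H}(\alpha)=\varepsilon$ and $w$ reduces to $\beta\gamma\mathrm{H}(\beta)$; here I would use that $k>0$ forces $\beta\neq\varepsilon$, so I may write $\beta=b\beta'$ with $b\in\Gamma$, and the anti-morphism property again yields $\mathrm{H}(\beta)=\mathrm{H}(\beta')\mathrm{H}(b)$, whence $w=b\cdot(\beta'\gamma\mathrm{H}(\beta'))\cdot\mathrm{H}(b)$, of the required form with $a=b$ and $w'=\beta'\gamma\mathrm{H}(\beta')$.

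Note that the two definitions $\overrightarrow{\mathrm{H}_k}(L)$ and $\overleftarrow{\mathrm{H}_k}(L)$ impose the same outer shape $\alpha\beta\gamma\mathrm{H}(\beta)\mathrm{H}(\alpha)$ on $w$ and differ only in the membership condition placed on $L$, which plays no role here; so a single argument covers both. The only genuinely delicate point — and it is the reason the hypothesis $k>0$ is needed — is the degenerate case $\alpha=\varepsilon$: there the outer letter $a$ cannot be extracted from $\alpha$, and one must instead obtain it from $\beta$, which is guaranteed nonempty exactly because $|\beta|=k\geq 1$. Everything else is a direct rewriting using $\mathrm{H}(uv)=\mathrm{H}(v)\mathrm{H}(u)$, so no further case analysis or auxiliary lemma is required.
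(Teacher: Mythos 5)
Your proof is correct and takes essentially the same route as the paper, whose entire argument is that the lemma is ``trivially deduced'' from Definitions~\ref{def lang h left right} and~\ref{def lang hprim}; your write-up is exactly that unfolding, using the anti-morphism identity $\mathrm{H}(uv)=\mathrm{H}(v)\mathrm{H}(u)$ to peel the outer letter from $\alpha$, or from $\beta$ (nonempty precisely because $k>0$) when $\alpha=\varepsilon$. Nothing is missing.
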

\begin{proof}
  Trivially deduced from Definition~\ref{def lang h left right} and Definition~\ref{def lang hprim}.
  \cqfd
\end{proof}

\begin{corollary}\label{cor lang fonction der}
  Let $\Gamma$ be an alphabet and $\mathrm{H}$ be an anti-morphism over $\Gamma^*$. Let $L$ be a language over an alphabet $\Gamma$. Let $k>0$ be an integer. Let $L'$ be a language in $\{\overleftarrow{\mathrm{H}_k}(L),\overrightarrow{\mathrm{H}_k}(L),\mathrm{H}'_k(L)\}$. Then: $L'=\bigcup_{x\in \Gamma} \{x\}\cdot ((x,\mathrm{H}(x))^{-1}(L'))\cdot \{\mathrm{H}(x)\}$.
\end{corollary}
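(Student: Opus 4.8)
The plan is to prove the claimed identity by a straightforward double inclusion, using the structural constraint on the words of $L'$ provided by the preceding Lemma~\ref{lem mot ds lang h couple deb fin}. The key observation is to unfold the right-hand side using Definition~\ref{def two side quot}: for a fixed $x\in\Gamma$, the two-sided residual satisfies $(x,\mathrm{H}(x))^{-1}(L')=\{w'\in\Gamma^*\mid xw'\mathrm{H}(x)\in L'\}$, so that $\{x\}\cdot((x,\mathrm{H}(x))^{-1}(L'))\cdot\{\mathrm{H}(x)\}=\{xw'\mathrm{H}(x)\mid xw'\mathrm{H}(x)\in L'\}$. In other words, each term of the union is exactly the set of words of $L'$ that begin with $x$ and end with $\mathrm{H}(x)$. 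Once this reformulation is in place, both inclusions become transparent.

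First I would establish the inclusion $\supseteq$, which does not even require the lemma: any element of $\{x\}\cdot((x,\mathrm{H}(x))^{-1}(L'))\cdot\{\mathrm{H}(x)\}$ has the form $xw'\mathrm{H}(x)$ with $xw'\mathrm{H}(x)\in L'$ by the definition of the two-sided residual, hence it already belongs to $L'$; taking the union over $x\in\Gamma$ preserves this, so the whole right-hand side is contained in $L'$.

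The reverse inclusion $\subseteq$ is where Lemma~\ref{lem mot ds lang h couple deb fin} does the real work, and it is the only step that is not purely formal. Let $w\in L'$. The lemma guarantees the existence of a symbol $a\in\Gamma$ and a word $w'\in\Gamma^*$ with $w=aw'\mathrm{H}(a)$. Since $w=aw'\mathrm{H}(a)\in L'$, the definition of the two-sided residual gives $w'\in(a,\mathrm{H}(a))^{-1}(L')$, whence $w=aw'\mathrm{H}(a)\in\{a\}\cdot((a,\mathrm{H}(a))^{-1}(L'))\cdot\{\mathrm{H}(a)\}$. This is precisely the term of the union indexed by $x=a$, so $w$ lies in the right-hand side.

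Combining the two inclusions yields the equality. The only subtlety — and hence the main obstacle, though a mild one — is that the decomposition of an arbitrary $w\in L'$ into the shape $aw'\mathrm{H}(a)$ is not something one can read off directly from the algebraic definition of $L'$; it is exactly the content of Lemma~\ref{lem mot ds lang h couple deb fin}, which is why that lemma is isolated beforehand. Note also that the statement covers uniformly all three candidates $\overleftarrow{\mathrm{H}_k}(L)$, $\overrightarrow{\mathrm{H}_k}(L)$ and $\mathrm{H}'_k(L)$, but no case distinction is needed here, since the lemma already treats the three languages simultaneously.
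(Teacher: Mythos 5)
Your proof is correct and follows exactly the route the paper intends: the corollary is stated there without proof as an immediate consequence of Lemma~\ref{lem mot ds lang h couple deb fin} combined with Definition~\ref{def two side quot}, which is precisely your two-inclusion argument (the easy inclusion $\supseteq$ from the definition of the two-sided residual, and the inclusion $\subseteq$ from the decomposition $w=aw'\mathrm{H}(a)$ supplied by the lemma). Nothing is missing; your remark that no case distinction among $\overleftarrow{\mathrm{H}_k}(L)$, $\overrightarrow{\mathrm{H}_k}(L)$ and $\mathrm{H}'_k(L)$ is needed is also accurate, since the lemma handles the three languages uniformly.
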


\begin{proposition}\label{prop lang quot hairpin}
  Let $\Gamma$ be an alphabet and $\mathrm{H}$ be an anti-morphism over $\Gamma^*$. Let $L$ be a language over $\Gamma$. Let $(x,y)$ a couple of symbols in $\Gamma\times\Gamma$. Let $k>0$ be an integer. Then:
  
  \centerline{
    $(x,y)^{-1}(\overrightarrow{\mathrm{H}_k}(L))=
      \left\{
        \begin{array}{l@{\ }l}
          \emptyset & \text{ if } y\neq \mathrm{H}(x),\\
          \overrightarrow{\mathrm{H}_k}(x^{-1}(L))\cup (x,y)^{-1}(L)  & \text{ if } y = \mathrm{H}(x)\ \wedge\ k=1,\\
          \overrightarrow{\mathrm{H}_k}(x^{-1}(L))\cup\mathrm{H}'_{k-1}((x,y)^{-1}(L)) & \text{ otherwise,}\\
        \end{array}
      \right.$
  }  
  
  \centerline{
    $(x,y)^{-1}(\overleftarrow{\mathrm{H}_k}(L))=
      \left\{
        \begin{array}{l@{\ }l}
          \emptyset & \text{ if } y \neq \mathrm{H}(x),\\
          \overleftarrow{\mathrm{H}_k}((L)y^{-1})\cup (x,y)^{-1}(L) & \text{ if } y = \mathrm{H}(x)\ \wedge\ k=1,\\
          \overleftarrow{\mathrm{H}_k}((L)y^{-1})\cup\mathrm{H}'_{k-1}((x,y)^{-1}(L)) & \text{ otherwise,}\\
        \end{array}
      \right.$
  }  
  
  \centerline{
    $(x,y)^{-1}(\mathrm{H}'_k(L))=
      \left\{
        \begin{array}{l@{\ }l}
          \emptyset & \text{ if } y \neq \mathrm{H}(x),\\
          \mathrm{H}'_{k-1}((x,y)^{-1}(L)) & \text{ if } y = \mathrm{H}(x)\ \wedge\ k>1,\\
          (x,y)^{-1}(L) & \text{ otherwise.}\\
        \end{array}
      \right.$
  }  
  
\end{proposition}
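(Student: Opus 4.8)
The plan is to prove all three identities by the same direct membership argument: fix a word $w\in\Gamma^*$, analyse exactly when $xwy$ lies in the language concerned, and read off the defining decompositions.

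First I would dispose of the case $y\neq\mathrm{H}(x)$. By Lemma~\ref{lem mot ds lang h couple deb fin}, every word of $\overrightarrow{\mathrm{H}_k}(L)$, $\overleftarrow{\mathrm{H}_k}(L)$ or $\mathrm{H}'_k(L)$ has the shape $a\,w'\,\mathrm{H}(a)$ for some $a\in\Gamma$; hence if $xwy$ belongs to one of them then necessarily $x=a$ and $y=\mathrm{H}(a)=\mathrm{H}(x)$. Contrapositively, $y\neq\mathrm{H}(x)$ forces $(x,y)^{-1}(L')=\emptyset$, which is the first line of each formula. From now on I assume $y=\mathrm{H}(x)$.

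Then for $\overrightarrow{\mathrm{H}_k}(L)$ I would rewrite $w\in(x,y)^{-1}(\overrightarrow{\mathrm{H}_k}(L))$ as: there exist $\alpha,\beta,\gamma$ with $|\beta|=k$, $\alpha\beta\gamma\mathrm{H}(\beta)\in L$, and $xwy=\alpha\beta\gamma\mathrm{H}(\beta)\mathrm{H}(\alpha)$, and split on whether $\alpha$ is empty. If $\alpha\neq\varepsilon$, its first letter is $x$, say $\alpha=x\alpha'$; since $\mathrm{H}$ is an anti-morphism, $\mathrm{H}(\alpha)=\mathrm{H}(\alpha')\mathrm{H}(x)=\mathrm{H}(\alpha')y$, so cancelling the outer $x$ and $y$ leaves $w=\alpha'\beta\gamma\mathrm{H}(\beta)\mathrm{H}(\alpha')$ with $\alpha'\beta\gamma\mathrm{H}(\beta)\in x^{-1}(L)$, that is, $w\in\overrightarrow{\mathrm{H}_k}(x^{-1}(L))$. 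If $\alpha=\varepsilon$, the outer letters come from $\beta$ and $\mathrm{H}(\beta)$ instead: when $k=1$, the single letter $\beta=x$ is fully consumed and what remains is an unconstrained $w$ with $xwy\in L$, i.e. $w\in(x,y)^{-1}(L)$; when $k>1$, stripping $x$ off the front of $\beta$ and $y$ off the back of $\mathrm{H}(\beta)$ shortens $\beta$ to a word of length $k-1$, leaving $w\in\mathrm{H}'_{k-1}((x,y)^{-1}(L))$. These subcases are exhaustive, so the residual equals the announced union. The computation for $\overleftarrow{\mathrm{H}_k}(L)$ is the mirror image, the only change being that cancelling $y$ from $\mathrm{H}(\alpha)$ turns the membership $\beta\gamma\mathrm{H}(\beta)\mathrm{H}(\alpha)\in L$ into $\beta\gamma\mathrm{H}(\beta)\mathrm{H}(\alpha')\in(L)y^{-1}$, yielding $\overleftarrow{\mathrm{H}_k}((L)y^{-1})$. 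The case of $\mathrm{H}'_k(L)$ is simpler still: there is no $\alpha$, so only the $\beta$-peeling survives, giving $(x,y)^{-1}(L)$ when $k=1$ and $\mathrm{H}'_{k-1}((x,y)^{-1}(L))$ when $k>1$.

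I expect the only delicate point to be the bookkeeping of the anti-morphism when a letter is removed from the front: one must consistently use $\mathrm{H}(x\delta)=\mathrm{H}(\delta)\mathrm{H}(x)$ so that stripping $x$ from the prefix matches exactly stripping $y=\mathrm{H}(x)$ from the suffix, and one must keep straight whether the stripped letter belongs to $\alpha$ (which reduces $L$ by a one-sided residual while preserving $|\beta|=k$) or to $\beta$ (which lowers $k$ and, for $k>1$, re-exposes an $\mathrm{H}'$ structure). Each individual equivalence is then routine, and since every decomposition of $xwy$ falls into exactly one subcase, the set equalities follow in both directions simultaneously.
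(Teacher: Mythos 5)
Your proof is correct and takes essentially the same route as the paper's: the case $y\neq\mathrm{H}(x)$ is dispatched via Lemma~\ref{lem mot ds lang h couple deb fin}, and for $y=\mathrm{H}(x)$ the paper's chain of equivalences performs exactly your case split on whether the outer letters $x$ and $y=\mathrm{H}(x)$ are absorbed by $\alpha$ (yielding $\overrightarrow{\mathrm{H}_k}(x^{-1}(L))$, resp. $\overleftarrow{\mathrm{H}_k}((L)y^{-1})$, with $|\beta|=k$ preserved) or by $\beta$ (lowering $k$, with the same $k=1$ versus $k>1$ sub-split), using the anti-morphism identity $\mathrm{H}(x\delta)=\mathrm{H}(\delta)\mathrm{H}(x)$ just as you do. No gaps.
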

\begin{proof}
  Let $w$ be  a word in $\Gamma^*$. According to Lemma~\ref{lem mot ds lang h couple deb fin}, any word $u$ in $\overrightarrow{\mathrm{H}_k}(L)\cup \overleftarrow{\mathrm{H}_k}(x^{-1}(L))\cup \mathrm{H}'_k(L)$ can be split up into $avb$ with $b=\mathrm{H}(a)$. As a consequence, whenever $y\neq \mathrm{H}(x)$, it holds that $(x,y)^{-1}(\overrightarrow{\mathrm{H}_k}(L))=(x,y)^{-1}(\overleftarrow{\mathrm{H}_k}(L))=(x,y)^{-1}(\mathrm{H}'_k(L))=\emptyset$. Let us suppose now that $y= \mathrm{H}(x)$.
  
  \textbf{(I)} Let us define the languages $L_1$ and $L_2$ by:
   
   \centerline{$L_1=(x,y)^{-1}(\overrightarrow{\mathrm{H}_k}(L))$,}
   
   \centerline{$L_2=
    \left\{
      \begin{array}{l@{\ }l}
        \overrightarrow{\mathrm{H}_k}(x^{-1}(L))\cup\mathrm{H}'_{k-1}((x,y)^{-1}(L)) & \text{ if }k>1,\\
        \overrightarrow{\mathrm{H}_k}(x^{-1}(L))\cup (x,y)^{-1}(L)  & \text{ otherwise.}\\
      \end{array}
    \right.$}
    
    Then:
  
  \centerline{
    \begin{tabular}{l@{\ }l}
      $w\in L_1$ & $\Leftrightarrow$ $xwy \in \overrightarrow{\mathrm{H}_k}(L)$\\
      & $\Leftrightarrow$ 
  $
  \left\{
   \begin{array}{l}
    (
      xwy=x\alpha\beta\gamma\mathrm{H}(\beta)\mathrm{H}(\alpha)y 
      \wedge     
      y=\mathrm{H}(x)
      \wedge     
      x\alpha\beta\gamma\mathrm{H}(\beta) \in L
      \wedge 
      |\beta|=k    
    )\\
    \vee 
    (
      xwy=x\beta\gamma\mathrm{H}(\beta)y 
      \wedge     
      y=\mathrm{H}(x)
      \wedge     
      x\beta\gamma\mathrm{H}(\beta)y \in L
      \wedge 
      |\beta|=k-1 
    )\\
   \end{array}
   \right.
  $\\
  &
  $\Leftrightarrow$ 
  $
  \left\{
    \begin{array}{l}
    (
      w=\alpha\beta\gamma\mathrm{H}(\beta)\mathrm{H}(\alpha)
      \wedge     
      y=\mathrm{H}(x)
      \wedge     
      \alpha\beta\gamma\mathrm{H}(\beta) \in x^{-1}(L)
      \wedge 
      |\beta|=k    
    )\\
    \vee 
    (
      w=\beta\gamma\mathrm{H}(\beta) 
      \wedge     
      y=\mathrm{H}(x)
      \wedge     
      \beta\gamma\mathrm{H}(\beta) \in (x,y)^{-1}(L)
      \wedge 
      |\beta|=k-1 
    )\\
  \end{array}
  \right.
  $\\
  &
  $\Leftrightarrow$ 
  $
  \left\{
   \begin{array}{l}
    (
      w=\alpha\beta\gamma\mathrm{H}(\beta)\mathrm{H}(\alpha)
      \wedge     
      y=\mathrm{H}(x)
      \wedge     
      w\in \overrightarrow{\mathrm{H}_k}(x^{-1}(L))
    )\\
    \vee 
    (
      w=\beta\gamma\mathrm{H}(\beta) 
      \wedge     
      y=\mathrm{H}(x)
      \wedge     
      w \in \mathrm{H}'_{k-1}((x,y)^{-1}(L))
      \wedge 
      k\neq 1
    )\\
    \vee 
    (
      w= \gamma  
      \wedge     
      y=\mathrm{H}(x)
      \wedge     
      w \in  (x,y)^{-1}(L)
      \wedge 
      k= 1
    )\\
  \end{array}
  \right.
  $\\
  &
  $\Leftrightarrow$  $w\in L_2$.\\
  \end{tabular}}
  
  \textbf{(II)} Let us set:
  
  \centerline{ $L_1=(x,y)^{-1}(\overleftarrow{\mathrm{H}_k}(L))$,}
  
  \centerline{$L_2=
    \left\{
      \begin{array}{l@{\ }l}
        \overleftarrow{\mathrm{H}_k}(x^{-1}(L))\cup\mathrm{H}'_{k-1}((x,y)^{-1}(L)) & \text{ if }k>1,\\
        \overleftarrow{\mathrm{H}_k}(x^{-1}(L))\cup (x,y)^{-1}(L)  & \text{ otherwise.}\\
      \end{array}
    \right.$}
    
  Then
  
  \centerline{
    \begin{tabular}{l@{\ }l} 
      $w\in L_1$ & $\Leftrightarrow$ $xwy \in \overleftarrow{\mathrm{H}_k}(L)$\\
  & 
  $\Leftrightarrow$ 
  $
  \left\{
    \begin{array}{l}
    (
      xwy=x\alpha\beta\gamma\mathrm{H}(\beta)\mathrm{H}(\alpha)y 
      \wedge     
      y=\mathrm{H}(x)
      \wedge     
      \beta\gamma\mathrm{H}(\beta)\mathrm{H}(\alpha)y \in L
      \wedge 
      |\beta|=k    
    )\\
    \vee 
    (
      xwy=x\beta\gamma\mathrm{H}(\beta)y 
      \wedge     
      y=\mathrm{H}(x)
      \wedge     
      x\beta\gamma\mathrm{H}(\beta)y \in L
      \wedge 
      |\beta|=k-1 
    )\\
  \end{array}
  \right.
  $\\
  &  
  $\Leftrightarrow$ 
  $
  \left\{
   \begin{array}{l}
    (
      w=\alpha\beta\gamma\mathrm{H}(\beta)\mathrm{H}(\alpha)
      \wedge     
      y=\mathrm{H}(x)
      \wedge     
      \beta\gamma\mathrm{H}(\beta)\mathrm{H}(\alpha) \in (L)y^{-1}
      \wedge 
      |\beta|=k    
    )\\
    \vee 
    (
      w=\beta\gamma\mathrm{H}(\beta) 
      \wedge     
      y=\mathrm{H}(x)
      \wedge     
      \beta\gamma\mathrm{H}(\beta) \in (x,y)^{-1}(L)
      \wedge 
      |\beta|=k-1 
    )\\
  \end{array}
  \right.
  $ \\
  &
  $\Leftrightarrow$ 
  $\left\{
   \begin{array}{l}
    (
      w=\alpha\beta\gamma\mathrm{H}(\beta)\mathrm{H}(\alpha)
      \wedge     
      y=\mathrm{H}(x)
      \wedge     
      w\in \overleftarrow{\mathrm{H}_k}((L)y^{-1})
    )\\
    \vee 
    (
      w=\beta\gamma\mathrm{H}(\beta) 
      \wedge     
      y=\mathrm{H}(x)
      \wedge     
      w \in \mathrm{H}'_{k-1}((x,y)^{-1}(L))
      \wedge     
      k\neq 1
    )\\
    \vee 
    (
      w=\gamma
      \wedge     
      y=\mathrm{H}(x)
      \wedge     
      w \in (x,y)^{-1}(L)
      \wedge     
      k= 1
    )\\
   \end{array}
   \right.
  $\\
  &
  $\Leftrightarrow$  $w\in L_2$.\\
  \end{tabular}
  }

  \textbf{(III)} Let us set:
  
  \centerline{$L_1=(x,y)^{-1}(\mathrm{H}'_k(L))$,}
  
  \centerline{$L_2=\mathrm{H}'_{k-1}((x,y)^{-1}(L))$,}
  
  \centerline{$L_3=(x,y)^{-1}(L)$.}
  
  Then:
  
  \centerline{
    \begin{tabular}{l@{\ }l}
      $w\in L_1$ & $\Leftrightarrow$ $xwy \in \mathrm{H}'_k(L)$\\
      &
      $\Leftrightarrow$ 
  $\left\{
    \begin{array}{l}
      xwy=x\beta\gamma\mathrm{H}(\beta)y \\
      \wedge     
      y=\mathrm{H}(x)\\
      \wedge     
      x\beta\gamma\mathrm{H}(\beta)y \in L\\
      \wedge 
      |\beta|=k-1 \\
    \end{array}
  \right.
  $\\
  &
  $\Leftrightarrow$ 
  $\left\{
    \begin{array}{l}
      w=\beta\gamma\mathrm{H}(\beta)\\
      \wedge     
      y=\mathrm{H}(x)\\
      \wedge     
      \beta\gamma\mathrm{H}(\beta) \in (x,y)^{-1}(L)\\
      \wedge 
      |\beta|=k-1\\
    \end{array}
  \right. 
  $\\
  &
  $\Leftrightarrow$ 
  $\left\{
    \begin{array}{l}
    (
      w=\beta\gamma\mathrm{H}(\beta)
      \wedge     
      y=\mathrm{H}(x)
      \wedge     
      w \in \mathrm{H}'_{k-1}((x,y)^{-1}(L))
      \wedge 
      k>1 
    )\\
    \vee
    (
      w \in (x,y)^{-1}(L)
      \wedge 
      k=1 
    )\\
   \end{array}
   \right.
  $\\
  &
  $\Leftrightarrow$ 
  $\left\{
    \begin{array}{l}
    (
      w\in L_2
      \wedge 
      k>1 
    )\\
    \vee
    (
      w \in L_3
      \wedge 
      k=1 
    )\\
   \end{array}
   \right.
  $
  \end{tabular}
  }
  
  \cqfd
\end{proof}

The problem of two-sided residuals of an hairpin completion w.r.t. couples $(x,y)$ with either $x$ or $y$ equal to $\varepsilon$ is that they add one catenation that has to be memorized. It can be checked that this may lead to infinite sets of two-sided residuals.

\begin{proposition}\label{prop deriv x epsilon}
  Let $\Gamma$ be an alphabet and $\mathrm{H}$ be an anti-morphism over $\Gamma^*$. Let $L$ be a language over an alphabet $\Gamma$. Let $k>0$ be an integer. Let $L'$ be a language in $\{\overleftarrow{\mathrm{H}_k}(L),\overrightarrow{\mathrm{H}_k}(L),\mathrm{H}'_k(L)\}$. Let $x$ be a symbol in $\Gamma$. Then:
  
  \centerline{
    $(x,\varepsilon)^{-1}(L')= 
          (x,\mathrm{H}(x))^{-1}(L')\cdot \{\mathrm{H}(x)\}$,
  }
  
  \centerline{
    $(\varepsilon,x)^{-1}(L')=\bigcup_{z\in\Gamma \mid \mathrm{H}(z)=x} \{z\}\cdot (z,x)^{-1}(L')$.
  }  
\end{proposition}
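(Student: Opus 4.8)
The plan is to fix an arbitrary word $w\in\Gamma^*$ and prove, for each identity, the membership equivalence $w\in\text{LHS}\Leftrightarrow w\in\text{RHS}$ by unfolding everything through Definition~\ref{def two side quot}. The single structural fact that carries the argument is Lemma~\ref{lem mot ds lang h couple deb fin}: every word of $L'$ factorizes as $a\,u\,\mathrm{H}(a)$ with $a\in\Gamma$ and $u\in\Gamma^*$. Since that lemma applies uniformly to $\overleftarrow{\mathrm{H}_k}(L)$, $\overrightarrow{\mathrm{H}_k}(L)$ and $\mathrm{H}'_k(L)$, I would never distinguish the three possible choices of $L'$: both identities are argued once, uniformly in $L'$.

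For the first identity I would start from $w\in(x,\varepsilon)^{-1}(L')$, which by definition means $xw\in L'$. Applying Lemma~\ref{lem mot ds lang h couple deb fin} to $xw$ gives a factorization $xw=a\,u\,\mathrm{H}(a)$; comparing leading letters forces $a=x$, so $xw=x\,u\,\mathrm{H}(x)$ and, cancelling the leading $x$, $w=u\,\mathrm{H}(x)$. The membership $x\,u\,\mathrm{H}(x)\in L'$ is exactly $u\in(x,\mathrm{H}(x))^{-1}(L')$, whence $w=u\,\mathrm{H}(x)\in(x,\mathrm{H}(x))^{-1}(L')\cdot\{\mathrm{H}(x)\}$. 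The reverse inclusion needs no structural input: if $w=u\,\mathrm{H}(x)$ with $x\,u\,\mathrm{H}(x)\in L'$, then $xw=x\,u\,\mathrm{H}(x)\in L'$, i.e. $w\in(x,\varepsilon)^{-1}(L')$.

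For the second identity I would proceed symmetrically, starting from $w\in(\varepsilon,x)^{-1}(L')$, i.e. $wx\in L'$, and applying Lemma~\ref{lem mot ds lang h couple deb fin} to get $wx=a\,u\,\mathrm{H}(a)$. This time the constraint comes from the \emph{trailing} letter: since $wx$ ends in $x$ and the factorization ends in $\mathrm{H}(a)$, one reads off $\mathrm{H}(a)=x$, so $wx=a\,u\,x$, $w=a\,u$, and the leading letter $a$ is precisely a solution $z$ of $\mathrm{H}(z)=x$. Then $a\,u\,x\in L'$ gives $u\in(a,x)^{-1}(L')$, so $w=a\,u\in\{a\}\cdot(a,x)^{-1}(L')$ with $a$ ranging over the preimages of $x$ under $\mathrm{H}$, which is exactly the displayed union. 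The reverse inclusion is again direct: any $w=z\,u$ with $z\,u\,x\in L'$ satisfies $wx\in L'$.

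The hard part will be the trailing-letter step of the second identity. It implicitly uses that $\mathrm{H}$ sends each letter to a \emph{letter}, so that the final block $\mathrm{H}(a)$ is a single symbol and the equation $\mathrm{H}(a)=x$ is meaningful; I would make this hypothesis explicit (it holds in the involution case that motivates the paper). This step is also what forces a \emph{union} on the right-hand side: whereas the leading letter of $xw$ determines $a=x$ uniquely in the first identity, the condition $\mathrm{H}(a)=x$ may be satisfied by several letters $a$, each contributing a term $\{z\}\cdot(z,x)^{-1}(L')$. Everything else is routine bookkeeping with Definition~\ref{def two side quot}.
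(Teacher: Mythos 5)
Your proof is correct and is essentially the paper's own argument: the paper deduces the proposition in one line from Lemma~\ref{lem 2sidequot eq quot 2side} and Corollary~\ref{cor lang fonction der}, and that corollary is precisely the factorization $L'=\bigcup_{z\in\Gamma}\{z\}\cdot((z,\mathrm{H}(z))^{-1}(L'))\cdot\{\mathrm{H}(z)\}$ that you re-derive element-wise from Lemma~\ref{lem mot ds lang h couple deb fin}. Your caveat that $\mathrm{H}$ must send letters to letters is implicit throughout the paper (its anti-morphisms are extensions of mappings from $\Gamma$ to $\Gamma$, as in the examples), so making it explicit is sound but not a departure from the paper's reasoning.
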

\begin{proof}
  Directly deduced from Lemma~\ref{lem 2sidequot eq quot 2side} and from Corollary~\ref{cor lang fonction der}.
  \cqfd
\end{proof}

  Let $L$ be a language over an alphabet $\Gamma$. The set $\mathcal{R}_L$ \emph{of two-sided residuals} of $L$ is defined by: $\mathcal{R}_L=\bigcup_{k\geq 1} \mathcal{R}^k_L$, where
  
  \centerline{
    $\mathcal{R}^k_L=
      \left\{
        \begin{array}{l@{\ }l}
          \{(x,y)^{-1}(L)\mid (x,y)\in\Sigma_\Gamma\} & \text{ if } k=1,\\
          \{(x,y)^{-1}(L')\mid (x,y)\in\Sigma_\Gamma \wedge\ L'\in \mathcal{R}^{k-1}_L\} & \text{ otherwise.}\\
        \end{array}
      \right.
    $
  }    
  
From now on we focus on hairpin completion of regular languages. Let us recall that such a completion is not necessarily regular~\cite{CMM06}.

\begin{lemma}
  The family of regular languages is not closed under hairpin completion.\end{lemma}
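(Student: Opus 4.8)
The plan is to prove non-closure by exhibiting a single regular language whose hairpin completion fails to be regular; since the hairpin completion $\mathrm{H}_k(L_1,L_2)$ decomposes by Lemma~\ref{lem hkl1l2 op unaire} as $\overrightarrow{\mathrm{H}_k}(L_1)\cup\overleftarrow{\mathrm{H}_k}(L_2)$, it suffices to make one side carry the argument. Concretely, I would take $\Gamma=\{a,b\}$ with the involution $\mathrm{H}$ defined by $\mathrm{H}(a)=b$ and $\mathrm{H}(b)=a$ (so that, as an anti-morphism, $\mathrm{H}(a^{p})=b^{p}$), the integer $k=1$, the regular witness $L_1=a^{*}b$, and $L_2=\emptyset$. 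Since $\overleftarrow{\mathrm{H}_1}(\emptyset)=\emptyset$, we then have $\mathrm{H}_1(a^{*}b,\emptyset)=\overrightarrow{\mathrm{H}_1}(a^{*}b)$, and the whole argument reduces to analysing this one-sided completion.

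Next I would compute $\overrightarrow{\mathrm{H}_1}(a^{*}b)$ explicitly from Definition~\ref{def lang h left right}. A word of the completion has the form $\alpha\beta\gamma\mathrm{H}(\beta)\mathrm{H}(\alpha)$ with $|\beta|=1$ and $\alpha\beta\gamma\mathrm{H}(\beta)\in a^{*}b$. The membership constraint forces the last letter $\mathrm{H}(\beta)$ to equal $b$, hence $\beta=a$, and forces $\alpha\beta\gamma$ to lie in $a^{*}$, so that $\alpha=a^{p}$ and $\gamma=a^{q}$ for some $p,q\ge 0$. The resulting completed word is then $a^{p}\,a\,a^{q}\,b\,\mathrm{H}(a^{p})=a^{p+q+1}b^{p+1}$, and letting $p,q$ range over all non-negative integers yields
\[
  \overrightarrow{\mathrm{H}_1}(a^{*}b)=\{a^{m}b^{n}\mid m\ge n\ge 1\}.
\]

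Finally I would show that this language is not regular by the pumping lemma: taking $a^{N}b^{N}$ for the pumping length $N$, any admissible factorisation must pump a non-empty block of $a$'s lying within the first $N$ letters, and deleting it produces $a^{N-j}b^{N}$ with $N-j<N$, which violates the constraint $m\ge n$ and hence falls outside the language. This contradiction establishes non-regularity, and since $a^{*}b$ and $\emptyset$ are regular, the family of regular languages is not closed under hairpin completion.

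I expect the only delicate point to be the explicit evaluation of the completion: one must track the anti-morphic (reversing) action of $\mathrm{H}$ and verify that the free parameter $\gamma$ produces the inequality $m\ge n$ rather than an equality — either form is non-regular, so the pumping step itself is routine once the set has been identified. Choosing $L_2=\emptyset$ is what makes the bookkeeping light, since it removes any need to control the left completion $\overleftarrow{\mathrm{H}_k}$ simultaneously; alternatively one could symmetrise by taking $L_1=L_2=a^{*}b$, at the cost of also computing $\overleftarrow{\mathrm{H}_1}(a^{*}b)$ and checking that the union remains non-regular.
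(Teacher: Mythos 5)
Your proof is correct, and while it follows the same overall strategy as the paper --- reduce to a one-sided completion of a regular witness (the paper likewise analyses only $L'=\overrightarrow{\mathrm{H}_k}(L(a^*b^kc^k))$, which amounts to taking $L_2=\emptyset$) --- it diverges from the paper's proof on two substantive points. First, the witness: the paper works over the three-letter alphabet $\{a,b,c\}$ with the involution $\mathrm{H}(a)=a$, $\mathrm{H}(b)=c$, $\mathrm{H}(c)=b$, and shows $\overrightarrow{\mathrm{H}_k}(L(a^*b^kc^k))=\{a^nb^kc^ka^n\mid n\geq 0\}$ for an \emph{arbitrary} fixed $k>0$, whereas your two-letter witness with the swap involution handles only $k=1$; a single counterexample does suffice for the lemma as stated, but the paper's version establishes non-closure of $\mathrm{H}_k$ for every $k$. (Your construction generalizes painlessly if wanted: $\overrightarrow{\mathrm{H}_k}(L(a^*b^k))=\{a^mb^n\mid m\geq n\geq k\}$.) Second, the non-regularity certificate: the paper stays inside its own residual framework, exhibiting infinitely many pairwise distinct left residuals $(a^j)^{-1}(L')$ and invoking the finiteness-of-residuals characterization recalled in its preliminaries, while you use the pumping lemma, which is external to the paper but entirely standard. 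Your computation of $\overrightarrow{\mathrm{H}_1}(L(a^*b))$ is sound: since $\mathrm{H}$ is length-preserving, $\mathrm{H}(\beta)$ is the last letter of the source word in $a^*b$, forcing $\mathrm{H}(\beta)=b$, hence $\beta=a$, $\alpha=a^p$, $\gamma=a^q$, and the completed word $a^{p+q+1}b^{p+1}$ sweeps out exactly $\{a^mb^n\mid m\geq n\geq 1\}$ as $p,q\geq 0$ vary; the pumping-down step against $a^Nb^N$ is likewise correct. In short, the paper's choices buy uniformity in $k$ and self-containedness within its residual machinery, while yours buy a smaller alphabet and lighter bookkeeping.
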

\begin{proof}
  Let $\Gamma=\{a,b,c\}$, $k>0$ be a fixed integer and $\mathrm{H}$ be the anti-morphism over $\Gamma^*$ defined by $\mathrm{H}(a)=a$, $\mathrm{H}(b)=c$ and $\mathrm{H}(c)=b$. Let $L'=\overrightarrow{\mathrm{H}_k}(L(a^*b^kc^k))$.  
  Let us first show that $L'=\{a^nb^kc^ka^n\mid n\geq 0\}$. Let $w$ be a word in $\Gamma^*$.
  
  \centerline{
    \begin{tabular}{l@{\ }l}
      $w\in L'$ & $\Leftrightarrow$ $w=\alpha\beta\gamma\mathrm{H}(\beta)\mathrm{H}(\alpha) \wedge \alpha\beta\gamma\mathrm{H}(\beta)\in L(a^*b^kc^k) \wedge |\beta|=k$\\
      & $\Leftrightarrow$ $w=\alpha\beta\gamma\mathrm{H}(\beta)\mathrm{H}(\alpha) \wedge \alpha\in L(a^*)\wedge \mathrm{H}(\beta) =c^k \wedge \beta=b^k$\\
      & $\Leftrightarrow$ $w=a^n b^k c^k a^n$ with $n\geq 0$.\\
    \end{tabular}
  }
  
  For any integer $j\geq 0$, let us define the language $L'_j$ by:
  
  \centerline{$L'_j=
    \left\{
      \begin{array}{l@{\ }l}
        L' & \text{ if } j=0,\\ 
        a^{-1}(L'_{j-1}) & \text{ otherwise.}
      \end{array}
    \right.
  $}
  
  Consequently, it holds $L'_j=\{a^{n-j} b^k c^k a^n\mid n\geq j\}$.  
  Finally, since for any two distinct integers $j$ and $j'$, the word $b^k c^k a^j$ belongs to $L'_j\setminus L'_{j'}$, it holds that for any two distinct integers $j$ and $j'$, $L'_j\neq L'_{j'}$ and $(a^j)^{-1}(L')\neq (a^{j'})^{-1}(L')$.   
  As a consequence, the set of left residuals of $L'$ is infinite.
  \cqfd
\end{proof}

The set of two-sided residuals of a hairpin completion of a regular language may be infinite, but the restriction to residuals w.r.t. couples $(x,y)$ of symbols is sufficient to obtain a finite set of two-sided residuals and a finite recognizer.

\section{The Two-Sided Derived Term Automaton}\label{2sideddder}

  The computation of residuals is intractable when it is defined over languages. However, derived terms of regular expressions denote residuals of regular languages. We then extend the partial derivation of regular expressions~\cite{Ant96} to the partial derivation of hairpin expressions.
  
  A \emph{hairpin expression} $E$ over an alphabet $\Gamma$ is a regular expression over $\Gamma$ or is inductively defined by: $E=\overleftarrow{\mathrm{H}_k}(F)$, $E=\overrightarrow{\mathrm{H}_k}(F)$, $E=\mathrm{H}'_k(F)$, $E=G_1+G_2$, where $\mathrm{H}$ is any anti-morphism over $\Gamma^*$, $k>0$ is any integer, $F$ is any regular expression over $\Gamma$, and $G_1$ and $G_2$ are any two hairpin expressions over $\Sigma$. If the only operators appearing in $E$ are regular operators ($+$, $\cdot$ or $^*$), the expression $E$ is said to be a \emph{simple hairpin expression}.  The \emph{language} denoted by a hairpin expression $E$ over an alphabet $\Gamma$ is the regular language $L(E)$ if $E$ is a regular expression or is inductively defined by: $L(\overleftarrow{\mathrm{H}_k}(F))=\overleftarrow{\mathrm{H}_k}(L(F))$, $L(\overrightarrow{\mathrm{H}_k}(F))=\overrightarrow{\mathrm{H}_k}(L(F))$, $L(\mathrm{H}'_k(F))=\mathrm{H}'_k(L(F))$, $L(G_1+G_2)=L(G_1)\cup L(G_2)$,  where $\mathrm{H}$ is any anti-morphism over $\Gamma^*$, $k>0$ is any integer, $F$ is any regular expression over $\Gamma$, and $G_1$ and $G_2$ are any two hairpin expressions over $\Gamma$.

\begin{definition}\label{def form deriv part}
  Let $E$ be a hairpin expression over an alphabet $\Gamma$. Let $(x,y)$ be a couple of symbols in $\Sigma_\Gamma$. Let $k>0$ be an integer. The \emph{two-sided partial derivative} of $E$ \emph{w.r.t.} $(x,y)$ is the set $\frac{\partial}{\partial_{(x,y)}}(E)$ of hairpin expressions defined by:
  
  \centerline{
    $\frac{\partial}{\partial_{(x,y)}}(F)=
      \left\{
        \begin{array}{l@{\ }l}
          (F)\frac{\partial}{\partial_y} & \text{ if } x=\varepsilon,\\          \frac{\partial}{\partial_x}(F) & \text{ if } y=\varepsilon,\\          \bigcup_{F'\in\frac{\partial}{\partial_x}(F)}(F')\frac{\partial}{\partial_y} & \text{ otherwise,}\\
        \end{array}
      \right.
    $
  }  
  
  \centerline{
    $\frac{\partial}{\partial_{(x,y)}}(\overrightarrow{\mathrm{H}_k}(F))=      \left\{
        \begin{array}{l@{\ }l}
          \emptyset & \text{ if } y \neq \mathrm{H}(x),\\
          \overrightarrow{\mathrm{H}_k}(\frac{\partial}{\partial_{x}}(F))\cup \frac{\partial}{\partial_{(x,y)}}(F) & \text{ if }y = \mathrm{H}(x)\ \wedge\ k=1\\
          \overrightarrow{\mathrm{H}_k}(\frac{\partial}{\partial_{x}}(F))\cup \mathrm{H}'_{k-1}(\frac{\partial}{\partial_{(x,y)}}(F)) & \text{ otherwise,}\\
        \end{array}
      \right.$
  }  
  
  \centerline{
    $\frac{\partial}{\partial_{(x,y)}}(\overleftarrow{\mathrm{H}_k}(F))=      \left\{
        \begin{array}{l@{\ }l}
          \emptyset & \text{ if } y \neq \mathrm{H}(x),\\
          \overleftarrow{\mathrm{H}_k}((F)\frac{\partial}{\partial_{y}})\cup \frac{\partial}{\partial_{(x,y)}}(F) &  \text{ if }y = \mathrm{H}(x)\ \wedge\ k=1\\
          \overleftarrow{\mathrm{H}_k}((F)\frac{\partial}{\partial_{y}})\cup \mathrm{H}'_{k-1}(\frac{\partial}{\partial_{(x,y)}}(F)) & \text{ otherwise,}\\
        \end{array}
      \right.$
  }  
  
  \centerline{
    $\frac{\partial}{\partial_{(x,y)}}(\mathrm{H}'_k(F))=
      \left\{
        \begin{array}{l@{\ }l}
          \emptyset & \text{ if } y \neq \mathrm{H}(x),\\
          \mathrm{H}'_{k-1}(\frac{\partial}{\partial_{(x,y)}}(F)) & \text{ if k>1},\\
          \frac{\partial}{\partial_{(x,y)}}(F) & \text{ otherwise,}\\
        \end{array}
      \right.$
  }

  \centerline{
    $\frac{\partial}{\partial_{(x,y)}}(G_1+G_2)= \frac{\partial}{\partial_{(x,y)}}(G_1) \cup \frac{\partial}{\partial_{(x,y)}}(G_2) $,
  }

  where $\mathrm{H}$ is any anti-morphism over $\Gamma^*$, $k>0$ is any integer, $F$ is any regular expression over $\Gamma$, $G_1$ and $G_2$ are any two hairpin expressions over $\Gamma$, and for any set $\mathcal{H}$ of hairpin expressions: $\overrightarrow{\mathrm{H}_k}(\mathcal{H})= \{\overrightarrow{\mathrm{H}_k}(H)\mid H\in \mathcal{H}\}$, $\overleftarrow{\mathrm{H}_k}(\mathcal{H})=\{\overleftarrow{\mathrm{H}_k}(H)\mid H\in \mathcal{H}\}$, $\mathrm{H}'_k(\mathcal{H})=\{\mathrm{H}'_k(H)\mid H\in \mathcal{H}\}$.
\end{definition}

  Let $E$ be a hairpin expression over an alphabet $\Gamma$. The \emph{set} $\overleftrightarrow{\mathcal{D}_E}$ \emph{of two-sided derived terms of the expression} $E$ is defined by: $\overleftrightarrow{\mathcal{D}_E}=\bigcup_{k\geq 1} \overleftrightarrow{\mathcal{D}^k_E}$, where:
  
  \centerline{
    $\overleftrightarrow{\mathcal{D}^k_E}=
      \left\{
        \begin{array}{l@{\ }l}
          \bigcup_{(x,y)\in\Sigma_\Gamma} \frac{\partial}{\partial(x,y)}(E)  & \text{ if } k=1,\\
          \bigcup_{(x,y)\in\Sigma_\Gamma,E'\in \overleftrightarrow{\mathcal{D}^{k-1}_E}} \frac{\partial}{\partial(x,y)}(E') & \text{ otherwise.}\\        \end{array}
      \right.
    $
  }

Derived terms of regular expressions are related to left residuals. Let us show that derived terms of hairpin expressions are related to two-sided residuals.

\begin{proposition}\label{prop deriv two side bon lang}
Let $E$ be a hairpin expression over an alphabet $\Gamma$. Let $(x,y)$ be a couple of symbols in $\Gamma^2$. Then: $\bigcup_{F\in \frac{\partial}{\partial_{(x,y)}}(E)}L(F)=(x,y)^{-1}(L(E))$.
  
  Furthermore, if $E$ is a regular expression, the proposition still holds whenever $(x,y)$ is a couple of symbols in $\Sigma_\Gamma$.
\end{proposition}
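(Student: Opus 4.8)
The plan is to proceed by structural induction on the hairpin expression $E$, driven by two facts already at hand: the defining property of Antimirov's left and right partial derivatives, namely $\bigcup_{F'\in\frac{\partial}{\partial_w}(F)}L(F')=w^{-1}(L(F))$ and $\bigcup_{F'\in(F)\frac{\partial}{\partial_w}}L(F')=(L(F))w^{-1}$, together with the language-level residual formulae of Proposition~\ref{prop lang quot hairpin}. The guiding observation is that the syntactic Definition~\ref{def form deriv part} is, clause by clause, the exact term-level mirror of Proposition~\ref{prop lang quot hairpin}; so once the regular base case is settled, the remaining cases reduce to checking that passing to denoted languages commutes with the operators involved.

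First I would settle the case where $E$ is a regular expression, which simultaneously yields the \emph{furthermore} clause, since there $(x,y)$ ranges over all of $\Sigma_\Gamma$. I would split along the three branches of the first clause of Definition~\ref{def form deriv part}. If $x=\varepsilon$, then $\frac{\partial}{\partial_{(\varepsilon,y)}}(F)=(F)\frac{\partial}{\partial_y}$, so the right-derivative property gives $\bigcup_{F'\in(F)\frac{\partial}{\partial_y}}L(F')=(L(F))y^{-1}$, which equals $(\varepsilon,y)^{-1}(L(F))$ by Lemma~\ref{lem 2sidequot eq quot 2side}; the case $y=\varepsilon$ is symmetric. If both $x$ and $y$ are symbols, then $\frac{\partial}{\partial_{(x,y)}}(F)=\bigcup_{F'\in\frac{\partial}{\partial_x}(F)}(F')\frac{\partial}{\partial_y}$, and applying the right-derivative property term by term, together with the distributivity of the right residual over unions of languages, gives $\bigl(\bigcup_{F'\in\frac{\partial}{\partial_x}(F)}L(F')\bigr)y^{-1}=(x^{-1}(L(F)))y^{-1}$, which is $(x,y)^{-1}(L(F))$ again by Lemma~\ref{lem 2sidequot eq quot 2side}. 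This proves the statement for all regular $E$ and all $(x,y)\in\Sigma_\Gamma$.

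Next I would treat the three hairpin operators, now with $(x,y)\in\Gamma^2$. I would first record the elementary fact that each of $\overrightarrow{\mathrm{H}_k}$, $\overleftarrow{\mathrm{H}_k}$ and $\mathrm{H}'_k$ distributes over arbitrary unions of languages, immediate from their purely existential definitions, so that for any set $\mathcal S$ of expressions $\bigcup_{G\in\overrightarrow{\mathrm{H}_k}(\mathcal S)}L(G)=\overrightarrow{\mathrm{H}_k}\bigl(\bigcup_{F'\in\mathcal S}L(F')\bigr)$, and likewise for the two other operators. Taking $E=\overrightarrow{\mathrm{H}_k}(F)$ and reading off Definition~\ref{def form deriv part}, the denoted language of $\frac{\partial}{\partial_{(x,y)}}(E)$ becomes, after this distribution step, equal to $\emptyset$ when $y\neq\mathrm{H}(x)$, to $\overrightarrow{\mathrm{H}_k}(x^{-1}(L(F)))\cup(x,y)^{-1}(L(F))$ when $y=\mathrm{H}(x)$ and $k=1$, and to $\overrightarrow{\mathrm{H}_k}(x^{-1}(L(F)))\cup\mathrm{H}'_{k-1}((x,y)^{-1}(L(F)))$ otherwise, where I invoke the already-proved regular case for $\frac{\partial}{\partial_{(x,y)}}(F)$ and the Antimirov left-derivative property for $\frac{\partial}{\partial_x}(F)$. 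These are precisely the three branches computed for $(x,y)^{-1}(\overrightarrow{\mathrm{H}_k}(L(F)))=(x,y)^{-1}(L(E))$ in Proposition~\ref{prop lang quot hairpin}, so the equality holds; the cases $E=\overleftarrow{\mathrm{H}_k}(F)$ and $E=\mathrm{H}'_k(F)$ are handled identically against the corresponding clauses.

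Finally the union case $E=G_1+G_2$ is the genuine inductive step: since $\frac{\partial}{\partial_{(x,y)}}(G_1+G_2)=\frac{\partial}{\partial_{(x,y)}}(G_1)\cup\frac{\partial}{\partial_{(x,y)}}(G_2)$, the induction hypothesis and the distributivity of two-sided residuals over union give $(x,y)^{-1}(L(G_1))\cup(x,y)^{-1}(L(G_2))=(x,y)^{-1}(L(G_1)\cup L(G_2))=(x,y)^{-1}(L(E))$. The part I expect to need the most care is the regular composite branch with $x,y\in\Gamma$: one must chain the left- and right-derivative properties in the correct order and use Lemma~\ref{lem 2sidequot eq quot 2side} to recombine $(x^{-1}(L))y^{-1}$ into $(x,y)^{-1}(L)$, being careful that the intermediate union over left derived terms is correctly pushed through the right residual. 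Once that identity is secured, every remaining case is a mechanical matching of Definition~\ref{def form deriv part} against Proposition~\ref{prop lang quot hairpin}.
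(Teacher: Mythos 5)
Your proposal is correct and follows essentially the same route as the paper, which disposes of this proposition in one line as ``trivially proved by induction over the structure of $E$, according to Proposition~\ref{prop lang quot hairpin}''; your write-up simply makes explicit the details the paper leaves implicit (the regular base case via Antimirov's derivative properties and Lemma~\ref{lem 2sidequot eq quot 2side}, the clause-by-clause matching of Definition~\ref{def form deriv part} against Proposition~\ref{prop lang quot hairpin}, and the sum case as the only genuine inductive step). Your restriction of the hairpin-operator cases to $(x,y)\in\Gamma^2$ is also exactly right, since for couples with an $\varepsilon$ component the syntactic derivative of a completion is empty while the residual need not be.
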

\begin{proof}
  Trivially proved by induction over the structure of $E$, according to Proposition~\ref{prop lang quot hairpin}.
  \cqfd
\end{proof}

Determining whether the empty word belongs to the language denoted by a regular expression $E$ can be performed syntactically and inductively as follows:  

\centerline{$\varepsilon\notin L(a)$, $\varepsilon\notin L(\emptyset)$, $\varepsilon\in L(\varepsilon)$,}

\centerline{$\varepsilon\in L(G_1\cdot G_2) \Leftrightarrow \varepsilon\in L(G_1)\ \wedge\ \varepsilon\in L(G_2)$,}

\centerline{$\varepsilon\in L(G_1+G_2) \Leftrightarrow \varepsilon\in L(G_1)\ \vee\ \varepsilon\in L(G_2)$, $\varepsilon\in L(G_1^*)$.}

This syntactical test is needed to compute the derived term automaton since it defines the finality of the states. We now show how to extend this computation to hairpin expressions.

\begin{lemma}\label{lem test syntax eps}
  Let $F$ be a regular expression and $G_1$ and $G_2$ be two hairpin expressions. Then:
  
\centerline{ $\varepsilon\notin L(\overrightarrow{\mathrm{H}_k}(F))$, $\varepsilon\notin L(\overleftarrow{\mathrm{H}_k}(F))$, $\varepsilon\notin L(\mathrm{H}'_k(F))$,}

\centerline{ $\varepsilon\in L(G_1+G_2) \Leftrightarrow \varepsilon\in L(G_1)\ \vee\ \varepsilon\in L(G_2)$.}
\end{lemma}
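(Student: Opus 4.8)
The plan is to establish each of the four claimed statements directly from the definitions of the language operators $\overrightarrow{\mathrm{H}_k}$, $\overleftarrow{\mathrm{H}_k}$, and $\mathrm{H}'_k$, together with the inductive definition of $L(G_1+G_2)$. The three negative statements about emptiness follow from a single structural observation, so I would prove them together and then dispatch the disjunction case separately.

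For the three negative claims, the key step is to invoke Lemma~\ref{lem mot ds lang h couple deb fin}, which asserts that every word $w$ in $\overleftarrow{\mathrm{H}_k}(L)$, $\overrightarrow{\mathrm{H}_k}(L)$, or $\mathrm{H}'_k(L)$ satisfies $|w|\geq k$. Since $k>0$ by hypothesis, no such word can be the empty word, and therefore $\varepsilon\notin L(\overrightarrow{\mathrm{H}_k}(F))$, $\varepsilon\notin L(\overleftarrow{\mathrm{H}_k}(F))$, and $\varepsilon\notin L(\mathrm{H}'_k(F))$ all hold immediately with $L=L(F)$. Concretely, for instance, $L(\overrightarrow{\mathrm{H}_k}(F))=\overrightarrow{\mathrm{H}_k}(L(F))$ by the definition of the language denoted by a hairpin expression, and any word in this set has the form $\alpha\beta\gamma\mathrm{H}(\beta)\mathrm{H}(\alpha)$ with $|\beta|=k\geq 1$, hence has length at least $2k\geq 2>0$; the same reasoning applies to the other two operators.

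For the fourth statement, the equivalence $\varepsilon\in L(G_1+G_2)\Leftrightarrow \varepsilon\in L(G_1)\vee\varepsilon\in L(G_2)$ follows directly from the inductive clause $L(G_1+G_2)=L(G_1)\cup L(G_2)$ in the definition of the language of a hairpin expression, exactly as in the regular-expression case already recorded just before the lemma. Here $\varepsilon$ belongs to a union if and only if it belongs to one of the two sets, which is the set-theoretic definition of union.

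I expect no genuine obstacle here: every piece is an immediate consequence of a definition or of the already-proved Lemma~\ref{lem mot ds lang h couple deb fin}. The only point requiring a little care is to confirm that the length bound of Lemma~\ref{lem mot ds lang h couple deb fin} genuinely covers all three operators simultaneously and that the hypothesis $k>0$ is what forces nonemptiness of every word; since that lemma is stated for $L'\in\{\overleftarrow{\mathrm{H}_k}(L),\overrightarrow{\mathrm{H}_k}(L),\mathrm{H}'_k(L)\}$, this is exactly the generality we need, and the proof reduces to citing it.
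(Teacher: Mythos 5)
Your proof is correct and follows essentially the same route as the paper, whose proof simply cites Definition~\ref{def lang h left right}, Definition~\ref{def lang hprim} and the definition of languages denoted by hairpin expressions; your invocation of Lemma~\ref{lem mot ds lang h couple deb fin} is just a convenient repackaging of those same definitional facts (that lemma is itself proved trivially from them). The observation that $k>0$, guaranteed by the syntax of hairpin expressions, forces every word in the completed languages to be nonempty is exactly the content the paper's one-line proof relies on.
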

\begin{proof}
  Trivially proved according to Definition~\ref{def lang h left right}, Definition~\ref{def lang hprim} and definition of languages denoted by hairpin expressions.  \cqfd
\end{proof}

The following example illustrates the computation of derived terms. For clarity, in this example, we assume that hairpin expressions are quotiented w.r.t. the following rules: $\varepsilon\cdot E\sim E$, $\emptyset\cdot E\sim \emptyset$. Moreover, sets of expressions are also quotiented w.r.t. the following rule: $\{\emptyset\}\sim\emptyset$.
  
\begin{example}\label{ex calcul deriv term}
  Let $\Gamma=\{a,b,c\}$ and $\mathrm{H}$ be the anti-morphism over $\Gamma^*$ defined by $\mathrm{H}(a)=a$, $\mathrm{H}(b)=c$ and $\mathrm{H}(c)=b$. Let $E=\overrightarrow{\mathrm{H}_1}(a^*bc)$. Derived terms of $E$ are computed as follows:
  
   \centerline{$\frac{\partial}{\partial(a,a)}(E)=\{E\}$,}
   
     \centerline{$\frac{\partial}{\partial(b,c)}(E)=\{\overrightarrow{\mathrm{H}_1}(c),\varepsilon\}$,}
    
      \centerline{$\frac{\partial}{\partial(c,b)}(\overrightarrow{\mathrm{H}_1}(c))=\{\overrightarrow{\mathrm{H}_1}(\varepsilon)\}$.}
     
   Other partial derivatives are equal to $\emptyset$.  
  Furthermore, it holds that $\varepsilon$ is the only derived term $F$ of $E$ such that $\varepsilon$ belongs to $L(F)$.
\end{example}

In the following we are looking for an upper bound over the cardinality of the set of two-sided derived terms, thus we apply no reduction to the regular expressions. Notice that this cardinality decreases whenever any reduction is applied.

\begin{lemma}\label{lem form ind calcul deriv 2side}
  Let $E$ and $F$ be two regular expressions over an alphabet $\Gamma$. Then the three following propositions hold:
  \begin{enumerate}
    \item $\overleftrightarrow{\mathcal{D}_{E+F}}\subset \overleftrightarrow{\mathcal{D}_E} \cup \overleftrightarrow{\mathcal{D}_F}$,
    \item $\overleftrightarrow{\mathcal{D}_{E\cdot F}}\subset  \overleftarrow{\mathcal{D}_E}\cdot \overrightarrow{\mathcal{D}_F} \cup \overleftrightarrow{\mathcal{D}_{E}}\cup \overleftrightarrow{\mathcal{D}_{F}}$,
    \item $\overleftrightarrow{\mathcal{D}_{E^*}}\subset \overleftarrow{\mathcal{D}_{E}}\cdot E^* \cup E^*\cdot \overrightarrow{\mathcal{D}_{E}}\cup \overleftrightarrow{\mathcal{D}_{E}}\cup (\overleftarrow{\mathcal{D}_{E}}\cdot E^*) \cdot \overrightarrow{\mathcal{D}_{E}}\cup \overleftarrow{\mathcal{D}_{E}}\cdot( E^* \cdot \overrightarrow{\mathcal{D}_{E}})$.
  \end{enumerate}
  
  Furthermore, $\overleftrightarrow{\mathcal{D}_{\varepsilon}}=\overleftrightarrow{\mathcal{D}_{\emptyset}}=\emptyset$ and $\overleftrightarrow{\mathcal{D}_{a}}=\{\varepsilon\}$ for any symbol $a$ in $\Gamma$.
\end{lemma}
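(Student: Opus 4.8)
The plan is to prove the three set inclusions by combining the definition of the two-sided derived term set $\overleftrightarrow{\mathcal{D}_E}=\bigcup_{k\geq 1}\overleftrightarrow{\mathcal{D}^k_E}$ with the classical left/right partial derivation formulas recalled in the Preliminaries. For a simple (purely regular) expression, the two-sided derivative $\frac{\partial}{\partial_{(x,y)}}(F)$ is, by Definition~\ref{def form deriv part}, nothing but $\bigcup_{F'\in\frac{\partial}{\partial_x}(F)}(F')\frac{\partial}{\partial_y}$ (with the appropriate one-sided degeneration when $x$ or $y$ equals $\varepsilon$). So a two-sided derived term is obtained by first taking finitely many left derivatives, then finitely many right derivatives, and iterating; the heart of the argument is to track how the syntactic shape of derived terms of $E+F$, $E\cdot F$ and $E^*$ decomposes into derived terms of the factors.

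First I would settle the base cases: $\overleftrightarrow{\mathcal{D}_{\varepsilon}}=\overleftrightarrow{\mathcal{D}_{\emptyset}}=\emptyset$ and $\overleftrightarrow{\mathcal{D}_a}=\{\varepsilon\}$ follow immediately from the derivation rules, since $\frac{\partial}{\partial_x}(a)=\{\varepsilon\}$ or $\emptyset$ and every further derivative of $\varepsilon$ or $\emptyset$ vanishes. For item (1), the union case is the easy one: since $\frac{\partial}{\partial_{(x,y)}}(E+F)=\frac{\partial}{\partial_{(x,y)}}(E)\cup\frac{\partial}{\partial_{(x,y)}}(F)$ and the operators $+$ never reappear inside a one-sided derived term (the left/right rules for $F+G$ distribute the derivative and drop the $+$), an induction on the index $k$ in $\overleftrightarrow{\mathcal{D}^k}$ shows that every iterated derivative of $E+F$ is an iterated derivative of $E$ or of $F$. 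For items (2) and (3) I would recall the known one-sided facts, namely the analogues $\overleftarrow{\mathcal{D}_{E\cdot F}}\subset\overleftarrow{\mathcal{D}_E}\cdot F\cup\overleftarrow{\mathcal{D}_F}$ (from $\frac{\partial}{\partial_a}(E\cdot F)=\frac{\partial}{\partial_a}(E)\cdot F\cup\frac{\partial}{\partial_a}(F)$) and the mirror statement for right derivatives, together with the corresponding containments for $E^*$; these are the structural invariants that Antimirov's construction relies on.

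The main work, and the main obstacle, is item (3) for the star. The point is that a two-sided derived term of $E^*$ is reached by interleaving left and right derivations, and one must classify the five possible shapes a term can acquire: a purely-left-derived prefix $\overleftarrow{\mathcal{D}_E}\cdot E^*$, a purely-right-derived suffix $E^*\cdot\overrightarrow{\mathcal{D}_E}$, an internal term $\overleftrightarrow{\mathcal{D}_E}$ arising when both derivations land inside a single copy of $E$, and the two ``mixed'' forms $(\overleftarrow{\mathcal{D}_E}\cdot E^*)\cdot\overrightarrow{\mathcal{D}_E}$ and $\overleftarrow{\mathcal{D}_E}\cdot(E^*\cdot\overrightarrow{\mathcal{D}_E})$ that occur when the left derivation has exposed a prefix $\overleftarrow{\mathcal{D}_E}\cdot E^*$ and the right derivation subsequently acts on the trailing $E^*$, or symmetrically. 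I would prove this by a simultaneous induction: first show that a single left derivative of any term in each of these five families stays inside the union (using $\frac{\partial}{\partial_a}(E^*)=\frac{\partial}{\partial_a}(E)\cdot E^*$ and the concatenation rule), then show the same for a single right derivative, and finally invoke the fact that an arbitrary two-sided derived term is built from $E$ by a finite sequence of such one-step derivations. The delicate bookkeeping is ensuring that once a derivation has produced a term of the form $\overleftarrow{\mathcal{D}_E}\cdot E^*$ and a later right derivative acts, the result indeed falls into one of the two mixed families rather than escaping them; carefully expanding $\bigl(\overleftarrow{\mathcal{D}_E}\cdot E^*\bigr)\frac{\partial}{\partial_b}$ via the right concatenation rule and matching each summand against the five families is the crux. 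For item (2), the product $E\cdot F$, the same method applies but is strictly simpler, since there is no Kleene star to reinject copies: a two-sided derivative either differentiates the $E$-side on the left and the $F$-side on the right (giving $\overleftarrow{\mathcal{D}_E}\cdot\overrightarrow{\mathcal{D}_F}$), or nullability of $E$ or $F$ collapses the term onto $\overleftrightarrow{\mathcal{D}_E}$ or $\overleftrightarrow{\mathcal{D}_F}$.
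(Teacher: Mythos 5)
Your proposal is correct and follows essentially the same route as the paper's proof: both arguments exhibit the same candidate supersets (in particular the five-family union for $E^*$, with the two parenthesizations $(\overleftarrow{\mathcal{D}_E}\cdot E^*)\cdot \overrightarrow{\mathcal{D}_E}$ and $\overleftarrow{\mathcal{D}_E}\cdot(E^*\cdot \overrightarrow{\mathcal{D}_E})$ kept distinct, since no associativity reduction is applied) and then show these sets contain the first derivatives and are closed under further derivation. The only cosmetic difference is that you factor each two-sided step $\frac{\partial}{\partial_{(x,y)}}$ into a left step followed by a right step, whereas the paper carries out the same case analysis ($x=\varepsilon$, $y=\varepsilon$, or both symbols) directly on two-sided derivatives, reusing its concatenation case as a lemma inside the star case.
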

\begin{proof}
  Basic cases ($\varepsilon$, $\emptyset$ and $a$ in $\Gamma$) are trivially proved directly applying Definition~\ref{def form deriv part}.
  
  By induction over the structure of the set of two-sided derived terms. Suppose that $E$ and $F$ are two regular expressions over an alphabet $\Gamma$. Let $(x,y)$ be a couple of symbols in $\Sigma_\Gamma$.
  \begin{enumerate}
    \item Let us first show that $\frac{\partial}{\partial(x,y)}(E+F)\subset \overleftrightarrow{\mathcal{D}_E} \cup \overleftrightarrow{\mathcal{D}_F}$. According to Definition~\ref{def form deriv part}, it holds:
    
    \centerline{
      $\frac{\partial}{\partial(x,y)}(E+F)=
        \left\{
          \begin{array}{l@{\ }l}
            \frac{\partial}{\partial_x}(E+F) & \text{ if }y=\varepsilon,\\
            (E+F)\frac{\partial}{\partial_y} & \text{ if }x=\varepsilon,\\
            \bigcup_{G\in \frac{\partial}{\partial_x}(E+F)} (G)\frac{\partial}{\partial_y} & \text{ otherwise.}\\
          \end{array}
        \right.
      $
    }
    
    \centerline{
      $=
        \left\{
          \begin{array}{l@{\ }l}
            \frac{\partial}{\partial_x}(E)\cup \frac{\partial}{\partial_x}(F) & \text{ if }y=\varepsilon,\\
            (E)\frac{\partial}{\partial_y}\cup (F)\frac{\partial}{\partial_y} & \text{ if }x=\varepsilon,\\
            \bigcup_{G\in \frac{\partial}{\partial_x}(E)} (G)\frac{\partial}{\partial_y} \cup \bigcup_{G\in \frac{\partial}{\partial_x}(F)} (G)\frac{\partial}{\partial_y} & \text{ otherwise.}\\
          \end{array}
        \right.
      $
    }
    
    Notice that the three following conditions hold:
    
    \centerline{$\frac{\partial}{\partial_x}(E)\cup \frac{\partial}{\partial_x}(F)\subset \overleftarrow{\mathcal{D}_E}\cup \overleftarrow{\mathcal{D}_F}\subset \overleftrightarrow{\mathcal{D}_E} \cup \overleftrightarrow{\mathcal{D}_F}$,}
    
    \centerline{$(E)\frac{\partial}{\partial_y}\cup (F)\frac{\partial}{\partial_y} \subset \overrightarrow{\mathcal{D}_E}\cup \overrightarrow{\mathcal{D}_F}\subset \overleftrightarrow{\mathcal{D}_E} \cup \overleftrightarrow{\mathcal{D}_F}$,}
    
    \centerline{$\bigcup_{G\in \frac{\partial}{\partial_x}(E)} (G)\frac{\partial}{\partial_y} \cup \bigcup_{G\in \frac{\partial}{\partial_x}(F)} (G)\frac{\partial}{\partial_y}= \frac{\partial}{\partial(x,y)}(E)\cup \frac{\partial}{\partial(x,y)}(F) \subset \overleftrightarrow{\mathcal{D}_E} \cup \overleftrightarrow{\mathcal{D}_F}$.}
    
    As a consequence, $\frac{\partial}{\partial(x,y)}(E+F)\subset \overleftrightarrow{\mathcal{D}_E} \cup \overleftrightarrow{\mathcal{D}_F}$. 
    
    Furthermore, since by definition of the sets of two-sided derived terms, for any expression $G$ in $\overleftrightarrow{\mathcal{D}_E}$ (resp. in $\overleftrightarrow{\mathcal{D}_F}$), $\frac{\partial}{\partial(x,y)}(G)\subset \overleftrightarrow{\mathcal{D}_E}$ (resp. $\frac{\partial}{\partial(x,y)}(G)\subset \overleftrightarrow{\mathcal{D}_F}$), the proposition is satisfied.
    \item Let us set $\mathcal{E}= \overleftarrow{\mathcal{D}_E}\cdot \overrightarrow{\mathcal{D}_F}\cup \overleftrightarrow{\mathcal{D}_{E}}\cup \overleftrightarrow{\mathcal{D}_{F}}$.
    \begin{enumerate}    
    \item Let us first show that 
    
    \centerline{$\frac{\partial}{\partial(x,y)}(E\cdot F)\subset
     \frac{\partial}{\partial_x}(E) \cdot (F)\frac{\partial}{\partial_y}\cup \frac{\partial}{\partial(x,y)}(E) \cup \frac{\partial}{\partial(x,y)}(F)
    \subset \mathcal{E}$.}
    
     According to Definition~\ref{def form deriv part}, it holds:
    
    \centerline{
      $\frac{\partial}{\partial(x,y)}(E\cdot F)=
        \left\{
          \begin{array}{l@{\ }l}
            \frac{\partial}{\partial_x}(E\cdot F) & \text{ if }y=\varepsilon,\\
            (E\cdot F)\frac{\partial}{\partial_y} & \text{ if }x=\varepsilon,\\
            \bigcup_{G\in \frac{\partial}{\partial_x}(E\cdot F)} (G)\frac{\partial}{\partial_y} & \text{ otherwise.}\\
          \end{array}
        \right.
      $
    }
    
    \centerline{$=
        \left\{
          \begin{array}{l@{\ }l}
            \frac{\partial}{\partial_x}(E)\cdot F \cup \frac{\partial}{\partial_x}(F) & \text{ if }y=\varepsilon\ \wedge \varepsilon\in L(E),\\
            \frac{\partial}{\partial_x}(E)\cdot F & \text{ if }y=\varepsilon\ \wedge \varepsilon\notin L(E),\\
             (E)\frac{\partial}{\partial_y} \cup E\cdot(F)\frac{\partial}{\partial_y} & \text{ if }x=\varepsilon\ \wedge \varepsilon\in L(F),\\
              E\cdot(F)\frac{\partial}{\partial_y} & \text{ if }x=\varepsilon\ \wedge \varepsilon\notin L(F),\\
            \bigcup_{G\in \frac{\partial}{\partial_x}(E)\cdot F} (G)\frac{\partial}{\partial_y} \cup \bigcup_{G\in \frac{\partial}{\partial_x}(F)} (G)\frac{\partial}{\partial_y} & \text{ if }x,y\in\Gamma\ \wedge\ \varepsilon\in L(E)\\
            \bigcup_{G\in \frac{\partial}{\partial_x}(E)\cdot F} (G)\frac{\partial}{\partial_y}  & \text{ otherwise.}\\
          \end{array}
        \right.
      $
    }
    
    Notice that the three following conditions hold:
    
    \centerline{$\frac{\partial}{\partial_x}(E)\cdot F \cup \frac{\partial}{\partial_x}(F) \subset \overleftarrow{\mathcal{D}_E}\cdot \overrightarrow{\mathcal{D}_F}\cup \overleftrightarrow{\mathcal{D}_{F}}$,}
    
    \centerline{$(E)\frac{\partial}{\partial_y} \cup E\cdot(F)\frac{\partial}{\partial_y}\subset \overleftrightarrow{\mathcal{D}_{E}} \cup \overleftarrow{\mathcal{D}_E} \cdot  \overrightarrow{\mathcal{D}_F}$,}
    
    \centerline{$\bigcup_{G\in \frac{\partial}{\partial_x}(F)} (G)\frac{\partial}{\partial_y}=\frac{\partial}{\partial(x,y)}(F)\subset \overrightarrow{\mathcal{D}_F}$.}
    
    Moreover,
    
    \centerline{
      $\bigcup_{G\in \frac{\partial}{\partial_x}(E)\cdot F} (G)\frac{\partial}{\partial_y}=
        \left\{
          \begin{array}{l@{\ }l}
            \bigcup_{G \in \frac{\partial}{\partial_x}(E)} G\cdot (F)\frac{\partial}{\partial_y} \cup \bigcup_{G\in \frac{\partial}{\partial_x}(E)} (G)\frac{\partial}{\partial_y} & \text{ if }\varepsilon\in L(F),\\
            \bigcup_{G\in \frac{\partial}{\partial_x}(E)} G\cdot (F)\frac{\partial}{\partial_y} & \text{ otherwise.}\\
          \end{array}
        \right.
      $
    }
    
    Finally, since $\bigcup_{G \in \frac{\partial}{\partial_x}(E)} G\cdot (F)\frac{\partial}{\partial_y}=\frac{\partial}{\partial_x}(E) \cdot (F)\frac{\partial}{\partial_y}\subset \overleftarrow{\mathcal{D}_E}\cdot \overrightarrow{\mathcal{D}_F}$ and since $\bigcup_{G\in \frac{\partial}{\partial_x}(E)} (G)\frac{\partial}{\partial_y}=\frac{\partial}{\partial(x,y)}(E)\subset \overrightarrow{\mathcal{D}_E}$, the proposition is satisfied.    
    \item Let us now show that for any expression $G$ in $\mathcal{E}$, $\frac{\partial}{\partial(x,y)}(G)\subset \mathcal{E}$.
    \begin{enumerate}
      \item if $G$ belongs to $\overleftrightarrow{\mathcal{D}_{E}}$ (resp. to $\overleftrightarrow{\mathcal{D}_{F}}$), by definition of the set of two-sided derived terms it holds $\frac{\partial}{\partial(x,y)}(G)\subset \overleftrightarrow{\mathcal{D}_{E}}$ (resp. $\frac{\partial}{\partial(x,y)}(G)\subset \overleftrightarrow{\mathcal{D}_{F}}$).
      \item If $G$ belongs to $\overleftarrow{\mathcal{D}_E}\cdot \overrightarrow{\mathcal{D}_F}$, then $G=G_1\cdot G_2$ and from \textbf{(2a)} it holds that $\frac{\partial}{\partial(x,y)}(G)\subset \overleftarrow{\mathcal{D}_{G_1}}\cdot \overrightarrow{\mathcal{D}_{G_2}}\cup \overleftrightarrow{\mathcal{D}_{G_1}}\cup \overleftrightarrow{\mathcal{D}_{{G_2}}}$. According to definition of the set of two-sided derived terms, the four follwong conditions hold:
      
       $\overleftarrow{\mathcal{D}_{G_1}}\subset \overleftarrow{\mathcal{D}_{E}}$,  $\overleftrightarrow{\mathcal{D}_{G_1}}\subset \overleftrightarrow{\mathcal{D}_{E}}$,  $\overrightarrow{\mathcal{D}_{G_2}}\subset \overrightarrow{\mathcal{D}_{F}}$ and   $\overleftrightarrow{\mathcal{D}_{G_2}}\subset \overleftrightarrow{\mathcal{D}_{F}}$.      
    \end{enumerate}
    \end{enumerate}
    As a consequence, the proposition is satisfied.
    \item Let us set $\mathcal{E}=\overleftarrow{\mathcal{D}_{E}}\cdot E^* \cup E^*\cdot \overrightarrow{\mathcal{D}_{E}}\cup \overleftrightarrow{\mathcal{D}_{E}}\cup (\overleftarrow{\mathcal{D}_{E}}\cdot E^*) \cdot \overrightarrow{\mathcal{D}_{E}}\cup \overleftarrow{\mathcal{D}_{E}}\cdot( E^* \cdot \overrightarrow{\mathcal{D}_{E}})$.
    \begin{enumerate}
      \item Let us first show that $\frac{\partial}{\partial(x,y)}(E^*)\subset \mathcal{E}$. According to Definition~\ref{def form deriv part}, it holds:
    
    \centerline{
      $\frac{\partial}{\partial(x,y)}(E^*)=
        \left\{
          \begin{array}{l@{\ }l}
            \frac{\partial}{\partial_x}(E^*) & \text{ if }y=\varepsilon,\\
            (E^*)\frac{\partial}{\partial_y} & \text{ if }x=\varepsilon,\\
            \bigcup_{G\in \frac{\partial}{\partial_x}(E^*)} (G)\frac{\partial}{\partial_y} & \text{ otherwise.}\\
          \end{array}
        \right.
      $
    }
    
    \centerline{
      $=
        \left\{
          \begin{array}{l@{\ }l}
            \frac{\partial}{\partial_x}(E)\cdot E^* & \text{ if }y=\varepsilon,\\
            E^*\cdot (E)\frac{\partial}{\partial_y} & \text{ if }x=\varepsilon,\\
            \bigcup_{G\in \frac{\partial}{\partial_x}(E)} (G\cdot E^*)\frac{\partial}{\partial_y} & \text{ otherwise.}\\
          \end{array}
        \right.
      $
    }
    
    Notice that $\frac{\partial}{\partial_x}(E)\cdot E^* \subset \overleftarrow{\mathcal{D}_{E}}\cdot E^*$ and that $E^*\cdot (E)\frac{\partial}{\partial_y} \subset E^*\cdot \overrightarrow{\mathcal{D}_{E}}$.
    
    Moreover,
    
    \centerline{
      $\bigcup_{G\in \frac{\partial}{\partial_x}(E)} (G\cdot E^*)\frac{\partial}{\partial_y}$
    }
    
    \centerline{$=
      \bigcup_{G\in \frac{\partial}{\partial_x}(E)} (G)\frac{\partial}{\partial_y} \cup G\cdot (E^*)\frac{\partial}{\partial_y}
    $
    }
    
    \centerline{$=
      \bigcup_{G\in \frac{\partial}{\partial_x}(E)} (G)\frac{\partial}{\partial_y} \cup G\cdot (E^*\cdot (E)\frac{\partial}{\partial_y})
    $
    }
    
    \centerline{$=
      \bigcup_{G\in \frac{\partial}{\partial_x}(E)} (G)\frac{\partial}{\partial_y} \cup  \bigcup_{G\in \frac{\partial}{\partial_x}(E)} G\cdot (E^*\cdot (E)\frac{\partial}{\partial_y})
    $
    }
    
    Finally, since the two following conditions hold:
    
    \centerline{$\bigcup_{G\in \frac{\partial}{\partial_x}(E)} (G)\frac{\partial}{\partial_y}=\frac{\partial}{\partial(x,y)}(E)\subset \overleftrightarrow{\mathcal{D}_{E}}$}
    
    \centerline{ and $\bigcup_{G\in \frac{\partial}{\partial_x}(E)} G\cdot (E^*\cdot (E)\frac{\partial}{\partial_y})=\frac{\partial}{\partial_x}(E)\cdot (E^*\cdot (E)\frac{\partial}{\partial_y})\subset \overleftarrow{\mathcal{D}_{E}}\cdot( E^* \cdot \overrightarrow{\mathcal{D}_{E}})$,}
    
    it holds that $\frac{\partial}{\partial(x,y)}(E^*)\subset \mathcal{E}$.    
      \item Let us now show that for any expression $G$ in $\mathcal{E}$, $\frac{\partial}{\partial(x,y)}(G)\subset\mathcal{E}$.
      \begin{enumerate}
        \item if $G$ belongs to $\overleftrightarrow{\mathcal{D}_{E}}$, by definition of the set of two-sided derived terms it holds $\frac{\partial}{\partial(x,y)}\subset \overleftrightarrow{\mathcal{D}_{E}}$.
        \item if $G$ belongs to $\overleftarrow{\mathcal{D}_{E}}\cdot E^*$, then $G=G_1\cdot E^*$ and from \textbf{(2a)} it holds that:
        
        \centerline{ $\frac{\partial}{\partial(x,y)}(G)\subset 
         \frac{\partial}{\partial_x}({G_1}) \cdot ({E^*})\frac{\partial}{\partial_y}\cup \frac{\partial}{\partial(x,y)}({G_1}) \cup \frac{\partial}{\partial(x,y)}({E^*})
        $.}
        
         Moreover, since from \textbf{(3a)} $\frac{\partial}{\partial(x,y)}({E^*})\subset\mathcal{E}$, since $\frac{\partial}{\partial_x}({G_1}) \subset \overleftarrow{\mathcal{D}_{E}}$ and since $\frac{\partial}{\partial(x,y)}({G_1}) \subset \overleftrightarrow{\mathcal{D}_{E}}$, it holds that:
        
        \centerline{
        $\frac{\partial}{\partial(x,y)}(G) $
        }
        
        \centerline{$\subset        
        \overleftarrow{\mathcal{D}_{E}} \cdot (E^*\cdot ({E})\frac{\partial}{\partial_y})
        \cup \overleftrightarrow{\mathcal{D}_{E}}
        \cup \mathcal{E}
        $
        }
        
        \centerline{$\subset
        \overleftarrow{\mathcal{D}_{E}} \cdot (E^*\cdot \overrightarrow{\mathcal{D}_{E}})
        \cup \overleftrightarrow{\mathcal{D}_{E}}
        \cup \mathcal{E}
        $
        }
        
        \centerline{$\subset \mathcal{E}
        $
        }

        \item if $G$ belongs to $E^*\cdot \overrightarrow{\mathcal{D}_{E}}$, then $G=E^* \cdot G_1$ and from \textbf{(2a)} it holds that:
        
        \centerline{ $\frac{\partial}{\partial(x,y)}(G)\subset 
       \frac{\partial}{\partial_x}({E^*}) \cdot ({G_1})\frac{\partial}{\partial_y}\cup \frac{\partial}{\partial(x,y)}({E^*}) \cup \frac{\partial}{\partial(x,y)}({G_1})
        $.}
        
         Moreover, since from \textbf{(3a)} $\frac{\partial}{\partial(x,y)}({E^*})\subset\mathcal{E}$, since $({G_1})\frac{\partial}{\partial_y} \subset \overrightarrow{\mathcal{D}_{E}}$ and since $\frac{\partial}{\partial(x,y)}({G_1}) \subset \overleftrightarrow{\mathcal{D}_{E}}$:
        
        \centerline{
        $\frac{\partial}{\partial(x,y)}(G) $
        }
        
        \centerline{$\subset
        (\frac{\partial}{\partial_x}({E})\cdot E^*) \cdot \overrightarrow{\mathcal{D}_{E}}
        \cup \mathcal{E}
        \cup \overleftrightarrow{\mathcal{D}_{E}}
        $
        }
        
        \centerline{$\subset
        (\overleftarrow{\mathcal{D}_{E}}\cdot E^*) \cdot \overrightarrow{\mathcal{D}_{E}}
        \cup \mathcal{E}
        \cup \overleftrightarrow{\mathcal{D}_{E}}
        $
        }
        
        \centerline{$\subset \mathcal{E}
        $
        }

        \item If $G$ belongs to $(\overleftarrow{\mathcal{D}_{E}}\cdot E^*) \cdot \overrightarrow{\mathcal{D}_{E}}$, then $G=(G_1\cdot E^*)\cdot G_2$ and from \textbf{(2a)} it holds that:
        
        \centerline{ $\frac{\partial}{\partial(x,y)}(G)\subset 
        \frac{\partial}{\partial_x}({G_1\cdot E^*}) \cdot ({G_2})\frac{\partial}{\partial_y}
        \cup \frac{\partial}{\partial(x,y)}({G_1\cdot E^*}) 
        \cup \frac{\partial}{\partial(x,y)}({G_2})$.}
        
        Since $\frac{\partial}{\partial_x}({G_1\cdot E^*})\subset \frac{\partial}{\partial_x}({G_1}) \cdot E^* \cup \frac{\partial}{\partial_x}({E^*})$, it holds that:
        
        \centerline{ $\frac{\partial}{\partial_x}({G_1\cdot E^*}) \cdot ({G_2})\frac{\partial}{\partial_y}\subset (\frac{\partial}{\partial_x}({G_1}) \cdot E^*)\cdot ({G_2})\frac{\partial}{\partial_y} \cup (\frac{\partial}{\partial_x}({E})\cdot E^*) \cdot ({G_2})\frac{\partial}{\partial_y}$.}
        
                Finally, since from \textbf{(3bii)} $\frac{\partial}{\partial(x,y)}({G_1\cdot E^*})\subset \mathcal{E}$, it holds:
        
        \centerline{
          $\frac{\partial}{\partial(x,y)}(G)$}
          
        \centerline{$
        \subset 
        (\overleftarrow{\mathcal{D}_E} \cdot E^*)\cdot \overrightarrow{\mathcal{D}_E} 
        \cup \mathcal{E} 
        \cup \overleftrightarrow{\mathcal{D}_E}$
        }
          
        \centerline{$
        \subset  \mathcal{E}  $
        }
        \item If $G$ belongs to $\overleftarrow{\mathcal{D}_{E}}\cdot( E^* \cdot \overrightarrow{\mathcal{D}_{E}})$, then $G=G_1\cdot (E^*\cdot G_2)$ and from \textbf{(2a)} it holds that:
        
        \centerline{ $\frac{\partial}{\partial(x,y)}(G)\subset 
        \frac{\partial}{\partial_x}({G_1}) \cdot ({E^* \cdot G_2})\frac{\partial}{\partial_y}
        \cup \frac{\partial}{\partial(x,y)}({G_1}) 
        \cup \frac{\partial}{\partial(x,y)}({E^* \cdot G_2})$.}
        
         Since $({E^* \cdot G_2})\frac{\partial}{\partial_y}\subset ({E^* })\frac{\partial}{\partial_y} \cup E^* \cdot(G_2)\frac{\partial}{\partial_y}$, it holds that:
         
         \centerline{  $\frac{\partial}{\partial_x}({G_1}) \cdot ({E^* \cdot G_2})\frac{\partial}{\partial_y}\subset \frac{\partial}{\partial_x}({G_1})\cdot ({E^* })\frac{\partial}{\partial_y} \cup \frac{\partial}{\partial_x}({G_1})\cdot(E^* \cdot(G_2)\frac{\partial}{\partial_y})$.}
         
          Finally, since from \textbf{(3biii)} $\frac{\partial}{\partial(x,y)}({E^* \cdot G_2})\subset \mathcal{E}$, it holds:
        
        \centerline{
          $\frac{\partial}{\partial(x,y)}(G)$}
          
        \centerline{$
        \subset 
        \overleftarrow{\mathcal{D}_E} \cdot (E^*\cdot \overrightarrow{\mathcal{D}_E} )
        \cup \mathcal{E} 
        \cup \overleftrightarrow{\mathcal{D}_E}$
        }
          
        \centerline{$
        \subset  \mathcal{E}  $
        }
      \end{enumerate}
    \end{enumerate}
    As a consequence, the proposition is satisfied.
  \end{enumerate}
  \cqfd
\end{proof}

\begin{proposition}\label{prop nbre 2side quot reg exp}
  Let $E$ be a regular expression of width $n>0$ and of star number $h$. Let us set $m=n+h$. Then the three following propositions hold:
  \begin{enumerate}
    \item $\mathrm{Card}(\overleftarrow{\mathcal{D}_E})\leq n$,
    \item $\mathrm{Card}(\overrightarrow{\mathcal{D}_E})\leq n$,
    \item $\mathrm{Card}(\overleftrightarrow{\mathcal{D}_E})\leq \frac{2 m\times (m+1) \times (m+2)}{3}-3$.
  \end{enumerate} 
\end{proposition}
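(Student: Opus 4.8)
The plan is to handle the three bounds in turn: (1) and (2) come from the classical theory of partial derivatives, and (3) follows from them by a structural induction driven by the inclusions of Lemma~\ref{lem form ind calcul deriv 2side}.

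For (1), $\overleftarrow{\mathcal{D}_E}$ is the set of left derived terms of $E$, i.e.\ the expressions occurring in $\frac{\partial}{\partial_w}(E)$ for nonempty $w$; Antimirov's bound~\cite{Ant96} (the derived term automaton of a width-$n$ expression has at most $n+1$ states, one of which is $E$ itself and the others are exactly these derived terms) gives $\mathrm{Card}(\overleftarrow{\mathcal{D}_E})\le n$. For (2) I would use the mirror symmetry between the two partial derivations: the right-derivation rules are the exact reversals of the left ones, so writing $E^{R}$ for the reversal of $E$ one has $\overrightarrow{\mathcal{D}_E}=\{G^{R}\mid G\in\overleftarrow{\mathcal{D}_{E^{R}}}\}$; since reversal preserves width, (2) is just (1) applied to $E^{R}$.

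For (3), set $f(m)=\frac{2m(m+1)(m+2)}{3}-3$ and prove $\mathrm{Card}(\overleftrightarrow{\mathcal{D}_E})\le f(m)$ by induction on the structure of $E$, writing $d(G)=\mathrm{Card}(\overleftrightarrow{\mathcal{D}_G})$ and $n_G,m_G$ for the width of $G$ and for $n_G+h_G$. I would first record that any width-$0$ expression $G$ satisfies $\overleftarrow{\mathcal{D}_G}=\overrightarrow{\mathcal{D}_G}=\overleftrightarrow{\mathcal{D}_G}=\emptyset$ (a one-line sub-induction from Definition~\ref{def form deriv part}, as there is no symbol to derive by); such subexpressions contribute nothing and must be treated by this $\mathrm{Card}=0$ fact rather than by evaluating $f$, since $f(0)=-3<0$. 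The base case $E=a$ is exact, $f(1)=1=\mathrm{Card}(\{\varepsilon\})$. For the inductive step I would convert the inclusions of Lemma~\ref{lem form ind calcul deriv 2side} into cardinalities, using $\mathrm{Card}(\overleftarrow{\mathcal{D}_G}),\mathrm{Card}(\overrightarrow{\mathcal{D}_G})\le n_G$ from (1) and (2), obtaining $d(E+F)\le d(E)+d(F)$, $d(E\cdot F)\le n_E n_F+d(E)+d(F)$, and $d(E^{*})\le 2n_E(n_E+1)+d(E)$.

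It then remains to verify that $f$ absorbs these three recursions, which amounts to elementary polynomial inequalities (recall $n_G\le m_G$, $m_{E+F}=m_{E\cdot F}=m_E+m_F$ and $m_{E^{*}}=m_E+1$). For the sum and product the key identity is $f(m_E+m_F)-f(m_E)-f(m_F)=2m_E m_F(m_E+m_F)+4m_E m_F+3\ge m_E m_F\ge n_E n_F$, so both $f(m_E)+f(m_F)$ and $n_E n_F+f(m_E)+f(m_F)$ stay below $f(m_E+m_F)$. For the star the identity is $f(m+1)-f(m)=2(m+1)(m+2)\ge 2m(m+1)\ge 2n_E(n_E+1)$, so $d(E^{*})\le 2n_E(n_E+1)+f(m_E)\le f(m_E+1)$. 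Whenever one of the subexpressions has width $0$ I would instead drop its vanishing derived-term sets and appeal to the monotonicity of $f$ on $[1,\infty)$. The main obstacle, and the conceptual heart of the bound, is the star case: nesting ${}^{*}$ is what forces cubic growth, each star raising $m$ by one and adding the quadratic increment $2(m+1)(m+2)$, so everything depends on choosing the cubic $f$ whose first difference is exactly this quadratic and whose constant $-3$ is pinned by the tight base case $f(1)=1$; checking that this single polynomial simultaneously dominates all three recursions, while keeping width-$0$ subexpressions out of $f$, is where the care lies.
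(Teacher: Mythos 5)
Your proof is correct and takes essentially the same route as the paper: Antimirov's bound (with its mirror for right derivation) for items (1) and (2), the inclusions of Lemma~\ref{lem form ind calcul deriv 2side} converted into the cardinality recursions $d(E+F)\le d(E)+d(F)$, $d(E\cdot F)\le n_En_F+d(E)+d(F)$, $d(E^*)\le 2n_E(n_E+1)+d(E)$, and a structural induction showing a cubic function absorbs all three. Your direct verification of the closed form $f(m)=\frac{2m(m+1)(m+2)}{3}-3$, together with the explicit width-$0$ handling, is in fact slightly more careful than the paper's detour through the recurrence-defined $\phi$, whose written recurrence $\phi(k+1)=\phi(k)+2k(k+1)$ does not match the claimed closed form (the closed form satisfies $f(k+1)=f(k)+2(k+1)(k+2)$, so it dominates and the stated bound still holds).
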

\begin{proof}
  For the set of left derived terms, the proposition is proved in~\cite{Ant96}, where it is shown that the cardinality of the set $\{E'\mid \exists w\in\Sigma^+, E'\in\frac{\partial}{\partial_w}(E)\}$ is less than $n$. This bound still holds for the set of right derived terms.
  
  Let $n_1$ (resp. $n_2$) be the width of a regular expression $F$ (resp. $G$) and $h_1$ (resp. $h_2$) be the star number of $F$ (resp. $G$). Let us set $m_1=n_1+h_1$ and $m_2=n_2+h_2$.  
  For $E=F+G$ and for $E=F\cdot G$, we have $n=n_1+n_2$, $h=h_1+h_2$ and $m=m_1+m_2$. For $E=F^*$, we have $n=n_1$, $h=h_1+1$ and $m=m_1+1$.
  
  According to  Lemma~\ref{lem form ind calcul deriv 2side}, we get:
  \begin{enumerate}
    \item $\overleftrightarrow{\mathcal{D}_{F+G}}\subset \overleftrightarrow{\mathcal{D}_F} \cup \overleftrightarrow{\mathcal{D}_G}$,
    \item $\overleftrightarrow{\mathcal{D}_{F\cdot G}}\subset  \overleftarrow{\mathcal{D}_F}\cdot \overrightarrow{\mathcal{D}_G} \cup \overleftrightarrow{\mathcal{D}_{F}}\cup \overleftrightarrow{\mathcal{D}_{G}}$,
    \item $\overleftrightarrow{\mathcal{D}_{F^*}}\subset \overleftarrow{\mathcal{D}_{F}}\cdot F^* \cup F^*\cdot \overrightarrow{\mathcal{D}_{F}}\cup \overleftrightarrow{\mathcal{D}_{F}}\cup (\overleftarrow{\mathcal{D}_{F}}\cdot F^*) \cdot \overrightarrow{\mathcal{D}_{F}}\cup \overleftarrow{\mathcal{D}_{F}}\cdot( F^* \cdot \overrightarrow{\mathcal{D}_{F}})$.
  \end{enumerate}
  
  As a consequence, we get:
  \begin{enumerate}
    \item $\mathrm{Card}(\overleftrightarrow{\mathcal{D}_{F+G}})\leq \mathrm{Card}(\overleftrightarrow{\mathcal{D}_F})+ \mathrm{Card}(\overleftrightarrow{\mathcal{D}_G})$,
    \item $\mathrm{Card}(\overleftrightarrow{\mathcal{D}_{F\cdot G}})\leq \mathrm{Card}(\overleftrightarrow{\mathcal{D}_F})+ \mathrm{Card}(\overleftrightarrow{\mathcal{D}_G}) +n_1n_2$,
    \item $\mathrm{Card}(\overleftrightarrow{\mathcal{D}_{F^*}})\leq\mathrm{Card}(\overleftrightarrow{\mathcal{D}_F})+ 2n_1(n_1+1)$.
  \end{enumerate}
  
  On the one hand the cardinality of $\overleftrightarrow{\mathcal{D}_{F^*}}$ is strictly greater than the cardinality of $\overleftrightarrow{\mathcal{D}_{F}}$ although $F$ and $F^*$ have the same width $n_1$; we therefore substitute the parameter $m_1=n_1+h_1$ to $n_1$, so that $F^*$ is associated with $m_1+1$.
  
  On the other hand, the maximal increase of the cardinality of $\overleftrightarrow{\mathcal{D}_{E}}$ (w.r.t. $m$) occurs in the star case; we therefore consider the function $\phi$ such that:
  \begin{enumerate}
    \item $\phi(0)=0$ and $\phi(1)=1$,
    \item $\phi(k+1)=\phi(k)+2\times k\times (k+1)$,
  \end{enumerate}
and we show that $\overleftrightarrow{\mathcal{D}_{E}}\leq \phi(m)$ for any regular expression $E$.
  
  According to Lemma~\ref{lem form ind calcul deriv 2side} and by induction hypothesis, it holds:
  \begin{enumerate}
    \item $\mathrm{Card}(\overleftrightarrow{\mathcal{D}_{F+G}})\leq \phi(m_1)+\phi(m_2)$,
    \item $\mathrm{Card}(\overleftrightarrow{\mathcal{D}_{F\cdot G}})\leq \phi(m_1)+\phi(m_2)+n_1\times n_2$,
    \item $\mathrm{Card}(\overleftrightarrow{\mathcal{D}_{F^*}})\leq \phi(m_1)+2n_1(n_1+1)$.
  \end{enumerate}
  
  It can be checked that:
  
  \centerline{$\phi(m_1)+\phi(m_2)\leq \phi(m_1)+\phi(m_2)+n_1\times n_2 \leq \phi(m_1+m_2)$.}
  
  As a consequence, it holds:
  \begin{enumerate}
    \item $\mathrm{Card}(\overleftrightarrow{\mathcal{D}_{F+G}})\leq \phi(m_1+m_2)$,
    \item $\mathrm{Card}(\overleftrightarrow{\mathcal{D}_{F\cdot G}})\leq \phi(m_1+m_2)$.
  \end{enumerate} 
  
  Furthermore, by definition of $\phi$ and since $m_1\geq n_1$, it holds:  
  
  \centerline{$\phi(m_1)+2n_1(n_1+1) \leq \phi(m_1)+2(m_1)(m_1+1)=\phi(m_1+1)$}
  
  and consequently $\mathrm{Card}(\overleftrightarrow{\mathcal{D}_{F^*}})\leq \phi(m_1+1)$.

 Finally, since $\sum^k_{j=1} j(j+1)=\frac{k(k+1)(k+2)}{3}$, it holds for all integer $k\geq 1$:
 
 \centerline{
   $\phi(k)=\frac{2k(k+1)(k+2)}{3}-3$.
 }

  \cqfd
\end{proof}

\begin{proposition}\label{prop nombre der hairpin}
  Let $E$ be a regular expression over an alphabet $\Gamma$, $\mathrm{H}$ be an antimorphism over $\Gamma^*$ and $k>0$ be an integer. Then:
  \begin{enumerate}
    \item $\mathrm{Card}(\overleftrightarrow{\mathcal{D}_{\mathrm{H}'_k(E)}})\leq k\times \mathrm{Card}(\overleftrightarrow{\mathcal{D}_E})$,
    \item $\mathrm{Card}(\overleftrightarrow{\mathcal{D}_{\overrightarrow{\mathrm{H}_k}(E)}})\leq \mathrm{Card}(\overleftarrow{\mathcal{D}_E})+ k\times \mathrm{Card}(\overleftrightarrow{\mathcal{D}_E})$,
    \item $\mathrm{Card}(\overleftrightarrow{\mathcal{D}_{\overleftarrow{\mathrm{H}_k}(E)}})\leq \mathrm{Card}(\overrightarrow{\mathcal{D}_E})+ k\times \mathrm{Card}(\overleftrightarrow{\mathcal{D}_E})$.
  \end{enumerate}
\end{proposition}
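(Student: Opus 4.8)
The plan is to treat each of the three operators separately by exhibiting an explicit finite set of hairpin expressions that contains the relevant $\overleftrightarrow{\mathcal{D}}$, is closed under two-sided derivation, and whose cardinality matches the claimed bound. Before doing so I would record three closure facts about the purely regular case that drive everything. First, since $\Sigma_\Gamma$ contains the couples $(a,\varepsilon)$ and $(\varepsilon,b)$ and since $\frac{\partial}{\partial_{(a,\varepsilon)}}(F)=\frac{\partial}{\partial_a}(F)$ and $\frac{\partial}{\partial_{(\varepsilon,b)}}(F)=(F)\frac{\partial}{\partial_b}$ by Definition~\ref{def form deriv part}, every left (resp. right) derived term of $E$ is reachable by a sequence of two-sided derivations; hence $\overleftarrow{\mathcal{D}_E}\subseteq\overleftrightarrow{\mathcal{D}_E}$ and $\overrightarrow{\mathcal{D}_E}\subseteq\overleftrightarrow{\mathcal{D}_E}$. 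Second, by its very definition $\overleftrightarrow{\mathcal{D}_E}$ is closed under two-sided derivation, so $\frac{\partial}{\partial_{(x,y)}}(G)\subseteq\overleftrightarrow{\mathcal{D}_E}$ for every $G\in\overleftrightarrow{\mathcal{D}_E}$ and every $(x,y)\in\Sigma_\Gamma$. Third, $\overleftarrow{\mathcal{D}_E}$ is closed under left derivation and $\overrightarrow{\mathcal{D}_E}$ under right derivation.

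For the $\mathrm{H}'_k$ operator, Definition~\ref{def form deriv part} shows that differentiating a term $\mathrm{H}'_j(G)$ yields $\mathrm{H}'_{j-1}(\frac{\partial}{\partial_{(x,y)}}(G))$ when $j>1$ and $\frac{\partial}{\partial_{(x,y)}}(G)$ when $j=1$. Using the closure of $\overleftrightarrow{\mathcal{D}_E}$, I would prove by induction on the derivation length that
\[
\overleftrightarrow{\mathcal{D}_{\mathrm{H}'_k(E)}}\subseteq\{\mathrm{H}'_j(G)\mid 1\le j\le k-1,\ G\in\overleftrightarrow{\mathcal{D}_E}\}\cup\overleftrightarrow{\mathcal{D}_E},
\]
and that this right-hand side is itself closed under two-sided derivation. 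Counting its $(k-1)\,\mathrm{Card}(\overleftrightarrow{\mathcal{D}_E})+\mathrm{Card}(\overleftrightarrow{\mathcal{D}_E})=k\,\mathrm{Card}(\overleftrightarrow{\mathcal{D}_E})$ shapes gives item~1.

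For $\overrightarrow{\mathrm{H}_k}$, the derivation formula splits each step into a branch $\overrightarrow{\mathrm{H}_k}(\frac{\partial}{\partial_x}(\cdot))$ that keeps the operator and feeds it a \emph{left} derivative, and a spin-off branch producing $\mathrm{H}'_{k-1}(\frac{\partial}{\partial_{(x,y)}}(\cdot))$ (if $k>1$) or $\frac{\partial}{\partial_{(x,y)}}(\cdot)$ (if $k=1$); note that a nonempty result forces $x,y\in\Gamma$ with $y=\mathrm{H}(x)$, so these are genuine single-symbol and two-sided derivatives. Since the operator branch only ever feeds elements of $\overleftarrow{\mathcal{D}_E}$ back into $\overrightarrow{\mathrm{H}_k}$, I would establish by induction that
\[
\overleftrightarrow{\mathcal{D}_{\overrightarrow{\mathrm{H}_k}(E)}}\subseteq\{\overrightarrow{\mathrm{H}_k}(F)\mid F\in\overleftarrow{\mathcal{D}_E}\}\cup\{\mathrm{H}'_j(G)\mid 1\le j\le k-1,\ G\in\overleftrightarrow{\mathcal{D}_E}\}\cup\overleftrightarrow{\mathcal{D}_E},
\]
and that this set is closed under two-sided derivation; its cardinality $\mathrm{Card}(\overleftarrow{\mathcal{D}_E})+k\,\mathrm{Card}(\overleftrightarrow{\mathcal{D}_E})$ yields item~2. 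Item~3 is entirely symmetric, interchanging left and right derivatives and replacing $\overleftarrow{\mathcal{D}_E}$ by $\overrightarrow{\mathcal{D}_E}$.

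The main obstacle is the closure check in the $\overrightarrow{\mathrm{H}_k}$ (and symmetrically $\overleftarrow{\mathrm{H}_k}$) case: differentiating a term $\overrightarrow{\mathrm{H}_k}(F)$ with $F\in\overleftarrow{\mathcal{D}_E}$ produces, in the spin-off branch, the \emph{two-sided} derivative $\frac{\partial}{\partial_{(x,y)}}(F)$ of a \emph{one-sided} derived term $F$, and one must be sure this stays inside $\overleftrightarrow{\mathcal{D}_E}$ rather than escaping to $\overleftrightarrow{\mathcal{D}_F}$. This is exactly where the first closure fact is used: $F\in\overleftarrow{\mathcal{D}_E}\subseteq\overleftrightarrow{\mathcal{D}_E}$, and $\overleftrightarrow{\mathcal{D}_E}$ is closed under two-sided derivation, so indeed $\frac{\partial}{\partial_{(x,y)}}(F)\subseteq\overleftrightarrow{\mathcal{D}_E}$. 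Once this point is secured, the remaining inductions are routine bookkeeping over the finitely many shapes $\overrightarrow{\mathrm{H}_k}(\overleftarrow{\mathcal{D}_E})$, $\mathrm{H}'_j(\overleftrightarrow{\mathcal{D}_E})$ and $\overleftrightarrow{\mathcal{D}_E}$, and the cardinality bounds follow by summation.
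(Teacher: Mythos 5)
Your proposal is correct and follows essentially the same route as the paper: you exhibit exactly the paper's witnessing sets (\emph{e.g.} $\{\overrightarrow{\mathrm{H}_k}(E')\mid E'\in\overleftarrow{\mathcal{D}_E}\}\cup\{\mathrm{H}'_{k'}(E')\mid E'\in\overleftrightarrow{\mathcal{D}_E}\wedge k'<k\}\cup\overleftrightarrow{\mathcal{D}_E}$ for item~2), verify they contain the first derivatives and are closed under two-sided derivation, and obtain the bounds by counting shapes. The only difference is cosmetic: you make explicit the inclusion $\overleftarrow{\mathcal{D}_E}\subseteq\overleftrightarrow{\mathcal{D}_E}$ (via couples $(a,\varepsilon)$) needed for the spin-off branch, a point the paper uses implicitly in step \textbf{(2b)(i)}.
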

\begin{proof}
  Let $E$ be a regular expression.
  
    \textbf{(1)} Let us set $\mathcal{E}=\{\mathrm{H}'_{k'}(E')\mid E'\in \overleftrightarrow{\mathcal{D}_E}\ \wedge\ k'< k \} \cup \overleftrightarrow{\mathcal{D}_E}$. Let us show that $\overleftrightarrow{\mathcal{D}_{\mathrm{H}'_k(E)}}\subset\mathcal{E}$.
    
    \textbf{(a)} According to Definition~\ref{def form deriv part}, for any couple $(x,y)$ in $\Sigma_\Gamma$, $\frac{\partial}{\partial(x,y)}(\mathrm{H}'_k(E))\subset\mathcal{E}$.
    
      \textbf{(b)} Let us show that any derived term of an expression $G$ in $\mathcal{E}$ belongs to $\mathcal{E}$.
      
      \textbf{(i)} if $G$ belongs to $ \overleftrightarrow{\mathcal{D}_E}$, so do its derived terms.
      
        \textbf{(ii)} if $G\in \{\mathrm{H}'_{k'}(E')\mid E'\in \overleftrightarrow{\mathcal{D}_E}\ \wedge\ k'< k \}$, then $G= \mathrm{H}'_{k'}(G_1)$ with $G_1\in \overleftrightarrow{\mathcal{D}_E}$ and from Definition~\ref{def form deriv part} it holds:
        
        \centerline{ $\frac{\partial}{\partial(x,y)}(G)\subset\{\mathrm{H}'_{k''}(G_2)\mid G_2\in \overleftrightarrow{\mathcal{D}_{G_1}}\ \wedge\ k''< k' \} \cup \overleftrightarrow{\mathcal{D}_{G_1}}$.}
        
         By definition of $G_1$, $\overleftrightarrow{\mathcal{D}_{G_1}}\subset \overleftrightarrow{\mathcal{D}_{E}}$. Consequently $\frac{\partial}{\partial(x,y)}(G)\subset \mathcal{E}$.
        
      \textbf{(c)} Finally, since $\mathrm{Card}(\mathcal{E})=(k-1)\times \mathrm{Card}(\overleftrightarrow{\mathcal{D}_E}) +\mathrm{Card}(\overleftrightarrow{\mathcal{D}_E})$, the proposition holds.
      
    \textbf{(2)} Let us set $\mathcal{E}=\{\overrightarrow{\mathrm{H}_k}(E')\mid E'\in \overleftarrow{\mathcal{D}_E} \} \cup \{\mathrm{H}'_{k'}(E')\mid E'\in \overleftrightarrow{\mathcal{D}_E}\ \wedge\ k'< k \} \cup \overleftrightarrow{\mathcal{D}_E}$. Let us show that $\overleftrightarrow{\mathcal{D}_{\overrightarrow{\mathrm{H}_k}(E)}}\subset\mathcal{E}$.
    
    \textbf{(a)}  According to Definition~\ref{def form deriv part}, for any couple $(x,y)$ in $\Sigma_\Gamma$, $\frac{\partial}{\partial(x,y)}(\overrightarrow{\mathrm{H}_k}(E))\subset\mathcal{E}$.
    
      \textbf{(b)} Let us show that any derived term of an expression $G$ in $\mathcal{E}$ belongs to $\mathcal{E}$.
      
      \textbf{(i)} if $G$ belongs to $\{\overrightarrow{\mathrm{H}_k}(E')\mid E'\in \overleftarrow{\mathcal{D}_E}\}$ then $G=\overrightarrow{\mathrm{H}_k}(G_1)$ with $G_1\in \overleftrightarrow{\mathcal{D}_E}$ and from Definition~\ref{def form deriv part} it holds that:
      
      \centerline{ $\frac{\partial}{\partial(x,y)}(G)\subset \{\overrightarrow{\mathrm{H}_{k}}(G_2)\mid G_2\in \overleftarrow{\mathcal{D}_{G_1}} \} \cup \{\mathrm{H}'_{k'}({G_2})\mid {G_2}\in \overleftrightarrow{\mathcal{D}_{G_1}}\ \wedge\ k'< k \} \cup \overleftrightarrow{\mathcal{D}_{G_1}}$.}
      
       Since by definition of $G_1$, $\overleftrightarrow{\mathcal{D}_{G_1}}\subset \overleftrightarrow{\mathcal{D}_{E}}$ and $\overleftarrow{\mathcal{D}_{G_1}}\subset \overleftarrow{\mathcal{D}_{E}}$, it holds: $\frac{\partial}{\partial(x,y)}(G)\subset \mathcal{E}$.
      
        \textbf{(ii)} if $G$ belongs to $\{\mathrm{H}'_{k'}(E')\mid E'\in \overleftrightarrow{\mathcal{D}_E}\ \wedge\ k'< k \}$,then $G= \mathrm{H}'_{k'}(G_1)$ with $G_1\in \overleftrightarrow{\mathcal{D}_E}$ and from Definition~\ref{def form deriv part} it holds:
        
        \centerline{ $\frac{\partial}{\partial(x,y)}(G)\subset\{\mathrm{H}'_{k''}(G_2)\mid G_2\in \overleftrightarrow{\mathcal{D}_{G_1}}\ \wedge\ k''< k' \} \cup \overleftrightarrow{\mathcal{D}_{G_1}}$.}
        
         By definition of $G_1$, $\overleftrightarrow{\mathcal{D}_{G_1}}\subset \overleftrightarrow{\mathcal{D}_{E}}$. Hence $\frac{\partial}{\partial(x,y)}(G)\subset \mathcal{E}$.
        
        \textbf{(iii)} if $G$ belongs to $\overleftrightarrow{\mathcal{D}_E}$, so do its derived terms.
        
      \textbf{(c)} Finally, since $\mathrm{Card}(\mathcal{E})=\mathrm{Card}(\overleftarrow{\mathcal{D}_E})+(k-1)\times \mathrm{Card}(\overleftrightarrow{\mathcal{D}_E}) +\mathrm{Card}(\overleftrightarrow{\mathcal{D}_E})$, the proposition holds.
    
    \textbf{(3)}The proof is similar as for case \textbf{(2)}, with $\overrightarrow{\mathcal{D}_E}$ playing the role of $\overleftarrow{\mathcal{D}_E}$.

  \cqfd
\end{proof}

  The \emph{index} of a hairpin expression $E$ is the integer $\mathrm{index}(E)$ inductively defined by: 
  
  \centerline{$\mathrm{index}(F)=0$,}
  
  \centerline{ $\mathrm{index}(\overleftarrow{\mathrm{H}_k}(F))=k$, $\mathrm{index}(\overrightarrow{\mathrm{H}_k}(F))=k$, $\mathrm{index}(\mathrm{H}'_k(F))=k$,}
  
  \centerline{ $\mathrm{index}(G_1+G_2)=\mathrm{max}(\mathrm{index}(G_1),\mathrm{index}(G_2))$,}
  
   where $\mathrm{H}$ is any anti-morphism over $\Gamma^*$, $k>0$ is any integer, $F$ is any regular expression over $\Gamma$, and $G_1$ and $G_2$ are any two hairpin expressions over $\Gamma$.

\begin{proposition}\label{prop nbre term der fini}
  Let $E$ be a hairpin expression over an alphabet $\Gamma$. Then $\overleftrightarrow{\mathcal{D}_E}$ is a finite set the cardinal of which is upper bounded by $k\times  (\frac{2m(m+1)(m+2)}{3}-3)+n$, where $k$ is the index of $E$, and $m=n+h$ with $n$ its width and $h$ its star number.
\end{proposition}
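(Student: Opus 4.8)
The plan is to argue by structural induction on the hairpin expression $E$, exploiting the crucial feature of the definition of hairpin expressions that the three completion operators $\overrightarrow{\mathrm{H}_k}$, $\overleftarrow{\mathrm{H}_k}$ and $\mathrm{H}'_k$ may only be applied to a \emph{regular} subexpression, while two hairpin expressions can be combined only through $+$. Consequently the induction has exactly three kinds of nodes: a regular leaf, a single completion operator wrapping a regular expression, and a binary sum. Throughout I write $\phi(m)=\frac{2m(m+1)(m+2)}{3}-3$, so that Proposition~\ref{prop nbre 2side quot reg exp}(3) reads $\mathrm{Card}(\overleftrightarrow{\mathcal{D}_F})\leq \phi(m)$ for a regular expression $F$. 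Finiteness of $\overleftrightarrow{\mathcal{D}_E}$ is then an immediate by-product of the numerical bound, so the real work is to track the cardinality.

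For the leaves and the operator nodes I would simply combine the two counting results already at hand. If $E$ is regular, Proposition~\ref{prop nbre 2side quot reg exp}(3) gives $\mathrm{Card}(\overleftrightarrow{\mathcal{D}_E})\leq \phi(m)$, which is the base of the induction. If $E=\overrightarrow{\mathrm{H}_k}(F)$ with $F$ regular, Proposition~\ref{prop nombre der hairpin}(2) together with Proposition~\ref{prop nbre 2side quot reg exp}(1),(3) yields $\mathrm{Card}(\overleftrightarrow{\mathcal{D}_E})\leq \mathrm{Card}(\overleftarrow{\mathcal{D}_F})+k\,\mathrm{Card}(\overleftrightarrow{\mathcal{D}_F})\leq n+k\,\phi(m)$, which is exactly the announced bound since $\mathrm{index}(E)=k$ while $E$ keeps width $n$ and star number $h$. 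The cases $E=\overleftarrow{\mathrm{H}_k}(F)$ and $E=\mathrm{H}'_k(F)$ are handled identically (the latter even giving the smaller bound $k\,\phi(m)$), using parts (3) and (1) of Proposition~\ref{prop nombre der hairpin} respectively.

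The substantive case, and the one I expect to be the main obstacle, is the sum $E=G_1+G_2$. The $+$ clause of Definition~\ref{def form deriv part} gives $\frac{\partial}{\partial(x,y)}(G_1+G_2)=\frac{\partial}{\partial(x,y)}(G_1)\cup\frac{\partial}{\partial(x,y)}(G_2)$, and since derived terms of $G_i$ never leave $\overleftrightarrow{\mathcal{D}_{G_i}}$, the very argument used for Lemma~\ref{lem form ind calcul deriv 2side}(1) gives $\overleftrightarrow{\mathcal{D}_{G_1+G_2}}\subseteq \overleftrightarrow{\mathcal{D}_{G_1}}\cup\overleftrightarrow{\mathcal{D}_{G_2}}$; note that this argument only invokes the $+$ rule and so transfers verbatim from regular to hairpin expressions. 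By the induction hypothesis $\mathrm{Card}(\overleftrightarrow{\mathcal{D}_E})\leq k_1\phi(m_1)+k_2\phi(m_2)+n_1+n_2$, where $k_i,n_i,m_i$ are the index, width and $m$-parameter of $G_i$. The point is to collapse this into the single bound $\max(k_1,k_2)\,\phi(m_1+m_2)+(n_1+n_2)$ attached to $E$, since $\mathrm{index}(E)=\max(k_1,k_2)$, $n=n_1+n_2$ and $m=m_1+m_2$. This rests on two facts: first, $k_i\leq\max(k_1,k_2)$, so that $k_1\phi(m_1)+k_2\phi(m_2)\leq \max(k_1,k_2)\bigl(\phi(m_1)+\phi(m_2)\bigr)$; second, the superadditivity $\phi(m_1)+\phi(m_2)\leq\phi(m_1+m_2)$, which was already verified inside the proof of Proposition~\ref{prop nbre 2side quot reg exp}. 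Chaining the two inequalities closes the induction.

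The delicate bookkeeping is thus entirely concentrated in this last step: one must bound both summand-indices by their maximum \emph{before} invoking superadditivity of $\phi$, otherwise the cubic contribution of one summand cannot be absorbed into the cubic term of the whole expression. Once the induction is closed one reads off that $\overleftrightarrow{\mathcal{D}_E}$ is finite and bounded by $k\,\phi(m)+n$ with $k=\mathrm{index}(E)$ and $m=n+h$, which is the claim. The only care needed at a regular leaf is that its genuine estimate $\phi(m)$ is already subsumed by the global bound as soon as some completion operator occurs in $E$, since then $\mathrm{index}(E)\geq 1$ and the factor $k$ dominates the isolated cubic term.
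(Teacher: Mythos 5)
Your proof is correct and follows essentially the same route as the paper: the non-sum cases are discharged by combining Proposition~\ref{prop nbre 2side quot reg exp} with Proposition~\ref{prop nombre der hairpin}, and the sum case is closed by bounding both indices by $\max(k_1,k_2)$ (the paper's ``w.l.o.g.\ $k_1\geq k_2$'') before invoking the superadditivity $\phi(m_1)+\phi(m_2)\leq\phi(m_1+m_2)$ already checked in the proof of Proposition~\ref{prop nbre 2side quot reg exp}. Your closing remark about the regular leaf is a fair observation the paper leaves implicit: for a purely regular $E$ the index is $0$ and the honest bound is $\phi(m)$ rather than $n$, so the stated bound $k\,\phi(m)+n$ should be read as applying once at least one completion operator occurs.
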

\begin{proof}
  Directly deduced from Proposition~\ref{prop nbre 2side quot reg exp} and from Proposition~\ref{prop nombre der hairpin} for the non-sum cases.  
  Whenever $E=G_1+G_2$, let us set for $i\in\{1,2\}$, $n_i$ the width of $G_i$, $h_i$ its star number, $k_i$ its index and $m_i=n_i+h_i$. Without loss of generality suppose that $k_1\geq k_2$.
  Let $\phi$ be the function defined by:
  
  \centerline{
    $\phi(k)=
      \left\{
        \begin{array}{l@{\ }l}
          0 & \text{ if }k=0,\\
         \frac{2k(k+1)(k+2)}{3}-3 & \text{ otherwise.}\\
        \end{array}
      \right.$
  }  
  
  It can be checked that the following proposition \textbf{P} holds:
  
  \centerline{
    $\phi(k_1+k_2)\geq \phi(k_1) + \phi(k_2)$.
  }  
  
  By induction and from \textbf{P} it holds:
  
  \centerline{
    \begin{tabular}{l@{\ }l}
      $\mathrm{Card}(\overleftrightarrow{\mathcal{D}_{G_1}})+\mathrm{Card}(\overleftrightarrow{\mathcal{D}_{G_2}})$ &
      $\leq k_1\times \phi(m_1)+n_1+k_2 \phi(m_2)+n_2$\\
      & $\leq k_1\times (\phi(m_1)+ \phi(m_2))+n$\\
      & $\leq k_1\times \phi(m_1+m_2)+n$\\      
    \end{tabular}
  }
  \cqfd
\end{proof}

This finite set of two-sided derived terms allows us to extend the finite derived term automaton to hairpin expressions.

\begin{definition}
  Let $E$ be a hairpin expression over an alphabet $\Gamma$. Let $A=(\Sigma_\Gamma,Q,I,F,\delta)$ be the NFA defined by: 
  \begin{itemize}
    \item $Q=\{E\}\cup \overleftrightarrow{\mathcal{D}_E}$,
    \item  $I=\{E\}$,
    \item $F=\{E'\in Q\mid \varepsilon\in L(E')\}$,
    \item  $\forall (x,y)\in\Sigma_\Gamma, \forall E' \in Q,$ $\delta(E',(x,y))= \frac{\partial}{\partial_{(x,y)}}(E')$.
  \end{itemize}
  
  The automaton $A$ is the \emph{two-sided derived term automaton} of $E$.
\end{definition}

By construction, $A$ is a $\Gamma$-couple NFA where $\Gamma$ is the alphabet of $E$.

\begin{example}
  Let $E$ be the hairpin expression of Example~\ref{ex calcul deriv term}. The derived term automaton of $E$ is the automaton presented in Figure~\ref{fig ex aut term deriv}.
\end{example}

\begin{figure}[H]
  \centerline{ 
    \begin{tikzpicture}[node distance=2cm,bend angle=30]   
    \node[initial below,state] (E) {$E$}; 
    \node[state,rounded rectangle,right of=E] (h1c) {$\overrightarrow{\mathrm{H}_1}(c)$}; 
    \node[accepting,state,left of=E] (eps) {$\varepsilon$}; 
    \node[state,rounded rectangle,right of=h1c] (h1eps) {$\overrightarrow{\mathrm{H}_1}(\varepsilon)$};
    \path[->]
      (E)   edge [swap,in=120,out=60,loop] node {$(a,a)$} ()
      (E)   edge [swap] node {$(b,c)$} (h1c)
      (E)   edge  node {$(b,c)$} (eps)
      (h1c)   edge [swap] node {$(c,b)$} (h1eps)
      ;	            
    \end{tikzpicture}
  }  
  \caption{The Derived Term Automaton of the Expression $E$.}
  \label{fig ex aut term deriv}
\end{figure}
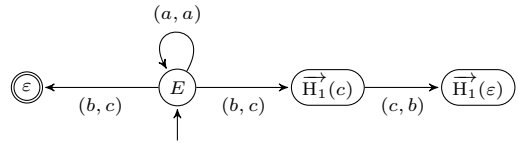

\begin{proposition}\label{prop twosided der term aut bon lang}
  Let $E$ be a hairpin expression over an alphabet $\Gamma$ and $A$ be the two-sided derived term automaton of $E$. Then: $L(E)=L_\Gamma(A)$.
\end{proposition}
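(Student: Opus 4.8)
The plan is to reduce the equality to a pointwise identity on the states of $A$ and then prove it by induction on word length. Since the two-sided derived term automaton has the single initial state $E$, Lemma~\ref{lem prop triv couple nfa} gives $L_\Gamma(A)=\overrightarrow{L}_\Gamma(E)$, so it suffices to establish that $\overrightarrow{L}_\Gamma(q)=L(q)$ for \emph{every} state $q\in Q=\{E\}\cup\overleftrightarrow{\mathcal{D}_E}$. I would prove the stronger claim that for all $q\in Q$ and all $w\in\Gamma^*$, $w\in\overrightarrow{L}_\Gamma(q)\Leftrightarrow w\in L(q)$, by induction on $|w|$. The base case $w=\varepsilon$ is immediate: by Lemma~\ref{lem calcul lang droit gamma} the only way $\varepsilon$ enters a $\Gamma$-right language is through the summand $\{\varepsilon\mid q\in F\}$ (every other summand produces nonempty words since $(x,y)\neq(\varepsilon,\varepsilon)$), so $\varepsilon\in\overrightarrow{L}_\Gamma(q)\Leftrightarrow q\in F$, and by definition of $F$ this is exactly $\varepsilon\in L(q)$.

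For the inductive step with $|w|\geq 1$, I would pivot between the two recursive descriptions at my disposal: the automaton side via Lemma~\ref{lem calcul lang droit gamma}, which reads $w\in\overrightarrow{L}_\Gamma(q)$ iff $w=xw'y$ for some $(x,y)\in\Sigma_\Gamma$ and some $q'\in\delta(q,(x,y))=\frac{\partial}{\partial_{(x,y)}}(q)$ with $w'\in\overrightarrow{L}_\Gamma(q')$; and the language side via Proposition~\ref{prop deriv two side bon lang}, which identifies $\bigcup_{q'\in\frac{\partial}{\partial_{(x,y)}}(q)}L(q')$ with the two-sided residual $(x,y)^{-1}(L(q))$. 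Since each transition strictly shortens the word ($|w'|<|w|$), the induction hypothesis converts $w'\in\overrightarrow{L}_\Gamma(q')$ into $w'\in L(q')$ and back. In the forward direction, $w'\in L(q')\subseteq(x,y)^{-1}(L(q))$ yields $w=xw'y\in L(q)$; in the backward direction, from $w\in L(q)$ I must exhibit a legal first transition, i.e.\ a factorization $w=xw'y$ with $w'\in(x,y)^{-1}(L(q))$, so that some $q'\in\frac{\partial}{\partial_{(x,y)}}(q)$ satisfies $w'\in L(q')$.

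The delicate point — and the step I expect to be the main obstacle — is the treatment of couples $(x,y)\in\Sigma_\Gamma$ with exactly one $\varepsilon$ component, because Proposition~\ref{prop deriv two side bon lang} is only guaranteed for couples in $\Gamma^2$ unless the underlying expression is regular. I would resolve this by exploiting the structure of the states: every state other than $E$ is an \emph{atom}, namely either a (regular) derived term or one of $\overrightarrow{\mathrm{H}_{k'}}(F')$, $\overleftarrow{\mathrm{H}_{k'}}(F')$, $\mathrm{H}'_{k'}(F')$, and $E$ itself splits into a finite sum of such atoms whose derivative is the corresponding union. For a genuine hairpin atom, Definition~\ref{def form deriv part} forces $\frac{\partial}{\partial_{(x,y)}}$ to be empty whenever $y\neq\mathrm{H}(x)$, and since $\mathrm{H}(\varepsilon)=\varepsilon$ while $\mathrm{H}$ maps $\Gamma$ into $\Gamma$, any couple with a single $\varepsilon$ component has $y\neq\mathrm{H}(x)$; hence genuine hairpin atoms admit no $\varepsilon$-transition, and every nonempty transition leaving them carries a couple in $\Gamma^2$, to which Proposition~\ref{prop deriv two side bon lang} applies verbatim. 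For the backward direction this dovetails with Lemma~\ref{lem mot ds lang h couple deb fin}: a nonempty word of a hairpin-completed language necessarily factors as $aw'\mathrm{H}(a)$, so I peel from both sides with the couple $(a,\mathrm{H}(a))\in\Gamma^2$, whereas on a regular atom I peel one letter from the left with $(a,\varepsilon)$ and invoke the ``furthermore'' clause of Proposition~\ref{prop deriv two side bon lang}. Monotonicity of two-sided residuals together with $L(G_i)\subseteq L(q)$ then glues the atomwise computations into the statement for $q$ (hairpin atoms contributing $\emptyset$ to the $\varepsilon$-component derivatives). Closing the induction and combining with the reduction of the first paragraph yields $L(E)=\overrightarrow{L}_\Gamma(E)=L_\Gamma(A)$.
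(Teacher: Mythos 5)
Your proposal is correct and follows essentially the same route as the paper's own proof: the same strengthening to the pointwise claim $w\in\overrightarrow{L}_\Gamma(q)\Leftrightarrow w\in L(q)$ for every state, the same induction on $|w|$ pivoting between Lemma~\ref{lem calcul lang droit gamma} (automaton side) and Proposition~\ref{prop deriv two side bon lang} (language side), and the same case split between regular states, handled via the \emph{furthermore} clause over all of $\Sigma_\Gamma$, and hairpin-operator states, where Lemma~\ref{lem mot ds lang h couple deb fin} and the emptiness of derivatives when $y\neq\mathrm{H}(x)$ force the peeled couple into $\Gamma^2$. If anything you are slightly more thorough than the paper, whose two-case analysis (\emph{simple} versus hairpin-operator states) never explicitly treats a state that is a sum $G_1+G_2$ (only the initial state can be one); your atomwise decomposition and gluing via $L(G_i)\subseteq L(q)$ covers that case explicitly.
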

\begin{proof}
  Let $A=(\Sigma,Q,I,F,\delta)$, let $w$ be a word in $\Gamma^*$ and let $E'$ be a state in $Q$. Let us show that the following proposition \textbf{(P)} is satisfied: $w\in \overrightarrow{L}_\Gamma(E')$ $\Leftrightarrow$ $w\in L(E')$.  
  By recurrence over the length of $w$.
  
  \textbf{(I)} If $w=\varepsilon$, then:
  
  $w \in \overrightarrow{L}_\Gamma(E')$
  
  $\Leftrightarrow$ $E'\in F$ (Lemma~\ref{lem calcul lang droit gamma})
  
  $\Leftrightarrow$ $\varepsilon\in L(E')$ (Construction of $A$)
  
  $\Leftrightarrow$ $w\in L(E')$.
  
  \textbf{(II)} Let us suppose that $|w|>0$. Then $\exists (x,y)\in\Sigma_\Gamma$, $\exists w'\in\Gamma^*$ such that $w=xw'y$. 
  
  \textbf{(a)} If $E'$ is a simple hairpin expression, then
  
  $xw'y \in \overrightarrow{L}_\Gamma(E')$
  
   $\Leftrightarrow$  
      $w' \in (x,y)^{-1}(\overrightarrow{L}_\Gamma(E')) $ (Definition~\ref{def two side quot})
  
  $\Leftrightarrow$ 
      $w' \in \bigcup_{E''\in\delta(E',(x,y))} \overrightarrow{L}_\Gamma(E'') $
  (Corollary~\ref{cor prop triv couple nfa})
  
  $\Leftrightarrow$ $w' \in \bigcup_{E''\in\frac{\partial}{\partial(x,y)}(E')} \overrightarrow{L}_\Gamma(E'')$ (Construction of $A$)
  
  $\Leftrightarrow$ $w' \in \bigcup_{E''\in\frac{\partial}{\partial(x,y)}(E')} L(E'')$ (Recurrence hypothesis)
  
  $\Leftrightarrow$ $w' \in (x,y)^{-1}(L(E'))$ (Proposition~\ref{prop deriv two side bon lang})
  
  $\Leftrightarrow$ $xw'y \in L(E')$ (Definition~\ref{def two side quot})  
  $\Leftrightarrow$ $w \in L(E')$.  
  
  \textbf{(b)} If $E'\in\{\overleftarrow{\mathrm{H}}_k(F),\overrightarrow{\mathrm{H}}_k(F),\mathrm{H}'_k(F)\}$, then it holds $w\in L(E')$ $\Rightarrow$ $y=\mathrm{H}(x)$ (according to Lemma~\ref{lem mot ds lang h couple deb fin}). Consequently, if $y\neq\mathrm{H}(x)$,$\delta(E',(x,y))=\emptyset$ and $w\notin \overrightarrow{L}_\Gamma(E')$. Hence, since $w\notin L(E')$, proposition is satisfied. Let us now suppose that $y=\mathrm{H}(x)$. Since $(\varepsilon,\varepsilon)\notin \Sigma_\Gamma$, $(x,y)\in \Gamma\times\Gamma$.
  
  $xw'\mathrm{H}(x) \in \overrightarrow{L}_\Gamma(E')$
  
   $\Leftrightarrow$ $w' \in (x,\mathrm{H}(x))^{-1}(\overrightarrow{L}_\Gamma(E'))$ (Definition~\ref{def two side quot})
  
  $\Leftrightarrow$ $w' \in \bigcup_{E''\in\delta(E',(x,\mathrm{H}(x)))} \overrightarrow{L}_\Gamma(E'')$ (Corollary~\ref{cor prop triv couple nfa})
  
  $\Leftrightarrow$ $w' \in \bigcup_{E''\in\frac{\partial}{\partial(x,\mathrm{H}(x))}(E')} \overrightarrow{L}_\Gamma(E'')$ (Construction of $A$)
  
  $\Leftrightarrow$ $w' \in \bigcup_{E''\in\frac{\partial}{\partial(x,\mathrm{H}(x))}(E')} L(E'')$ (Recurrence hypothesis)
  
  $\Leftrightarrow$ $w' \in (x,\mathrm{H}(x))^{-1}(L(E''))$ (Proposition~\ref{prop deriv two side bon lang})
  
  $\Leftrightarrow$ $xw'\mathrm{H}(x) \in L(E')$ (Definition~\ref{def two side quot})
  
  $\Leftrightarrow$ $w \in L(E')$
  
  Finally, 
  
  $L_{\Gamma}(A)=\bigcup_{i\in I} \overrightarrow{L}_\Gamma(i)$ (Lemma~\ref{lem prop triv couple nfa})
  
  $= \overrightarrow{L}_\Gamma(E)$ (Construction of $A$)
  
  $=L(E)$ (proposition \textbf{P}).  
  \cqfd
\end{proof}

\begin{theorem}\label{thm aut bon lang taill}
  Let $A$ be the two-sided derived term automaton of a hairpin expression $E$ over an alphabet $\Gamma$ and let $k$ be the index of $E$. Then $L_\Gamma(A)=L(E)$. Furthermore $A$ has at most $k\times ( \frac{2 m\times (m+1) \times(m+2)}{3}-3)+n+1$ states where $m=n+h$, with $n$ the width of $E$ and $h$ its star number.
\end{theorem}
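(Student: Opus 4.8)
The plan is to observe that both assertions of the theorem are immediate consequences of results already established, so that essentially no new argument is required: the substantive work has already been carried out in Proposition~\ref{prop twosided der term aut bon lang} and Proposition~\ref{prop nbre term der fini}, and the theorem merely packages them together.

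First I would dispatch the language equality. The statement $L_\Gamma(A)=L(E)$ is literally the conclusion of Proposition~\ref{prop twosided der term aut bon lang}, in which $A$ is the two-sided derived term automaton of $E$. Hence nothing beyond citing that proposition is needed; recall that its proof proceeds by establishing, by recurrence over the length of a word $w$, the equivalence $w\in\overrightarrow{L}_\Gamma(E')\Leftrightarrow w\in L(E')$ for every state $E'$, and then specialising to the initial state $E$ by means of Lemma~\ref{lem prop triv couple nfa}.

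For the size bound, I would invoke the definition of the two-sided derived term automaton, whose state set is $Q=\{E\}\cup\overleftrightarrow{\mathcal{D}_E}$. Consequently $\mathrm{Card}(Q)\leq 1+\mathrm{Card}(\overleftrightarrow{\mathcal{D}_E})$, the inequality being valid irrespective of whether $E$ itself already occurs among its two-sided derived terms. Applying Proposition~\ref{prop nbre term der fini}, which bounds $\mathrm{Card}(\overleftrightarrow{\mathcal{D}_E})$ by $k\times(\frac{2m(m+1)(m+2)}{3}-3)+n$ with $k=\mathrm{index}(E)$, $m=n+h$, $n$ the width and $h$ the star number of $E$, yields
\[
\mathrm{Card}(Q)\leq k\times\Bigl(\frac{2m(m+1)(m+2)}{3}-3\Bigr)+n+1,
\]
which is exactly the announced bound.

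Since the theorem is thus a direct corollary of the two cited propositions, there is no genuine obstacle at this final stage; the only point deserving a word of care is that the ``$+1$'' accounts precisely for the initial state $E$ adjoined to $\overleftrightarrow{\mathcal{D}_E}$, and that the estimate remains a valid upper bound even when $E\in\overleftrightarrow{\mathcal{D}_E}$. All the real difficulty lives upstream, in Proposition~\ref{prop nbre term der fini} (the combinatorial cardinality count, resting in turn on Lemma~\ref{lem form ind calcul deriv 2side} together with Propositions~\ref{prop nbre 2side quot reg exp} and~\ref{prop nombre der hairpin}) and in Proposition~\ref{prop twosided der term aut bon lang} (the correctness of the derived term construction via the couple NFA machinery of Section~\ref{sec:2sided}).
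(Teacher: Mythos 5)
Your proposal is correct and follows essentially the same route as the paper, whose own proof is precisely ``Corollary of Proposition~\ref{prop twosided der term aut bon lang} and of Proposition~\ref{prop nbre term der fini}.'' Your additional remark that the ``$+1$'' in the bound accounts for the initial state $E$ adjoined in $Q=\{E\}\cup\overleftrightarrow{\mathcal{D}_E}$, and that the estimate remains valid even if $E\in\overleftrightarrow{\mathcal{D}_E}$, simply makes explicit the bookkeeping the paper leaves implicit.
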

\begin{proof}
  Corollary of Proposition~\ref{prop twosided der term aut bon lang} and of Proposition~\ref{prop nbre term der fini}.
  \cqfd
\end{proof}

Finally, the computation of the two-sided derived term automaton provides an alternative proof of the following theorem.

\begin{theorem}\label{thm lang haitpin line ctxt free}
  The language denoted by a hairpin expression is linear context-free.
\end{theorem}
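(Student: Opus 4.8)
The plan is to recognize that this theorem is essentially a corollary that assembles the machinery developed in Section~\ref{2sideddder} with the characterization of linear context-free languages via couple NFAs from Section~\ref{sec:2sided}. The strategy is to pass from the hairpin expression $E$ to its two-sided derived term automaton $A$, and then to invoke the fact that the $\Gamma$-language of any $\Gamma$-couple NFA is linear context-free.

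First I would take $E$ to be an arbitrary hairpin expression over $\Gamma$ and form its two-sided derived term automaton $A$. By Proposition~\ref{prop nbre term der fini} the set $\overleftrightarrow{\mathcal{D}_E}$ of two-sided derived terms is finite, so $A=(\Sigma_\Gamma,Q,I,F,\delta)$ has a finite state set $Q=\{E\}\cup\overleftrightarrow{\mathcal{D}_E}$; thus $A$ is a genuine (finite) NFA. Moreover, since every transition of $A$ is labelled by a couple in $\Sigma_\Gamma$, the automaton $A$ is by construction a $\Gamma$-couple NFA.

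Next I would use Theorem~\ref{thm aut bon lang taill} (equivalently Proposition~\ref{prop twosided der term aut bon lang}) to obtain $L(E)=L_\Gamma(A)$, thereby identifying the language denoted by the expression with the $\Gamma$-language of the couple automaton. Finally, Proposition~\ref{prop gamma lang linear} states that the $\Gamma$-language recognized by any $\Gamma$-couple NFA is linear context-free; applying it to $A$ yields that $L_\Gamma(A)$, and hence $L(E)$, is linear context-free, which is exactly the claim.

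There is no genuine obstacle here, since all the difficult work has already been established: the finiteness of the derived term set, the correctness equality $L(E)=L_\Gamma(A)$, and the equivalence between couple NFAs and linear grammars. The only point requiring a moment's care is to confirm that $A$ really is a $\Gamma$-couple NFA, that is, that every transition label lies in $\Sigma_\Gamma$ and that no spurious $(\varepsilon,\varepsilon)$ transition is produced by the two-sided derivation; this follows directly from Definition~\ref{def form deriv part}, because two-sided partial derivatives are only taken with respect to couples $(x,y)\in\Sigma_\Gamma$.
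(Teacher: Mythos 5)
Your proposal is correct and follows essentially the same route as the paper, which proves the theorem by combining Proposition~\ref{prop twosided der term aut bon lang} (giving $L(E)=L_\Gamma(A)$ for the two-sided derived term automaton $A$) with Theorem~\ref{thm context free equ recocouple}, of which you invoke the relevant direction, Proposition~\ref{prop gamma lang linear}. Your additional remarks on the finiteness of $\overleftrightarrow{\mathcal{D}_E}$ via Proposition~\ref{prop nbre term der fini} and on $A$ being a genuine $\Gamma$-couple NFA by Definition~\ref{def form deriv part} are sound and merely make explicit what the paper leaves implicit.
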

\begin{proof}
  According to Theorem~\ref{thm context free equ recocouple} and to Proposition~\ref{prop twosided der term aut bon lang}.
  \cqfd
\end{proof}

\section{The $(\mathrm{H},0)$-Completion}\label{sec:H0}

  In the literature, the case where $k=0$ is usually not considered. Nevertheless, this case is interesting since the associated derivation computation yields 
  a recognizer with a linear number of states w.r.t. the width of the expression.
  
  Let $L_1$ and $L_2$ be two languages over an alphabet $\Gamma$ and $\mathrm{H}$ be an anti-morphism over $\Gamma^*$. The $(\mathrm{H},0)$-\emph{completion} of $L_1$ and $L_2$ is the language $\mathrm{H}_0(L_1,L_2)=\{\alpha\gamma\mathrm{H}(\alpha)\mid \alpha,\gamma\in\Gamma^*\ \wedge\ (\alpha\gamma\in L_1\ \vee\ \gamma\mathrm{H}(\alpha)\in L_2)\}$. As in the general case, the $(\mathrm{H},0)$-completion can be defined as the union of two unary operators $\overleftarrow{\mathrm{H}_0}$ and $\overrightarrow{\mathrm{H}_0}$.
  
  The \emph{left} (resp. \emph{right}) $(\mathrm{H},0)$-\emph{completion} of a language $L$ over an alphabet $\Gamma$ is the language $\overleftarrow{\mathrm{H}}_0(L)=\{\alpha\gamma\mathrm{H}(\alpha)\mid \alpha,\gamma\in\Gamma^*\ \wedge\ \gamma\mathrm{H}(\alpha)\in L\}$ (resp. $\overrightarrow{\mathrm{H}}_0(L)=\{\alpha\gamma\mathrm{H}(\alpha)\mid \alpha,\gamma\in\Gamma^*\ \wedge\ \alpha\gamma\in L\}$).
  
  Let $E$ be a regular expression over $\Gamma$ and $\mathrm{H}$ be an anti-morphism over $\Gamma^*$. The \emph{left} (resp. \emph{right}) $(\mathrm{H},0)$-\emph{completion} of $E$ is the expression $\overleftarrow{\mathrm{H}}_0(E)$ (resp.  $\overrightarrow{\mathrm{H}}_0(E)$) that denotes $\overleftarrow{\mathrm{H}}_0(L(E))$ (resp. $\overrightarrow{\mathrm{H}}_0(L(E))$).

\begin{lemma}\label{lem eps h0}
  Let $\Gamma$ be an alphabet and $\mathrm{H}$ be an anti-morphism over $\Gamma^*$. Let $L$ be a language over $\Gamma$. Then the two following conditions are satisfied:
  \begin{itemize}
    \item $\varepsilon\in \overrightarrow{\mathrm{H}}_0(L) \Leftrightarrow \varepsilon\in L$,
    \item $\varepsilon\in \overleftarrow{\mathrm{H}}_0(L) \Leftrightarrow \varepsilon\in L$.
  \end{itemize}
\end{lemma}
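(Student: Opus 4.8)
The plan is to treat the two equivalences separately, and within each to argue the two directions independently, relying only on the defining equations for $\overrightarrow{\mathrm{H}}_0(L)$ and $\overleftarrow{\mathrm{H}}_0(L)$ recalled just above the statement, together with the fact that $\mathrm{H}$, being an anti-morphism, satisfies $\mathrm{H}(\varepsilon)=\varepsilon$.

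For the implication $\varepsilon\in L \Rightarrow \varepsilon\in \overrightarrow{\mathrm{H}}_0(L)$, I would exhibit an explicit witness by taking $\alpha=\gamma=\varepsilon$. Then $\alpha\gamma\mathrm{H}(\alpha)=\varepsilon\cdot\varepsilon\cdot\mathrm{H}(\varepsilon)=\varepsilon$, while the defining condition $\alpha\gamma=\varepsilon$ lies in $L$ by hypothesis; hence $\varepsilon\in\overrightarrow{\mathrm{H}}_0(L)$. The very same witness works for $\overleftarrow{\mathrm{H}}_0(L)$, where the relevant condition $\gamma\mathrm{H}(\alpha)=\varepsilon$ is again in $L$.

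For the converse $\varepsilon\in \overrightarrow{\mathrm{H}}_0(L) \Rightarrow \varepsilon\in L$, I would start from a factorization $\alpha\gamma\mathrm{H}(\alpha)=\varepsilon$ with $\alpha\gamma\in L$. Since $\Gamma^*$ is a free monoid, a product of words can be empty only if each factor is empty, so $\alpha=\gamma=\varepsilon$ and therefore $\alpha\gamma=\varepsilon\in L$. The argument for $\overleftarrow{\mathrm{H}}_0(L)$ is identical, with the membership condition $\alpha\gamma\in L$ replaced by $\gamma\mathrm{H}(\alpha)\in L$, which again reduces to $\varepsilon\in L$ once we know $\alpha=\gamma=\varepsilon$.

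There is no genuine obstacle here; the only points requiring a moment's care are the observation that $\mathrm{H}(\varepsilon)=\varepsilon$, immediate from the inductive definition of the anti-morphic extension (the clause $\mathrm{g}(\varepsilon)=\varepsilon$), and the elementary fact that emptiness of a concatenation forces emptiness of every factor, a standard property of the free monoid $\Gamma^*$.
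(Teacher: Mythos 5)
Your proof is correct and follows the same route as the paper, which simply declares the lemma ``trivially proved from the definitions of left and right $(\mathrm{H},0)$-completions''; you have merely made the trivial details explicit (the witness $\alpha=\gamma=\varepsilon$ with $\mathrm{H}(\varepsilon)=\varepsilon$, and the free-monoid fact that an empty concatenation forces every factor to be empty). The only cosmetic remark is that $\mathrm{H}(\varepsilon)=\varepsilon$ holds for \emph{any} anti-morphism directly from $\mathrm{H}(\varepsilon)=\mathrm{H}(\varepsilon)\mathrm{H}(\varepsilon)$, so you need not invoke the inductive extension clause.
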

\begin{proof}
  Trivially proved from the definitions of left and right $(\mathrm{H},0)$-completions.
  \cqfd
\end{proof}

We now consider the construction of a recognizer for the $(\mathrm{H},0)$-completion of a regular expression $E$.
On the opposite of the general case, it is not necessary to consider the whole computation of partial derivatives. We show that it is sufficient to consider one-sided partial derivatives of regular expression.

\begin{definition}
  Let $\Gamma$ be an alphabet and $\mathrm{H}$ be an anti-morphism over $\Gamma^*$. Let $F$ be a regular expression over $\Gamma$. Let $E= \overrightarrow{\mathrm{H}}_0(F)$ (resp. $E= \overleftarrow{\mathrm{H}}_0(F)$). The \emph{effective subset associated with} $E$ is the set defined by:
  
  \centerline{
    $ \mathcal{E}=\overrightarrow{\mathrm{H}}_0(\overleftarrow{\mathcal{D}_F})\cup \overleftarrow{\mathcal{D}_F}$,
  }
  
  \centerline{
    (resp. $ \mathcal{E}=\overleftarrow{\mathrm{H}}_0(\overrightarrow{\mathcal{D}_F})\cup \overrightarrow{\mathcal{D}_F}$).
  }
\end{definition}

\begin{definition}
  Let $\Gamma$ be an alphabet and $\mathrm{H}$ be an anti-morphism over $\Gamma^*$. Let $F$ be a regular expression over $\Gamma$. Let $E= \overrightarrow{\mathrm{H}}_0(F)$ (resp. $E= \overleftarrow{\mathrm{H}}_0(F)$). Let $\mathcal{E}$ be the effective subset associated with $E$. Let $A=(\Sigma_\Gamma,Q,I,F,\delta)$ be the couple NFA defined by: $Q=\{E\}\cup \mathcal{E}$, $I=\{E\}$, $F=\{E'\in Q\mid \varepsilon\in L(E')\}$, $\forall (x,y) \in\Sigma_\Gamma, \forall E' \in Q,$

  $\delta(E',(x,y))=
    \left\{
      \begin{array}{l@{\ }l}
        \overrightarrow{\mathrm{H}}_0(\frac{\partial}{\partial_{x}}(E'')) & \text{ if } y=\mathrm{H}(x)\ \wedge\ E'=\overrightarrow{\mathrm{H}}_0(E''),\\
        \frac{\partial}{\partial_{x}}(E'') & \text{ if } y=\varepsilon\ \wedge\ E'=\overrightarrow{\mathrm{H}}_0(E''),\\
        \frac{\partial}{\partial_{x}}(E') & \text{ if } y=\varepsilon\ \wedge\ E'\text{ is a regular expression,}\\
        \emptyset & \text{ otherwise,}\\
      \end{array}
    \right.
  $

  resp. $\delta(E',(x,y))=
    \left\{
      \begin{array}{l@{\ }l}
        \overleftarrow{\mathrm{H}}_0((E'')\frac{\partial}{\partial_{y}}) & \text{ if } y=\mathrm{H}(x)\ \wedge\ E'=\overleftarrow{\mathrm{H}}_0(E''),\\
        (E'')\frac{\partial}{\partial_{y}} & \text{ if } x=\varepsilon\ \wedge\ E'=\overleftarrow{\mathrm{H}}_0(E''),\\
        (E')\frac{\partial}{\partial_{y}} & \text{ if } x=\varepsilon\ \wedge\ E'\text{ is a regular expression,}\\
        \emptyset & \text{ otherwise.}\\
      \end{array}
    \right.
  $
  
  The automaton $A$ is said to be the \emph{effective automaton} of $E$.
\end{definition}

\begin{theorem}
  Let $F$ be a regular expression over an alphabet $\Gamma$. Let $A$ be the effective automaton of the expression $E= \overrightarrow{\mathrm{H}}_0(F)$ (resp. $E= \overleftarrow{\mathrm{H}}_0(F)$). Then $L_\Gamma(A)=L(E)$. Furthermore $A$ has at most $2n+1$ states where $n$ is the width of $E$.
\end{theorem}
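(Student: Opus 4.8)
The plan is to mimic the correctness argument of Proposition~\ref{prop twosided der term aut bon lang}, proving by recurrence on $|w|$ that every state $E'$ of the effective automaton satisfies $w\in\overrightarrow{L}_\Gamma(E')\Leftrightarrow w\in L(E')$; the conclusion $L_\Gamma(A)=L(E)$ then follows from Lemma~\ref{lem prop triv couple nfa} and $I=\{E\}$, exactly as before. Treating the right case $E=\overrightarrow{\mathrm{H}}_0(F)$, the states of $A$ are of two kinds: the \emph{hairpin states} $\overrightarrow{\mathrm{H}}_0(E'')$ with $E''\in\{F\}\cup\overleftarrow{\mathcal{D}_F}$, and the \emph{regular states} lying in $\overleftarrow{\mathcal{D}_F}$. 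First I would record the elementary decomposition, valid for every language $L$, namely $\overrightarrow{\mathrm{H}}_0(L)=L\cup\bigcup_{x\in\Gamma}\{x\}\cdot\overrightarrow{\mathrm{H}}_0(x^{-1}(L))\cdot\{\mathrm{H}(x)\}$, obtained by splitting the defining word $\alpha\gamma\mathrm{H}(\alpha)$ according to whether $\alpha=\varepsilon$ (contributing $L$) or $\alpha=x\alpha'$ (contributing the nested term, using that $\mathrm{H}$ is an anti-morphism). Together with Lemma~\ref{lem eps h0}, this is precisely the identity the transition function is designed to track.

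For the inductive step at a hairpin state $\overrightarrow{\mathrm{H}}_0(E'')$, I would expand $\overrightarrow{L}_\Gamma(\overrightarrow{\mathrm{H}}_0(E''))$ with Lemma~\ref{lem calcul lang droit gamma}. Only the two transition families $(x,\mathrm{H}(x))\mapsto\overrightarrow{\mathrm{H}}_0(\frac{\partial}{\partial_x}(E''))$ and $(x,\varepsilon)\mapsto\frac{\partial}{\partial_x}(E'')$ contribute, so the set splits into $\{\varepsilon\mid\varepsilon\in L(E'')\}$, a part $\bigcup_{x,E'''}\{x\}\cdot\overrightarrow{L}_\Gamma(\overrightarrow{\mathrm{H}}_0(E'''))\cdot\{\mathrm{H}(x)\}$, and a part $\bigcup_{x,E'''}\{x\}\cdot\overrightarrow{L}_\Gamma(E''')$, where $E'''$ ranges over $\frac{\partial}{\partial_x}(E'')$. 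Applying the recurrence hypothesis (the inner words being strictly shorter) turns $\overrightarrow{L}_\Gamma(\overrightarrow{\mathrm{H}}_0(E'''))$ into $\overrightarrow{\mathrm{H}}_0(L(E'''))$ and $\overrightarrow{L}_\Gamma(E''')$ into $L(E''')$; using the Antimirov identity $x^{-1}(L(E''))=\bigcup_{E'''\in\frac{\partial}{\partial_x}(E'')}L(E''')$ and the distributivity of $\overrightarrow{\mathrm{H}}_0$ over unions, the first part becomes $\bigcup_x\{x\}\cdot\overrightarrow{\mathrm{H}}_0(x^{-1}(L(E'')))\cdot\{\mathrm{H}(x)\}$, while $\{\varepsilon\mid\varepsilon\in L(E'')\}$ together with the second part reassemble into $L(E'')$ via the standard left decomposition of a regular language. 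By the decomposition recorded above, the total is exactly $\overrightarrow{\mathrm{H}}_0(L(E''))=L(\overrightarrow{\mathrm{H}}_0(E''))$.

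The regular states are the delicate point, and the one worth flagging: since $\delta$ assigns them only $(x,\varepsilon)$ transitions, one cannot invoke two-sided residuals here (for $y\in\Gamma$ the language $(x,y)^{-1}(L(E'))$ is in general nonempty although the transition is empty). Instead I would observe that, for a regular state $E'\in\overleftarrow{\mathcal{D}_F}$, Lemma~\ref{lem calcul lang droit gamma} restricted to its $(x,\varepsilon)$ transitions reduces precisely to the one-sided decomposition $\overrightarrow{L}_\Gamma(E')=\{\varepsilon\mid\varepsilon\in L(E')\}\cup\bigcup_{x}\{x\}\cdot\bigcup_{E''\in\frac{\partial}{\partial_x}(E')}\overrightarrow{L}_\Gamma(E'')$; the recurrence hypothesis replaces each $\overrightarrow{L}_\Gamma(E'')$ by $L(E'')$, and $x^{-1}(L(E'))=\bigcup_{E''\in\frac{\partial}{\partial_x}(E')}L(E'')$ yields $\overrightarrow{L}_\Gamma(E')=L(E')$. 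Intuitively this is sound because, once a run leaves the hairpin states through a $(x,\varepsilon)$ transition, the suffix $\mathrm{H}(\alpha)$ has already been produced, so only a left residual of $F$ remains to be recognized. The base case $w=\varepsilon$ is immediate from the definition of $F$ and Lemma~\ref{lem eps h0}, and the left case $E=\overleftarrow{\mathrm{H}}_0(F)$ is entirely symmetric, exchanging left for right partial derivatives and residuals.

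Finally, for the size bound, the width of $E$ equals the width $n$ of $F$ (since $\overrightarrow{\mathrm{H}}_0$ introduces no symbol), and by Proposition~\ref{prop nbre 2side quot reg exp} one has $\mathrm{Card}(\overleftarrow{\mathcal{D}_F})\leq n$. Hence $\mathrm{Card}(\mathcal{E})=\mathrm{Card}(\overrightarrow{\mathrm{H}}_0(\overleftarrow{\mathcal{D}_F})\cup\overleftarrow{\mathcal{D}_F})\leq 2\,\mathrm{Card}(\overleftarrow{\mathcal{D}_F})\leq 2n$, and adding the single initial state $E$ gives $\mathrm{Card}(Q)\leq 2n+1$ (in the left case one uses $\mathrm{Card}(\overrightarrow{\mathcal{D}_F})\leq n$ instead). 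The main obstacle is therefore conceptual rather than computational: justifying the asymmetric treatment of the regular states, for which the couple-NFA machinery must be read through Antimirov's one-sided decomposition rather than through two-sided residuals.
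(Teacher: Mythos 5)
Your proposal is correct and follows essentially the same route as the paper's own proof: the same recurrence on $|w|$ establishing $w\in\overrightarrow{L}_\Gamma(E')\Leftrightarrow w\in L(E')$ for each state, treating regular states via the one-sided Antimirov decomposition along the $(x,\varepsilon)$-transitions and hairpin states via the split of $\alpha$ into $\varepsilon$ versus $x\alpha'$ (which the paper performs inline in its chain of equivalences, where you package it as the displayed identity $\overrightarrow{\mathrm{H}}_0(L)=L\cup\bigcup_{x\in\Gamma}\{x\}\cdot\overrightarrow{\mathrm{H}}_0(x^{-1}(L))\cdot\{\mathrm{H}(x)\}$), followed by the same conclusion from Lemma~\ref{lem prop triv couple nfa} with $I=\{E\}$ and the same $2n+1$ state count from $\mathrm{Card}(\overleftarrow{\mathcal{D}_F})\leq n$. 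Your explicit justification of the asymmetric handling of regular states is a useful clarification of a point the paper leaves implicit, but it is a presentational refinement rather than a different argument.
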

\begin{proof}
  Let us set $A=(\Sigma_\Gamma,Q,I,F,\delta)$.
  
  \textbf{(I)} Let us show now that $L_\Gamma(A)=L(E)$. 
  
  \textbf{(a)} Let us suppose that $E= \overrightarrow{\mathrm{H}}_0(F)$. Let $w$ be a word in $\Gamma^*$. Let us show by recurrence over the length of $w$ that for any state $E'$ in $Q$,  $w\in L(E') \Leftrightarrow$ $w\in \overrightarrow{L}_\Gamma(E')$.
  
  \textbf{(1)} If $w=\varepsilon$, $w\in L(E') $ $\Leftrightarrow$ $E'\in F$ $\Leftrightarrow$ $w\in \overrightarrow{L}_\Gamma(E')$.
  
  \textbf{(2)} Let $w$ be a word different from $\varepsilon$. 
  
  \textbf{(i)} If $E'$ is a regular expression,  $a^{-1}(L(E'))= \bigcup_{ E''\in\frac{\partial}{\partial_a}(E')} L(E'')$. Hence since there exists $a$ in $\Gamma$ and $w'$ in $\Gamma^*$ such that $w=aw'$, it holds:
  
   $aw'\in L(E')$ $\Leftrightarrow$ $w'\in a^{-1}(L(E'))$ $\Leftrightarrow$ $w'\in \bigcup_{ E''\in\frac{\partial}{\partial_a}(E')} L(E'')$
   
   $\Leftrightarrow$ $w'\in \bigcup_{ E''\in\frac{\partial}{\partial_a}(E')} \overrightarrow{L}_\Gamma(E'')$
   
   $\Leftrightarrow$ $w'\in \bigcup_{ E''\in\delta(E',(a,\varepsilon))} \overrightarrow{L}_\Gamma(E'')$ (Recurrence Hypothesis)
   
   $\Leftrightarrow$ $aw'\in \overrightarrow{L}_\Gamma(E')$.
  
   \textbf{(ii)} If $E'=\overleftarrow{\mathrm{H}}_0(E'')$ then:
  
   $w\in L(\overleftarrow{\mathrm{H}}_0(E''))$ $\Leftrightarrow$ $\exists \alpha,\gamma\in \Gamma^*, (w=\alpha\gamma\mathrm{H}(\alpha)\ \wedge\ \alpha\gamma\in L(E''))$ 
   
   $\Leftrightarrow$ $\exists a\in\Gamma,\gamma\in \Gamma^*,\alpha'\in \Gamma^*, ((w=\gamma\ \wedge\ \gamma\in L(E''))\ \vee\ (w=a\alpha'\gamma\mathrm{H}(\alpha')\mathrm{H}(a)\ \wedge\ a\alpha'\gamma\in L(E''))$
   
   $\Leftrightarrow$ $\exists a\in\Gamma,\gamma\in \Gamma^*,\alpha'\in \Gamma^*,w'\in \Gamma^*, ((w=aw'\ \wedge\ w'\in a^{-1}(L(E'')))\ \vee\ (w=a\alpha'\gamma\mathrm{H}(\alpha')\mathrm{H}(a)\ \wedge\ \alpha'\gamma\in a^{-1}(L(E''))))$
   
   $\Leftrightarrow$ $\exists a\in\Gamma,\gamma\in \Gamma^*,\alpha'\in \Gamma^*,w'\in \Gamma^*, ((w=aw'\ \wedge\ w'\in \bigcup_{ E''\in\frac{\partial}{\partial_a}(E')} L(E''))\ \vee\ (w=a\alpha'\gamma\mathrm{H}(\alpha')\mathrm{H}(a)\ \wedge\ \alpha'\gamma\in \bigcup_{ E''\in\frac{\partial}{\partial_a}(E')} L(E'')))$ 
   
   $\Leftrightarrow$ $\exists a\in\Gamma,\gamma\in \Gamma^*,\alpha'\in \Gamma^*,w'\in \Gamma^*, ((w=aw'\ \wedge\ w'\in \bigcup_{ E''\in\frac{\partial}{\partial_a}(E')} \overrightarrow{L}_\Gamma(E''))\ \vee\ (w=a\alpha'\gamma\mathrm{H}(\alpha')\mathrm{H}(a)\ \wedge\ \alpha'\gamma\in \bigcup_{ E''\in\frac{\partial}{\partial_a}(E')} \overrightarrow{L}_\Gamma(E'')))$ (Recurrence Hypothesis)
   
   $\Leftrightarrow$ $\exists a\in\Gamma,\gamma\in \Gamma^*,\alpha'\in \Gamma^*,w'\in \Gamma^*, ((w=aw'\ \wedge\ w'\in \bigcup_{ E''\in\delta(E',(a,\varepsilon))} \overrightarrow{L}_\Gamma(E''))\ \vee\ (w=a\alpha'\gamma\mathrm{H}(\alpha')\mathrm{H}(a)\ \wedge\ \alpha'\gamma\in \bigcup_{ E''\in\delta(E',(a,\mathrm{H}(a)))} \overrightarrow{L}_\Gamma(E'')))$  
   
   $\Leftrightarrow$ $w\in \overrightarrow{L}_\Gamma(E')$.
   
   Finally since $L(A)= \overrightarrow{L}_\Gamma(E)$ and since $L(E)= \overrightarrow{L}_\Gamma(E)$, then $L(A)=L(E)$.
   
   \textbf{(b)} The case where $E= \overleftarrow{\mathrm{H}}_0(F)$ is based on the same reasoning.   

  \textbf{(II)} Let $\mathcal{E} =\overrightarrow{\mathrm{H}}_0(\overleftarrow{\mathcal{D}_F})\cup \overleftarrow{\mathcal{D}_F}$ be the effective subset associated with $E$ (resp. $\mathcal{E} =\overleftarrow{\mathrm{H}}_0(\overrightarrow{\mathcal{D}_F})\cup \overrightarrow{\mathcal{D}_F}$). Since $\mathrm{Card}(\overleftarrow{\mathcal{D}_F})\leq n$ (resp. $\mathrm{Card}(\overrightarrow{\mathcal{D}_F})\leq n$), the number of states of $A$ is at most $2n$. Finally, since $Q=\mathcal{E}\cup\{E\}$, it holds that $A$ has at most $2n+1$ states.
  \cqfd
\end{proof}

\begin{example}
  Let $\mathrm{H}$ be the anti-morphism defined in Example~\ref{ex calcul deriv term}. Let $E=\overrightarrow{\mathrm{H}_0}(a^*bc)$. Notice that $\overleftarrow{\mathcal{D}}_{a^*bc}=\{a^*bc,c,\varepsilon\}$. Hence the effective subset associated with $E$ is the set $\{\overrightarrow{\mathrm{H}_0}(a^*bc),\overrightarrow{\mathrm{H}_0}(c),\overrightarrow{\mathrm{H}_0}(\varepsilon),a^*bc,c,\varepsilon\}$.
  
   The effective automaton $A$ of $E$ is given Figure~\ref{fig ex h0 aut eff}.
  
   It can be checked that $L(A)=\{a^nbc\mid n\in\mathbb{N}\}\cup\{a^nbca^n\mid n\in\mathbb{N}\}\cup\{a^nbcca^n\mid n\in\mathbb{N}\}\cup\{a^nbcbca^n\mid n\in\mathbb{N}\}$ that is exactly $L(E)$ (see Table~\ref{table ex alpha beta}).
\end{example}

\begin{table}[H]
  \centerline{
    \begin{tabular}{c@{\ }|@{\ }c@{\ }|@{\ }c}
      $\alpha$ & $\gamma$ & $\mathrm{H}(\alpha)$\\
      \hline
      $\varepsilon$ & $a^nbc$ & $\varepsilon$\\
      $a^n$ & $bc$ & $a^n$\\
      $a^nb$ & $c$ & $ca^n$\\
      $a^nbc$ & $\varepsilon$ & $bca^n$\\
    \end{tabular}   
  }
    \caption{The Language $L(E)$}
    \label{table ex alpha beta} 
\end{table}

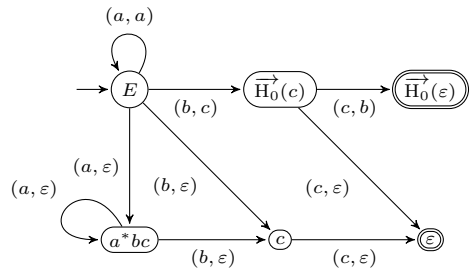
\begin{figure}[H]
  \centerline{ 
    \begin{tikzpicture}[node distance=2cm,bend angle=30]   
    \node[initial,state] (E) {$E$}; 
    \node[state,rounded rectangle,right of=E] (h0c) {$\overrightarrow{\mathrm{H}_0}(c)$}; 
    \node[state,accepting,rounded rectangle,right of=h0c] (h0eps) {$\overrightarrow{\mathrm{H}_0}(\varepsilon)$}; 
    \node[state,rounded rectangle,below of=E] (abc) {$a^*bc$};
    \node[state,rounded rectangle,right of=abc] (c) {$c$};
    \node[state,accepting,rounded rectangle, right of=c] (eps) {$\varepsilon$};    
    \path[->]
      (E)   edge [swap,in=120,out=60,loop] node {$(a,a)$} ()
      (abc)   edge [swap,in=180,out=120,loop] node {$(a,\varepsilon)$} ()
      (E)   edge [swap] node {$(b,c)$} (h0c)
      (h1c)   edge [swap] node {$(c,b)$} (h0eps)
      (E)   edge [swap] node {$(a,\varepsilon)$} (abc)
      (E)   edge [swap] node {$(b,\varepsilon)$} (c)
      (abc)   edge [swap] node {$(b,\varepsilon)$} (c)
      (c)   edge [swap] node {$(c,\varepsilon)$} (eps)
      (h0c)   edge [swap] node {$(c,\varepsilon)$} (eps)
      ;	            
    \end{tikzpicture}
  }  
  \caption{The Effective Automaton of the Expression $E$}
  \label{fig ex h0 aut eff}
\end{figure}

\section{Conclusion}

This paper provides an alternative proof of the fact that hairpin completions of regular languages are linear context-free.
This proof is obtained by considering the family of regular expressions extended to hairpin operators 
and by computing their partial derivatives, a technique that has already been applied
to regular expressions extended to boolean operators~\cite{CCM11b},
to multi-tilde-bar operators~\cite{CCM12c}
and to approximate operators~\cite{CJM12}.
Moreover it is a constructive proof since it is based on the computation of a polynomial size recognizer for hairpin completions of regular languages.
We also proved that it is possible to compute a linear size recognizer for $(H,0)$-completions of regular languages.

\bibliography{biblio}

\end{document}